\documentclass[11pt,letterpaper]{article}

\usepackage[utf8]{inputenc}
\usepackage[T1]{fontenc}
\usepackage{microtype}

\usepackage[margin=1in]{geometry}

\usepackage[small]{caption}
\usepackage{subcaption}

\usepackage{nicefrac}
\usepackage{amsmath}
\usepackage{amsfonts}
\usepackage{amssymb}
\usepackage{amsthm}
\usepackage{amstext}
\usepackage{thmtools}

\usepackage{csquotes}
\usepackage{xcolor}
\usepackage{graphicx}

\usepackage{mathtools}
\usepackage{xspace}

\usepackage[ruled]{algorithm2e}
\usepackage{paralist}
\usepackage{bm}
\usepackage{dsfont}
\usepackage{thm-restate}
\usepackage{enumitem}
\usepackage{paralist}
\usepackage{comment}
\usepackage{soul}

\definecolor{job1}{HTML}{56b3e9}
\definecolor{job2}{HTML}{e69f00}
\definecolor{job3}{HTML}{009e74}
\definecolor{job4}{HTML}{cc79a7}
\definecolor{job5}{HTML}{d55e00}
\definecolor{job6}{HTML}{0071b2}

\usepackage{tikz}
\usetikzlibrary{arrows}
\usetikzlibrary{fit}
\usetikzlibrary{calc,shapes.geometric, backgrounds}
\usetikzlibrary{decorations.pathreplacing,calligraphy}
\usetikzlibrary{patterns,snakes}
\tikzset{vertex/.style={circle, draw, fill=black, black, inner sep=0pt, minimum width=7pt},
}

\usepackage{url}
\usepackage[colorlinks]{hyperref}
\hypersetup{citecolor=green!60!black,linkcolor=blue!90!black,urlcolor=blue!90!black,breaklinks}
\usepackage[capitalize,noabbrev,nameinlink]{cleveref}

\usepackage[skip=5pt plus 3pt minus 1pt,parfill=0pt]{parskip}
\makeatletter
\def\thm@space@setup{%
  \thm@preskip=\parskip \thm@postskip=0pt
}
\makeatother

\usepackage[
 backend=biber,
 style=alphabetic,
 citetracker,
 hyperref=auto,
 maxcitenames=5, 
 sortcites,      
 sorting=nyt,
 maxbibnames=12, 
 date=year,      
 isbn=false,     
 url=false,      
 doi=false,      
 eprint=false,   
]{biblatex}
\newbibmacro{string+doiurlisbn}[1]{%
  \iffieldundef{doi}{%
    \iffieldundef{url}{%
      #1
    }{%
      \href{\thefield{url}}{#1}%
    }%
  }{%
    \href{http://dx.doi.org/\thefield{doi}}{#1}%
  }%
}

\DeclareFieldFormat%
[article,inbook,incollection,inproceedings,patent,thesis,unpublished]
  {title}{\usebibmacro{string+doiurlisbn}{#1}}

\newtheorem{theorem}{Theorem}

\newtheorem{lemma}[theorem]{Lemma}
\newtheorem{claim}[theorem]{Claim}
\newtheorem{fact}[theorem]{Fact}
\newtheorem{observation}[theorem]{Observation}
\newtheorem{proposition}[theorem]{Proposition}
\theoremstyle{definition}
\newtheorem{definition}{Definition}

\DeclarePairedDelimiter\floor{\lfloor}{\rfloor}

\newcommand{\cO}{\mathcal{O}}

\newcommand{\EX}{\mathbb{E}}
\newcommand{\VAR}{\mathbb{V}}
\newcommand{\ind}{\ensuremath{\mathds{1}}}

\newcommand{\hp}{\bm\hat{p}}

\newcommand{\cE}{\mathcal{E}}

\newcommand{\pr}{\mathrm{Pr}}
\newcommand{\ops}{m}    
\newcommand{\eps}{\varepsilon}

\newcommand{\opt}{\mathrm{OPT}}
\newcommand{\alg}{\mathrm{ALG}}
\newcommand{\vol}{\mathrm{vol}}
\newcommand{\OPT}{\mathrm{OPT}}

\newcommand{\chunklp}{\text{Chunk-LP}}
\newcommand{\chunkdp}{\text{Chunk-DP}}

\newcommand{\ofts}{operation flow time scheduling problem}

\newcommand{\reduced}{\mathcal{I}_{\mathrm{red}}}
\newcommand{\ts}{\tau}
\newcommand{\afull}{J^{\mathrm{full}}}
\newcommand{\apart}{J^{\mathrm{part}}}
\newcommand{\bapart}{\bar{J}^{\mathrm{part}}}

\newcommand{\bA}{\bar{J}}
\newcommand{\bAfull}{\bar{J}^{\mathrm{full}}}

\newcommand{\frontjob}{\mathsf{front}}
\newcommand{\topjob}{\mathsf{top}}
\newcommand{\curclass}{\mathsf{class}}

\newcommand{\unit}{p}

\newcommand\job{} 
\def\job[#1](#2)(#3){%
  \fill[#1,draw=black,line width=0.6] (#2) rectangle ++(#3);
}

\newcommand\jobT{} 
\def\jobT[#1](#2,#3)(#4,#5)(#6){%
  \fill[#1,draw=black,line width=0.6] (#2,#3) rectangle ++(#4,#5);
  \node[text=black] at (#2+#4/2.0,#3+#5/2.0) {#6};
}

\usepackage[colorinlistoftodos,prependcaption,textsize=scriptsize]{todonotes}

\title{\bfseries Online Flow Time Minimization with \\ Gradually Revealed Jobs}

\author{Alexander Lindermayr\thanks{Technische Universität Berlin, Germany. Part of the work was done while the author was a research fellow at the Simons Institute for the Theory of Computing for the program on ``Algorithmic Foundations for Emerging Computing Technologies''.}
    \and 
    Guido Sch\"afer\thanks{Centrum Wiskunde \& Informatica (CWI) and University of Amsterdam, The Netherlands.}
    \and
    Jens Schl\"oter\thanks{Centrum Wiskunde \& Informatica (CWI), The Netherlands. Supported by the Netherlands Organisation for Scientific Research (NWO) through project OCENW.GROOT.2019.015 ``Optimization for and with Machine Learning (OPTIMAL)''.}
    \and
    Leen Stougie\thanks{Centrum Wiskunde \& Informatica (CWI) and Vrije Universiteit Amsterdam, The Netherlands. Supported in part by the Netherlands Organisation for Scientific Research (NWO) through project OCENW.GROOT.2019.015 ``Optimization for and with Machine Learning (OPTIMAL)'' and Gravitation-project NETWORKS-024.002.003.}
}

\date{}

\begin{document}

\maketitle

\begin{abstract}
We consider the problem of online preemptive scheduling on a 
single machine to minimize the total flow time. In clairvoyant scheduling, 
where job processing times are revealed upon arrival, 
the Shortest Remaining Processing Time (SRPT) algorithm is optimal. 
In practice, however, exact processing times are often unknown. 
At the opposite extreme, non-clairvoyant scheduling, in which processing times are 
revealed only upon completion, suffers from strong lower bounds on the competitive ratio. 
This motivates the study of intermediate information models.
We introduce a new model in which processing times are revealed gradually during execution. Each job consists of a sequence of operations, and the processing time of an operation becomes known only after the preceding one completes. This models many scheduling scenarios that arise in computing systems.

Our main result is a deterministic $O(m^2)$-competitive algorithm, where $m$ is the maximum number of operations per job. 
More specifically, we prove a refined competitive ratio in $O(m_1 \cdot m_2)$, where $m_1$ and $m_2$ are instance-dependent parameters describing the operation size structure.
Our algorithm and analysis build on recent advancements in robust flow time minimization (SODA '26), where jobs arrive with estimated sizes. 
However, in our setting we have no bounded estimate on a job's processing time. Thus, we design a highly adaptive algorithm
that gradually explores a job's operations while working on them, and groups them into virtual chunks whose size can be well-estimated. This is 
a crucial ingredient of our result and requires a much more careful analysis compared to the robust setting.
We also provide lower bounds showing that our bounds are essentially best possible.
For the special case of scheduling with uniform obligatory tests, we show that SRPT at the operation level is $2$-competitive, which is best possible.
\end{abstract}

\thispagestyle{empty}

\newpage

\setcounter{page}{1}

\section{Introduction}

We study the fundamental problem 
of online preemptive flow time minimization
on a single machine: $n$ jobs arrive online over time and must be scheduled with preemption.
The objective is to minimize the total flow time, that is, the sum over jobs of their time in the system.
This is a central quality-of-service metric with applications, e.g., in networking, cloud computing, and operating systems.
A classical result for this problem is that SRPT (Shortest Remaining Processing Time), which always runs the job with the least remaining work, is optimal in this setting~\cite{Schrage68}.

In practice, however, we cannot 
generally
assume that a scheduler has access to the exact processing times $p_j$ of jobs, and therefore cannot 
implement SRPT directly.
Already in the 1990s, this led 
to
the study of \emph{non-clairvoyant} algorithms, which only learn a job's processing time once it has completed. 
From a theoretical perspective, this model lies at the other extreme of the spectrum 
and strong lower bounds are known:
no non-clairvoyant algorithm can be $o(\log n)$-competitive~\cite{MotwaniPT94}. 

A major research direction in online scheduling has therefore been to study intermediate models that bridge the gap between knowing everything (clairvoyance) and knowing nothing (non-clairvoyance), by providing information about a job before it completes~\cite{BenderMR04,BecchettiLMP04,AzarLT21,YingchareonthawornchaiT17}.
Only recently, two prominent models were introduced to achieve this goal: 
\begin{itemize}
	\item \textbf{Predictions.} Each job $j$ arrives with a predicted processing time $\hp_j$~\cite{AzarLT21}; this is also called \emph{robust} flow time scheduling.
    Recently,
	Gupta et al.~\cite{GuptaKPW26} gave a best-possible $O(\mu)$-competitive algorithm, 
    where $\mu := \mu_1 \cdot \mu_2$ and $\mu_1 = \max_j \hp_j / p_j$ and $\mu_2 = \max_j p_j / \hp_j$ 
    are the maximum overestimation and underestimation factors.
	A natural critique of this model is that the prediction must be available at arrival, 
    when job and scheduler have not yet interacted.
    Therefore,
	it is unclear how such predictions could be obtained in many real-world applications.
	\item \textbf{$\eps$-Clairvoyance.} The downside of predictions is
	mitigated in the $\eps$-clairvoyant model, where a scheduler
	learns about a job's processing time once a $(1-\eps)$-fraction of its processing time is done (hence an $\eps$-fraction remains)~\cite{YingchareonthawornchaiT17}. Gupta et al.~\cite{GuptaKLSY25} presented a best-possible $\lceil 1/\eps \rceil$-competitive algorithm for all $\eps \in (0,1]$.
	While this model does allow the algorithm to learn 
    about jobs through interaction, 
    the requirement of learning the processing time of a job at a single specific point 
    is a strong assumption that is hard to justify in practice.
\end{itemize}
To overcome these drawbacks, we introduce a new model in which job information is revealed gradually through execution and that has strong connections to both theory and practice. 


From a practical perspective, we move beyond the black-box abstraction and examine what jobs 
in fundamental computing applications are: programs represented by 
control-flow graphs~\cite{AhoSU86, Allen70, AllenC76}. In system design and architecture, 
it is well established that programs exhibit distinct phase behavior, 
executing sequences of basic blocks or operations over time~\cite{SherwoodPHC02}.
We model this structure formally using a decision tree for each job. 
We assume that a scheduler can determine the execution time of any deterministic operation \emph{on a given input}~\cite{King76}. 
Initially, it knows the processing time of the root operation (or phase). When the root 
completes, the program branches based on its result and reveals the next operation, along with its processing time.
This repeats until the job 
completes. For deterministic programs, an offline optimum knows 
the entire sequence of operations and processing times in advance. See \Cref{fig:decision-tree-operations} for an illustrative example.

Formally, we consider the following
\emph{\ofts}.
A job $j$ is composed of 
$\ops$ \emph{operations} $j_1,\ldots,j_\ops$ with operation processing times
$p_{j_1},\ldots,p_{j_\ops}$ and job processing time $p_j = \sum_{\ell=1}^\ops p_{j_\ell}$. 
When job $j$ arrives at time $r_j$, we say that $j$ and its first operation $j_1$ become active and known. Once the first operation is completed, the next operation $j_2$ becomes active and known to the scheduler, and so on, until all operations are completed and the job completes.
We give a precise definition later.
It is helpful to think of operations not as single instructions, but as small, linear subprograms 
that need to be completed
before the overall program branches.

\begin{figure}[t]
    \centering
    \begin{tikzpicture}[
        grow=right,
        level 1/.style={sibling distance=1.8cm, level distance=3cm},
        level 2/.style={sibling distance=0.8cm, level distance=3cm},
        level 3/.style={sibling distance=1.2cm, level distance=3cm},
        level 4/.style={sibling distance=0.8cm, level distance=3cm},
        op/.style={draw, rectangle, minimum height=0.6cm, fill=white, font=\small, inner sep=2pt, line width=0.8pt},
        highlight/.style={fill=job1, draw=black, line width=1.2pt},
        path edge/.style={job2, ultra thick, ->, >=stealth},
        normal edge/.style={black, thick, ->, >=stealth},
        uncertain/.style={dashed, gray, thick, ->, >=stealth},
        ghost/.style={draw, rectangle, minimum height=0.6cm, dashed, gray, font=\small, inner sep=2pt, line width=0.8pt},
        scale=0.95
    ]
        \node [op, minimum width=1.5cm, highlight] (root) {$j_1$}
            child {node [ghost, minimum width=1.6cm] {?}
                child {node [ghost, minimum width=1cm] {?}
                    child {node [ghost, minimum width=3cm] {?}
                        edge from parent [normal edge]
                    }
                    edge from parent [normal edge]
                }
                edge from parent [normal edge]
            }
            child {node [op, minimum width=2.5cm, highlight] (c2) {$j_2$}
                child {node [ghost, minimum width=1.4cm] {?}
                    edge from parent [normal edge]
                }
                child {node [op, minimum width=1.8cm, highlight] (c3) {$j_3$}
                    child {node [ghost, minimum width=1.2cm] {?}
                        edge from parent [normal edge]
                    }
                    child {node [op, minimum width=2.2cm, highlight] (c4) {$j_4$}
                        child {node (u3) [ghost, minimum width=1.0cm] {?} edge from parent [uncertain]}
                        child {node (u1) [ghost, minimum width=1.5cm] {?} edge from parent [uncertain]}
                        child {node (u2) [ghost, minimum width=0.8cm] {?} edge from parent [uncertain]}
                        edge from parent [path edge] node[above, font=\small, black] {$x_3$}
                    }
                    edge from parent [path edge] node[above, font=\small, black] {$x_2$}
                }
                edge from parent [path edge] node[above, font=\small, black] {$x_1$}
            }
            child {node [ghost, minimum width=1.2cm] (c1) {?}
                child {node [ghost, minimum width=1.0cm] {?} edge from parent [normal edge]}
                child {node [ghost, minimum width=0.8cm] {?} edge from parent [normal edge]}
                edge from parent [normal edge]
            };

        \draw[path edge] ($(root.west)-(2,0)$) -- (root.west) node[midway, above, font=\small, black] {input $x_0$};

    \end{tikzpicture}
    \caption{Illustration of the decision tree model. Operations are rectangles with widths proportional to processing time. Edges are labeled with intermediate results $x_i$ that determine the length of the next operation. The blue operations are realized operations. Currently, the job is at operation $j_4$. Future operations are unknown until the preceding intermediate result is computed.}
    \label{fig:decision-tree-operations}
\end{figure}
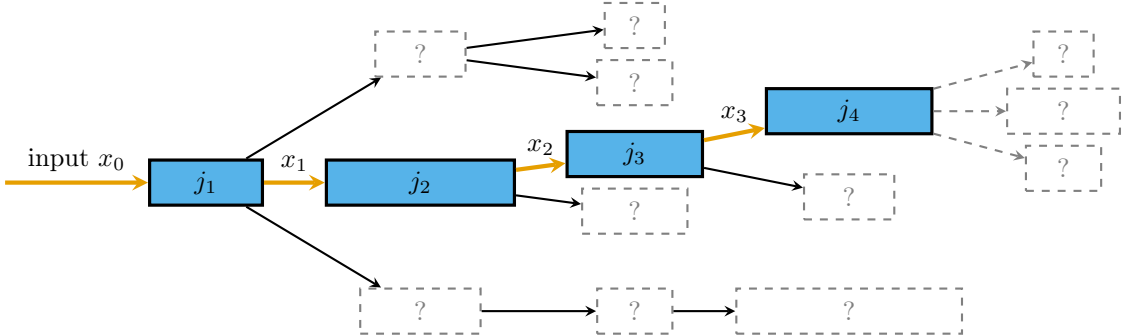

From a theoretical perspective, special cases of 
our \ofts
have deep connections to existing models and results in online flow time scheduling literature:
\begin{itemize}
\item
If all operations have lengths in $\{0,1\}$, a classic result of Motwani, Phillips, and Torng is that running jobs to completion is $m$-competitive and best-possible~\cite{MotwaniPT94}. This is because an optimal solution can finish at most $m$ short jobs while the algorithm works on a long job.

\item If the operation lengths of each job are monotone non-decreasing, 
we show that executing SRPT on the operation level (i.e., at any time schedule the job with the shortest remaining \emph{operation} processing time), which we call \emph{Operations-SRPT}, is $m$-competitive, and this is best-possible (cf.~\Cref{sec:operations-srpt}).
\item If the length of the first operation, which is available at job arrival, underestimates $p_j$ by at most a factor of $m$, we can treat $p_{j_1}$ as a prediction for $p_j$
and obtain a best-possible $O(m)$-competitive algorithm~\cite{AzarLT21,AzarLT22,GuptaKPW26}. 
However, for general instances the length of the first operation can be an arbitrarily bad estimate of the total job size.
\end{itemize}

We capture all of these aspects 
in a single algorithm and analyze it for general instances.

\subsection{Our Results and Techniques}
\label{sec:our:results}

Our first main result is a bounded competitive ratio in terms of the number of operations $\ops$.

\begin{theorem}\label{thm:main1}
	For scheduling with $\ops$ operations, there exists
	a deterministic $O(\ops^2)$-competitive algorithm 
	for minimizing the total flow time on a single machine.
\end{theorem}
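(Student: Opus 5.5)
Rather than build a new algorithm from scratch, I would reduce to the robust flow time setting of Gupta et al.~\cite{GuptaKPW26}, where an $O(\mu_1\mu_2)$-competitive algorithm exists whenever every job comes with an estimate that overestimates its size by at most $\mu_1$ and underestimates it by at most $\mu_2$. The obstacle is that $p_{j_1}$ can be an arbitrarily bad estimate of $p_j$. The fix is to cut each job online into \emph{virtual chunks}. Each chunk is a maximal run of consecutive operations whose total is within a factor $O(\ops)$ of a reference length known when the chunk opens, namely the length of its first operation.

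Concretely, a chunk starts at some operation $j_\ell$ with estimate $\hp := p_{j_\ell}$. Operations keep being appended as they are revealed, as long as the next revealed operation has length at most the current estimate, which is then treated as fixed. A new chunk opens as soon as an operation of length larger than the estimate appears. A chunk contains at most $\ops$ operations, each of length at most $\hp$, and its first operation already has length $\hp$. So the chunk's true size lies in $[\hp, \ops\hp]$, which gives underestimation at most $\ops$ and overestimation at most $1$. A sharper version that accounts for how much the chunk has been overestimated should give the refined bound $O(m_1 m_2)$. Each job thus becomes a sequence of at most $\ops$ chunks, and chunk sizes are strictly increasing in their reference lengths.

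The algorithm runs the robust algorithm of~\cite{GuptaKPW26} on chunks. At any time each active job exposes exactly one current chunk together with its estimate. When that chunk ends, the next chunk arrives as a new ``job'' at that moment with its own estimate. The analysis has two parts. First, I would bound the algorithm's cost on chunks by $O(\ops)$ times an optimum for the chunk instance. This is not black-box: chunks of the same job arrive adaptively, at release times that depend on the algorithm itself. I would therefore redo the Gupta et al.\ dual-fitting or potential argument directly, with flow time charged per job via its currently active chunk.

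Second, I would relate the total flow time to this chunk-level accounting, losing one more factor $\ops$. A job's flow time equals the sum of the flow times of its at most $\ops$ chunks. The lower bound side needs a fractional or volume-based relaxation of $\opt$, for instance comparing, for every size class, the remaining volume of the algorithm against that of $\opt$. In this comparison a job of $\opt$ that is not yet finished is counted once for each of its chunk classes. Since chunk lengths increase geometrically across a job, this multiplicity should cost at most $O(\ops)$ in total, which combined with the first part gives $O(\ops^2)$.

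The main obstacle is the first part: the robust analysis assumes job estimates are fixed at the true release time, whereas here a job's later chunks appear only after the algorithm has processed its earlier ones. $\opt$ may process the same operations at completely different times, so the chunk decomposition is not a valid instance for $\opt$. I would first try to define the lower bound through a per-operation fractional flow LP that is independent of chunks. Then I would verify that the potential function of~\cite{GuptaKPW26} only uses estimates of the algorithm's active chunks, so that its arrival and completion events remain controlled when chunk arrivals are algorithm-dependent.
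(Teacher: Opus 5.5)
\paragraph{Verdict: genuine gap.} Your architecture matches the paper's. Chunks act as virtual jobs, and the stack/queue algorithm of Gupta et al.\ runs on each job's current chunk, with the job re-inserted when a new chunk starts. One factor is lost to underestimating a chunk's size and one to the number of chunks per job. But the proposal stops where the proof begins: you name the algorithm-dependent chunk activation times $r_c$ as the main obstacle and only suggest trying a per-operation LP or checking a potential function. The missing idea has two parts.

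\emph{The OPT side.} The chunk decomposition depends only on operation lengths, not on the schedule. So ``the chunk decomposition is not a valid instance for $\mathrm{OPT}$'' is the wrong diagnosis. $\mathrm{OPT}$'s schedule is feasible for the relaxed instance in which every chunk of $j$ is released at $r_j$ and a job's chunks may be processed in any order. Hence the local chunk LP at a fixed time $\tau$, with covering constraints on $e(S)=\max(0,p(S)-(\tau-\min_{c\in S}r_{j(c)}))$, has value at most the number of $\mathrm{OPT}$'s alive chunks, which is at most $m\cdot|J^*(\tau)|$. That is the entire second factor. It needs no per-size-class volume comparison and no geometric growth, and nothing is lost on the algorithm's side, since each active job has exactly one active chunk.

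\emph{The algorithm side.} All difficulty moves here. Dual fitting needs excess lower bounds for chunk sets whose \emph{jobs} are released in $[t_{\ge k}+1,\tau]$, while the algorithm only works with $r_c$. The paper bridges this with an invariant of the algorithm (\Cref{lem:strict-classes}): when a class-$\ge k$ chunk is processed at $t_{\ge k}$ while a full class-$k$ chunk waits, no class-$<k$ chunk is active. Because chunk classes are monotone along a job, every class-$<k$ chunk activated later must belong to a job released after $t_{\ge k}$ (\Cref{lem:excess-1}). In the $\le k$ variant, at most one job straddles $t_{>k}$, and its residual work must be bounded separately (\Cref{lem:excess-2}). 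The whole argument also needs the reduced-instance step, which guarantees that the waiting full class-$k$ chunk really exists at $t_{\ge k}$. None of this is in your plan, so what you have is an outline with its central lemma missing, not a proof.

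\paragraph{A smaller error.} Your chunks are cut by exact lengths (open a new chunk when $p_q>\hat p$), not by power-of-two classes. So your claim that chunk lengths grow geometrically along a job is false. For example, operations of lengths $2^{10},2^{10}+1,\dots,2^{10}+m-1$ form $m$ chunks with essentially equal estimates. The multiplicity bound does not need geometric growth, since a job has at most $m$ chunks. The straddling-job term does: the paper defines chunks via classes $\lfloor\log_2 p_q\rfloor$ so that chunk classes strictly increase, making the residual a geometric sum $O(m_2 2^{k})$. With your definition, classes are only non-decreasing and the residual can be as large as $m\cdot 2^{k+1}$. That term can still be absorbed for an $O(m^2)$ bound by re-tuning the dual thresholds with $m$ in place of $m_2$, but not by the reasoning you give.
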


Our algorithm is an optimistic, adaptive variant
of the robust algorithm in \cite{GuptaKPW26}.
At a high level, we treat the first operation
length $p_{j_1}$ as a prediction of $p_j$ and run
the robust algorithm using this prediction.
As long as subsequent operations have processing
time at most $p_{j_1}$, the prediction underestimates the total processing time of the revealed part of job $j$ by at most a factor of $\ops$, so we retain the best-possible $\ops$-competitive guarantee.
If an operation $j_i$ with processing time much larger than $p_{j_1}$ becomes active, we stop processing $j$ and effectively
treat it as re-arriving, starting from operation $i$ and with prediction $p_{j_i}$.
We call the 
operations $j_1,\ldots,j_{i-1}$ a \emph{chunk}.

Since the chunks of a job have increasing predictions, handling those has a strong connection to the special case of instances with monotone non-decreasing operation processing times, for which we show that Operations-SRPT is $m$-competitive.

\begin{restatable}{theorem}{thmOperationsSRPT}\label{thm:op-srpt}
	For scheduling with $\ops$ operations, Operations-SRPT is $m$-competitive for minimizing the total flow time on a single machine 
	if $p_{j_1} \leq \ldots \leq p_{j_m}$ for all jobs $j$.
\end{restatable}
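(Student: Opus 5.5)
Operations-SRPT always runs the active operation with the smallest remaining processing time. The plan is to compare it against an SRPT schedule on a relaxed, operation-level instance, and to charge each job's flow time to the flow times of its operations in that relaxation.

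First, I define the relaxed instance $\mathcal{I}'$. In it, every operation $j_\ell$ is an independent job released at time $r_j$ with size $p_{j_\ell}$, so the precedence constraints inside a job are dropped. Any feasible schedule for the original instance is feasible for $\mathcal{I}'$, and job $j$ completes no earlier than its last operation. Hence $\opt \ge \sum_j (C^*_{j_m}-r_j) \ge \frac1m\sum_j\sum_\ell (C^*_{j_\ell}-r_j)$, where $C^*$ denotes completion times in an optimal schedule for the original instance. It follows that $m\cdot\opt \ge \sum_j\sum_\ell(C^*_{j_\ell}-r_j) \ge \opt(\mathcal{I}')$. So it suffices to show $\alg \le \opt(\mathcal{I}')$. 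An alternative target is $\alg\le\sum_j\sum_\ell F^{\mathrm{SRPT}}_{j_\ell}(\mathcal{I}')$, and since SRPT is optimal on $\mathcal{I}'$~\cite{Schrage68}, the two targets coincide.

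The key claim is that, because $p_{j_1}\le\dots\le p_{j_m}$, Operations-SRPT on the original instance produces exactly the schedule that SRPT (with a suitable consistent tie-breaking) produces on $\mathcal{I}'$. Consider SRPT on $\mathcal{I}'$ and the operations belonging to a single job $j$. All of them are released at $r_j$. The one with the smallest remaining size is $j_1$, and remaining sizes only shrink while a job is processed. So SRPT never touches $j_{\ell+1}$ before $j_\ell$ finishes: while $j_\ell$ is unfinished its remaining size is at most $p_{j_\ell}\le p_{j_{\ell+1}}$, with ties broken toward lower index. Therefore, at every time, the set of started-or-eligible operations of $j$ in the SRPT schedule for $\mathcal{I}'$ is exactly $\{j_\ell\}$, where $j_\ell$ is the current active operation. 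Moreover, the minimum remaining size over all unfinished operations of $j$ equals the remaining size of that active operation. It follows that SRPT on $\mathcal{I}'$ runs, at each time, the active operation with the smallest remaining time, which is precisely Operations-SRPT's choice. I will prove this formally by induction over the event times (releases and completions), showing that the two schedules agree on every interval between consecutive events.

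With the schedules identified, completion time $C_j$ of job $j$ under Operations-SRPT equals the completion time of $j_m$ under SRPT on $\mathcal{I}'$. So $\alg=\sum_j (C_{j_m}-r_j)\le\sum_j\sum_\ell (C_{j_\ell}-r_j)=\opt(\mathcal{I}')\le m\cdot\opt$, which gives the claimed bound. The main obstacle is the identification step, specifically tie-breaking and equal operation lengths. I will fix a tie-breaking rule for SRPT on $\mathcal{I}'$ that prefers lower operation index within a job and matches the algorithm's rule across jobs, and I will note that SRPT is optimal under any tie-breaking rule. Zero-length operations are handled the same way, since they complete instantly.
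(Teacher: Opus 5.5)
Your proposal is correct and follows essentially the paper's proof. Both identify Operations-SRPT with SRPT on the relaxed instance in which every operation is an independent job released at $r_j$, using monotonicity and lower-index tie-breaking, and both lose the factor $m$ because each job has at most $m$ operations. The only difference is cosmetic: the paper argues locally at each time $t$, bounding $|J(t)|$ by the number of alive operations in SRPT's relaxed schedule and that by $m\cdot|J^*(t)|$, whereas you compare summed flow times directly.
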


These two ingredients suggest the $O(\ops^2)$ bound: we lose a factor $\ops$ by underestimating the size of a chunk by a factor of at most $m$ as the chunk can consist of that many operations, and another factor $\ops$ because each job may be split into up to $\ops$ chunks, treated as $\ops$ virtual jobs.
In fact, by defining 
$\ops_1$ as the maximum number of chunks per job
and
$\ops_2$ as the maximum number of operations per chunk,
we can show a refined bound, which implies \Cref{thm:main1}; 
we give precise definitions in \Cref{sec:general}.
Importantly, $\ops_1$ and $\ops_2$ depend only on the instance,
and not on the algorithm.

\begin{theorem}\label{thm:main2}
	For scheduling with $\ops$ operations, there exists
	a deterministic $O(\ops_1 \cdot \ops_2)$-competitive algorithm 
	for minimizing the total flow time on a single machine.
\end{theorem}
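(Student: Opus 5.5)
We would prove \Cref{thm:main2} by reducing the operation model to a \emph{virtual} instance of the robust (prediction) setting and then combining the robust analysis of \cite{GuptaKPW26} with the exchange-type argument underlying \Cref{thm:op-srpt}. First we fix the instance-dependent chunk decomposition. Scanning the operations of job $j$ in order, a new chunk starts at operation $j_i$ whenever $p_{j_i} > c \cdot p_{\text{head}}$ for a constant $c\ge 1$, where $p_{\text{head}}$ is the length of the first operation of the current chunk. Then $\ops_1$ is the maximum number of chunks per job and $\ops_2$ the maximum number of operations per chunk. Because this decomposition depends only on the operation lengths, the algorithm can detect a chunk boundary exactly when the boundary operation becomes active. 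Each chunk $C$ with head length $h_C$ satisfies $h_C \le p_C \le c\,\ops_2\, h_C$, and the heads of successive chunks of a job grow geometrically, by a factor of more than $c$ each.

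The algorithm runs the robust algorithm of \cite{GuptaKPW26} on virtual jobs, one per chunk. A virtual job is released when its head becomes active and carries the prediction $\hp_C := h_C$. With respect to this prediction it has overestimation $\mu_1 \le 1$ and underestimation $\mu_2 \le c\,\ops_2$. When a job finishes a chunk, its next chunk re-arrives as a fresh virtual job. If the robust guarantee applied verbatim, we would get $\sum_C F^{\alg}_C = O(\ops_2)\cdot \opt(\mathcal{V})$, where $\mathcal{V}$ is the virtual instance with release dates fixed as in our schedule. This relation does not follow directly, however, because release dates in $\mathcal{V}$ depend on the algorithm. We therefore expect to reopen the potential-function / dual-fitting proof of \cite{GuptaKPW26} rather than use it as a black box. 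The invariants there, namely class-based rank ordering by predicted size and charging to volume of smaller classes, should survive adaptive releases, since each virtual job is handled entirely in the future of its release.

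The second step bounds the flow time of the original jobs. The flow time of $j$ equals the flow time of its last chunk plus the algorithm's delays on earlier chunks, so $F_j \le \sum_{C\subseteq j} F^{\alg}_C$. Each of the at most $\ops_1$ chunks must then be charged to the optimum of the original instance. Here we mimic \Cref{thm:op-srpt}. The heads of a job's chunks are increasing, so at the chunk level the instance is a monotone operation instance with at most $\ops_1$ ``operations'' per job, and each such operation is known up to a factor $O(\ops_2)$. The plan is a local-competitiveness argument. At any time $t$, we bound the number of alive virtual jobs of the algorithm by $O(\ops_1\ops_2)$ times the number of alive jobs of $\opt$ plus a potential term. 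Every alive original job in $\opt$ can be blamed by at most $\ops_1$ alive chunks, one per geometric size class in which its next chunk could lie, and each of those blames is amplified by $O(\ops_2)$ through misestimation, exactly as in \cite{GuptaKPW26}.

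We expect the main obstacle to be this coupling. The optimum schedules original jobs without chunk structure, while our bound lives on the virtual instance whose release dates are algorithm-dependent. A naive bound $\opt(\mathcal{V}) \le \ops_1 \opt$ fails, because the release of later chunks is delayed by our algorithm. The first thing to try is a single potential function over all alive chunks, with a volume comparison of the form ``remaining predicted volume of classes $\le k$ in $\alg$ minus the corresponding volume in $\opt$'' in the style of \cite{GuptaKPW26}. The geometric growth of heads would then ensure that a job's later chunks fall into strictly larger classes. A telescoping sum over these classes should lose only the factor $\ops_1$ rather than something worse.
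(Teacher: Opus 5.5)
\textbf{There is a genuine gap.} You correctly identify the crux: the release dates of the virtual chunk-jobs depend on the algorithm. But you do not resolve it, and your two steps do not compose.

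Reopening the analysis of \cite{GuptaKPW26} on the virtual instance $\mathcal{V}$, in the hope that its invariants ``survive adaptive releases'', can at best bound the algorithm against a benchmark built from the release dates $r_C$ of $\mathcal{V}$. That benchmark is not a lower bound for $\opt$. A covering or excess constraint whose window starts at $\min_{C\in S} r_C$ can be violated by the optimum, which may process a chunk long before the algorithm activates it. You note yourself that $\opt(\mathcal{V})$ cannot be related to $\ops_1\cdot\opt$. The remaining step (``a single potential function \ldots\ a telescoping sum should lose only $\ops_1$'') is a hope rather than an argument, so the central step of the proof is missing.

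\textbf{The missing idea} is to choose an algorithm-independent benchmark on chunks. Give every chunk the release date $r_{j(c)}$ of its job, and let the chunks of a job be processed in any order, as in the proof of \Cref{thm:op-srpt}. The resulting local LP $\chunklp(\tau)$ uses the excess $e(S)=\max(0,p(S)-(\tau-\min_{c\in S}r_{j(c)}))$. Its value is at most the number of chunks alive in $\opt$ at time $\tau$, hence at most $\ops_1|J^*(\tau)|$; this is where the factor $\ops_1$ comes from.

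The price is that the excess lower bounds needed for dual fitting must now hold for sets of chunks whose \emph{jobs} were released in the window where the algorithm worked only on them. This is not automatic, because the algorithm saw those chunks only at their later activation times. The paper closes this with two ingredients: the strictly increasing chunk classes (\Cref{fact:monotone-chunks}), and the property that the processed chunk has the smallest class up to one exception (\Cref{lem:strict-classes}).
\begin{itemize}
\item For a crucial class $k$ (after passing to a reduced instance), no chunk of class $<k$ is active at $t_{\ge k}$, the last time a chunk of class $\ge k$ was processed. Hence every class-$<k$ chunk activated later belongs to a job released later. Otherwise an earlier chunk of that job, of even smaller class, would have been active at $t_{\ge k}$.
\item For classes $\le k$, at most one job is exceptional. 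Its subsequent chunks have strictly increasing classes $\le k$, so their total volume is $O(\ops_2 2^k)$. This loss is affordable because the corresponding dual variable is used only when $|F(k)|\ge 6\ops_2$.
\end{itemize}
Your geometric growth of heads only supplies the second bound. Without the release-time argument, the telescoping over classes has no mechanism.

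A minor point: geometric growth needs $c\ge 2$. With $c=2$, your chunking has at most as many chunks per job as the paper's class-based one, and its chunks are at most twice as large, so your parameters are compatible with the theorem's $\ops_1,\ops_2$.
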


Our analysis builds on \cite{GuptaKPW26} and  uses dual fitting. However, we use a different dual program which is not based on jobs but on chunks. Further, our more adaptive algorithm requires several new ideas and very careful adaptations in the analysis.

In terms of $\ops$, we can only show a (randomized) lower bound of $\Omega(\ops)$, so \Cref{thm:main1} is not tight. However, in terms of 
$\ops_1$ and $\ops_2$, \Cref{thm:main2} is tight in the following sense.
On the one hand, if $\ops_2 = 1$, so each chunk is composed of one operation, the result by \cite{MotwaniPT94} shows that
every deterministic algorithm has a competitive ratio of at least $\ops_1$ (cf.\ \Cref{thm:det-lb}), and we show that every randomized algorithm has a competitive ratio of at least $\Omega(\ops_1)$ (cf.\ \Cref{theorem:randomized-lb}).
On the other hand, if $\ops_1 = 1$, the lower bound for robust flow time implies a competitive ratio of $\Omega(\ops_2)$~\cite{AzarLT22}.

Finally, we look at the case of two operations per job ($m=2$). 
This recovers \emph{scheduling with obligatory tests}~\cite{DogeasEL24}, where a mandatory test determines a job’s processing time (the test is the first operation, and the job is the second). Dogeas et al.~\cite{DogeasEL24} consider total completion time minimization and show that the best possible deterministic competitive ratio is between $\sqrt{2}$ and $1.861$, with an improved upper bound of $1.585$ for uniform-length tests ($p_{j_1}=\unit$).
Our~\Cref{thm:main1} and lower bounds imply that, for total flow time, the best possible deterministic competitive ratio for scheduling with obligatory tests lies between $2$ and $672$; we show it is $2$ for uniform-length tests.

\begin{restatable}{theorem}{thmMainSWT}\label{thm:main-swt}
	For scheduling with obligatory uniform-length tests, Operations-SRPT is 2-competitive for minimizing the total flow time on a single machine.
\end{restatable}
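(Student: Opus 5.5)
The plan is to prove \emph{local} $2$-competitiveness: writing $A(t)$ and $O(t)$ for the sets of jobs alive at time $t$ in Operations-SRPT and in a fixed optimal schedule, we aim to show $|A(t)| \le 2|O(t)|$ for every $t$. Integrating over $t$ then gives the theorem, since total flow time equals $\int |A(t)|\,dt$. Each job $j$ is a test $j_1$ of length $\unit$ followed by a second operation $j_2$ of length $q_j := p_{j_2}$.

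Split $A(t)$ into $A_1(t)$, the jobs whose test is still unfinished, and $A_2(t)$, the jobs in their second operation. The structure of Operations-SRPT on such instances is simple.
\begin{itemize}
\item A started test has remaining time below $\unit$ and all fresh tests tie at $\unit$. We break ties so that a started test is never preempted by another test. Then at most one job in $A_1(t)$ is partially processed.
\item The only preemptions are by second operations that are shorter than the remaining work of the current operation.
\end{itemize}

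The two parts of $A(t)$ are charged separately to $O(t)$.

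For $A_1(t)$ the argument is a volume argument. Let $s \le t$ be the start of the current busy period. During $[s,t]$ the algorithm is always busy, and every completed test can be paired with its job. A job in $A_1(t)$ is either untouched (remaining work $p_j$) or is the single partially tested job. An SRPT-style counting argument then applies, comparing total remaining volume of the $k$ smallest remaining jobs of ALG versus OPT. Because all tests have the same length $\unit$, a job of OPT that has not yet finished its test still carries work at least $\unit$ toward any job the algorithm has not tested. This should give $|A_1(t)| \le |O(t)| + 1$, and then, via an exchange argument, $|A_1(t)| \le |O(t)|$ whenever $|O(t)|\ge 1$.

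For $A_2(t)$ we treat the algorithm's behavior on revealed second operations as exact SRPT on a clairvoyant instance. The comparison instance $\mathcal I'$ has each second operation $j_2$ released at the time OPT finishes testing $j$; this is no earlier than the algorithm's test completion, up to the pairing above. The classical SRPT invariant \cite{Schrage68} then states that for every $k$, the $k$ smallest remaining second operations of ALG have total volume at most that of any schedule. Applied with OPT restricted to second operations, this should yield $|A_2(t)| \le |O(t)|$.

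The main obstacle is coupling the two parts. The algorithm may postpone tests in favor of short second operations, so $A_1$ grows while OPT is testing. Conversely, the algorithm may test jobs OPT is not testing, so the release times in $\mathcal I'$ do not align. If the pointwise bound fails, the fallback is a fractional/potential-function argument. Define $\Phi(t)$ as a weighted difference of rank-ordered remaining operation lengths between ALG and OPT, with tests weighted by their uniform length $\unit$. We would then show $\tfrac{d}{dt}\bigl(\mathrm{ALG}(t) + \Phi(t)\bigr) \le 2\,\tfrac{d}{dt}\mathrm{OPT}(t)$, checking that $\Phi$ does not jump at arrivals and test completions. Uniform test length is what should keep these jumps nonpositive.
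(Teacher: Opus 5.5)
\textbf{Gap: the bound on $A_2(t)$ is false.} Your decomposition needs $|A_2(t)| \le |O(t)|$, but the whole factor $2$ can sit on $A_2$. Release $N$ jobs at time $0$, each with test length $\unit$ and second operation of length $\unit+1$. Every untouched test (remaining time $\unit$) beats every second operation (remaining time $\unit+1$). So Operations-SRPT runs all $N$ tests during $[0,N\unit]$ and never touches a second operation. At $t=N\unit$ you have $A_1(t)=\emptyset$ and $|A_2(t)|=N$. Meanwhile SRPT on whole jobs has completed $\lfloor N\unit/(2\unit+1)\rfloor$ jobs, so $|O(t)|=N-\lfloor N\unit/(2\unit+1)\rfloor$, roughly $N/2$ for large $\unit$. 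Already $\unit=1$, $N=3$ gives $|A_2(3)|=3>2=|O(3)|$. With second operations of length $\unit$, even $N$, and the paper's tie-breaking in favor of stage-$1$ operations, you get $|A_2(t)|=2|O(t)|$ exactly. Hence no argument that charges $A_1(t)$ and $A_2(t)$ separately to $|O(t)|$ can give $2$.

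Your reasoning for $A_2$ also fails on two further counts. First, the algorithm does not run SRPT on second operations: a fresh test preempts any second operation with remaining time at least $\unit$, so second operations do not get the machine time a clairvoyant SRPT run on $\mathcal I'$ would give them. Second, OPT can finish a test strictly earlier than the algorithm. For example, the algorithm tests equal-length tests in index order, while OPT tests the job with the shortest second operation first. So the release times in $\mathcal I'$ do not dominate the algorithm's. The $A_1$ bound and the potential-function fallback are only stated as expectations, with no concrete inequality or potential to check.

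\textbf{Missing idea: a single volume comparison in which the budget $2|O(t)|$ is shared.} The paper classifies jobs by \emph{type} rather than by stage: type-A if $p_{j_2}\ge\unit$, type-B otherwise. It then proves a Schrage-style invariant only for stage-$2$ type-A operations: the $|Q^*_{2A}(\ts)|$ largest alive stage-$2$ type-A operations of the algorithm carry at least OPT's stage-$2$ type-A volume. This induction cannot start at time $0$ in general. It starts at the last time $t_0$ at which the algorithm processes one of these operations with remaining time at least $\unit$; at that moment every alive operation belongs to that set. The case where the algorithm later works on such an operation below $\unit$ is handled separately, by showing $|J(\ts)|=|J^*(\ts)|$. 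The factor $2$ then comes from converting the leftover volume into job counts:
\begin{itemize}
\item OPT's type-B jobs have size below $2\unit$;
\item OPT's stage-$1$ type-A operations have remaining time at most $\unit$ each and are paired with its stage-$2$ type-A operations;
\item the algorithm has at most one job with remaining time below $\unit$.
\end{itemize}
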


Our analysis of Operations-SRPT for this special case is inspired by Schrage's inductive SRPT analysis~\cite{Schrage68}. While SRPT always processes the smallest job in the system, the same does not hold for Operations-SRPT on the operation level:
If we consider all operations that belong to active jobs (even those operations that are not yet active), then Operations-SRPT does not necessarily process the smallest such operation, because it might not be active yet.
To compensate for this crucial difference to SRPT, we apply a variant of the inductive SRPT analysis only to a subset of operations for which we can guarantee that they are processed in order of their remaining processing times. We then carefully complement this 
with volume-based arguments 
to argue about all operations. 

Our results for scheduling with obligatory tests are in contrast to the model of scheduling with \emph{optional} tests~\cite{DurrEMM20}, for which we show in~\Cref{sec:optional-tests} that no deterministic algorithm has a constant competitive ratio, even for uniform-length tests. The main difference to the model with obligatory tests is that the total processing volume is not equal for all schedules anymore and instead depends on the tests that an algorithm decides to execute, which an adversary can exploit.

\subsection{Organization}

In the next section, we give precise definitions and introduce notation that will
be required throughout this paper. Then, in \Cref{sec:operations-srpt} we introduce
Operations-SRPT and prove \Cref{thm:op-srpt}. In \Cref{sec:swt} we
analyze this algorithm for scheduling with obligatory uniform-length tests.
Finally, in \Cref{sec:general} we present our algorithm for the general 
case and prove \Cref{thm:main1,thm:main2}. Most proofs and all lower bounds are deferred to the appendix. We discuss further related work in~\Cref{app:related}.

\section{Notation and Preliminaries}\label{sec:preliminaries}

We give a formal definition of our \emph{\ofts}. 
There are $n$ jobs, denoted $1,\ldots,n$, that arrive online over time. Each job $j$ arrives at an integer release date $r_j$, has an integer processing time $p_j \geq 1$, and is composed of $m$ operations $j_1,\ldots,j_m$, each with an integer operation processing time $p_{j_i} \geq 0$ such that $p_j = \sum_{i=1}^m p_{j_i}$. 
We call $j_i$ a \emph{stage}-$i$ operation of job $j$.
The more general setting where jobs have \emph{at most} $m$ operations can be easily simulated using zero-length operations.
During each integer interval $[t,t+1]$, an algorithm can process at most one job for one unit of processing.
The \emph{completion time} $C_j$ of job $j$ is the first point in time when it has received a total of $p_j$ processing and its \emph{flow time} is $F_j := C_j - r_j$. The objective is to minimize $\sum_j F_j$.

Let $\alg$ and $\opt$ denote the total flow time of an algorithm and the optimal solution, respectively. An algorithm is \emph{$c$-competitive} if, for any instance, $\alg \leq c \cdot \opt$.
We say that a job is \emph{\textbf{active}} 
at time $t$ if $r_j \leq t < C_j$. 
An algorithm is \emph{locally $c$-competitive} if at any time $t$, it holds that $|J(t)| \leq c \cdot |J^*(t)|$, where $J(t)$ denotes the set of active jobs in the algorithm's schedule at time $t$ and $J^*(t)$ in the optimal solution~\cite{Schrage68}.  
Since $\sum_j F_j = \sum_{t \geq 0} |J(t)|$,
every locally $c$-competitive algorithm is also $c$-competitive.

Let $y_j(t)$ denote the cumulative processing of job $j$ until time $t$.
For each job $j$, we call the operation of $j$ which is next to be processed the \emph{\textbf{active operation}};
formally, operation $j_i$ is active at time $t$ if $j \in J(t)$ and 
$\sum_{i'=1}^{i-1} p_{j_{i'}} \leq y_j(t) < \sum_{i'=1}^{i} p_{j_{i'}}$. Since there is exactly one operation active at time $t$ if and only if $j$ is active at time $t$, we slightly overload notation and use $J(t)$ to denote the set of active operations (if clear from the context).
Moreover we say that an operation $j_i$ is an \emph{\textbf{alive operation}} if it has not been completed until time $t$; formally, operation $j_i$ is alive at time $t$ if $j \in J(t)$ and $y_j(t) < \sum_{i'=1}^{i} p_{j_{i'}}$. Note the difference between active and alive operations; in particular, at most one operation $j_i$ of job $j$ is active at a time $t$, while all operations succeeding $j_i$ (including) are alive. 
At any time $t$, the algorithm knows the processing times of all active operations but the processing times of the remaining alive operations remain unknown. We remark that the algorithm also does not know the value of $m$ upfront.
Finally, we use 
$p_j(t) = p_j - y_j(t)$ for the remaining processing time of a job $j$ at time $t$ and $p_{j_i}(t)$ for the remaining processing time of operation $j_i$ at time $t$. We denote the corresponding quantities in a fixed optimal solution as $p^*_{j}(t)$ and $p^*_{j_i}(t)$.
 
\section{Operations-SRPT}\label{sec:operations-srpt}

We start by studying the arguably most natural algorithm for scheduling with $m$ operations, which 
we call \emph{Operations-SRPT}: we simply run SRPT on the set of active operations.

\begin{quotation}
    \noindent
	\textbf{Operations-SRPT:} At any time $t$, schedule the active operation $q \in J(t)$ with the shortest remaining operation processing time $\min_{q \in J(t)} p_q(t)$. 
\end{quotation}

We break ties in favor of the smaller operation index, i.e., we prefer $j_k$ over $j'_{\ell}$ if $k < \ell$; if $k=\ell$ we prefer $j$ if $j < j'$.
We show that Operations-SRPT is $m$-competitive for
\emph{monotone non-decreasing} operation processing times, which is best-possible (cf. \Cref{thm:det-lb}).

\thmOperationsSRPT*

\begin{proof}
We show that Operations-SRPT is locally $m$-competitive. Fix a time $t$.
We apply that SRPT is locally $1$-competitive~\cite{Schrage68} to a \emph{virtual operations instance} $J_o$: all operations of job $j$ are released and active at time $r_j$ and can be processed in any order.
Let $J_o(t)$ and $J_o^*(t)$ denote the set of active operations at time $t$ in an algorithm's schedule
and in an optimal schedule for $J_o$, respectively.

For instance $J_o$, consider the schedule of SRPT, which treats each operation as an independent job.
Since 
$p_{j_1} \leq \ldots \leq p_{j_m}$ and $j_1,\ldots,j_m$ are available at time $r_j$ for each job $j$,
SRPT would schedule $j$'s operations in order $j_1,\ldots,j_m$. In particular, note that 
the schedule of SRPT on $J_o$ is equivalent to the schedule of Operations-SRPT on $J$ under the same tie breaking.
Using this and that SRPT is locally $1$-competitive on $J_o$, we have
\(
    |J(t)| = |J_o(t)| \leq 1 \cdot |J_o^*(t)|.
\)
Finally, since an optimal solution for instance $J$ is also feasible for instance $J_o$
and for every active job in instance $J$ at time $t$ there can be at most
$m$ active operations (per job) in instance $J_o$ at time $t$, 
we conclude $|J_o^*(t)| \leq m \cdot |J^*(t)|$.
Combining both inequalities implies $|J(t)| \leq m \cdot |J^*(t)|$, which concludes the proof.
\end{proof}

Moreover, we show in \Cref{app:srpt:lb} that monotone non-decreasing operation processing times exactly characterize this algorithm in the following sense.

\begin{restatable}{theorem}{thmSRPTlB}
    \label{thm:srpt:lb}
    The competitive ratio of Operations-SRPT is $\Omega(\log n)$ if $m=2$ and $p_{j_1} > p_{j_2}$ for all $j$.
\end{restatable}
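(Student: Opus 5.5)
We construct an adversarial instance with $m=2$ and decreasing operations ($p_{j_1} > p_{j_2}$). On it, Operations-SRPT is forced to behave like a non-clairvoyant algorithm of the worst kind: it processes a large first operation it cannot avoid. Meanwhile it keeps postponing jobs whose remaining total work is tiny. Our plan is to adapt the classical SRPT-versus-smart-ordering gadget used for $\Omega(\log n)$ lower bounds, run over geometrically shrinking phases.

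The key observation is that a job $j$ with $p_{j_1}=a$ and $p_{j_2}=\varepsilon$-ish small has remaining operation time at least the remaining first-operation time. Operations-SRPT therefore compares jobs by their current \emph{operation} remainder rather than their true remaining work. Consider two jobs $j,j'$ with $p_{j_1}(t)=x<y=p_{j'_1}(t)$ but $p_{j_2}$ large and $p_{j'_2}=0$. This cannot happen literally, since we need $p_{j_1}>p_{j_2}$, so instead take $p_{j_2}$ just below $p_{j_1}$. Then $j$ has true remaining work about $2x$, and $j'$ has about $y$. If $x<y<2x$, Operations-SRPT works on $j$ first, although SRPT on whole jobs would prefer $j'$. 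We will chain such inversions.

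Concretely, we plan phases $k=0,1,\dots,K$ with $K=\Theta(\log n)$. In phase $k$, starting at time $T_k$, we release a batch of jobs whose first operations have length about $L/2^k$ and whose second operations are slightly shorter. We choose the lengths and release times so that:
\begin{itemize}
\item at the end of each phase, the algorithm has reduced many jobs only to the point where their remaining second operations still sum to a constant fraction of the phase volume;
\item the optimum, by finishing entire jobs (first plus second operation) in order of true remaining work, keeps only $O(1)$ jobs alive.
\end{itemize}
The standard way to make this quantitative is a final long stream of unit jobs released at every time step for a long duration $T$. During the stream both schedules must spend all their time on the stream, so the costs are frozen at (number of alive jobs)$\cdot T$. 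We would then show the algorithm has $\Omega(\log n)$ alive jobs at the start of the stream, while the optimum has $O(1)$. Choosing $T$ large, with $n$ polynomial in the batch sizes, gives the ratio $\Omega(\log n)$.

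The main obstacle is designing the phases so that the algorithm's "mistake" leaves one extra unfinished job per phase that cannot be caught up later. Volume is conserved across schedules, so we need the algorithm's leftover work to be fragmented across many jobs while the optimum's leftover is concentrated in few. We would first try, in each phase $k$, to release one job $j^{(k)}$ with $p_{j_1}=2^{-k}L$ and $p_{j_2}=2^{-k}L-1$. Its first operation arrives when the previous job's second operation has remaining length just above $2^{-k}L$. Operations-SRPT then switches to the new, shorter operation and leaves each previous job with about half its work pending. The optimum instead finishes the older job first, because its true remaining work is smaller. Inductively, after $K$ phases the algorithm holds $K$ partially processed jobs and the optimum holds $O(1)$. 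The work is to verify the exact timing and tie-breaking, and to scale lengths by constants so that integrality holds with $K\approx \log_2 L=\Theta(\log n)$.
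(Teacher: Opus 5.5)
\textbf{There is a genuine gap: in your concrete gadget the optimum does not stay at $O(1)$ alive jobs.} The overall frame is sound and matches the paper: exploit that Operations-SRPT ranks jobs by remaining \emph{operation} length, then freeze the alive-job counts with a long stream of unit jobs. The failure is in the one gadget you actually specify. In phase $k$, Operations-SRPT runs $j^{(k)}_1$ and then about $2^{-k-1}L$ of $j^{(k)}_2$, so the phase lasts about $\tfrac32 2^{-k}L$ time units. It releases about $2^{1-k}L$ work, so every phase is overloaded by a factor $4/3$. Simulate SRPT on whole jobs, which minimizes the number of alive jobs at every time simultaneously:
\begin{itemize}
\item At $T_1$ it holds $j^{(0)}$ with $L/2$ remaining and $j^{(1)}$ with $L$.
\item In phase $1$ it finishes $j^{(0)}$ and leaves $j^{(1)}$ at $\tfrac34 L$.
\item In phase $2$ it leaves $j^{(2)}$ at $\tfrac18 L$.
\item In phase $3$ it finishes $j^{(2)}$ and leaves $j^{(3)}$ at $\tfrac34\cdot\tfrac14 L$.
\item The pattern then repeats at scale $1/4$ every two phases.
\end{itemize}
So at $T_{2r}$ the optimum holds $j^{(1)},j^{(3)},\dots,j^{(2r-1)},j^{(2r)}$, i.e.\ $r+1$ jobs, against $2r+1$ for Operations-SRPT. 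The ratio stays below $2$, and the final stream only preserves it. The cause is structural. The backlog keeps growing and can only be stored in the geometrically shrinking jobs you release, so the optimum must fragment it too. Finishing the older job first does not help. Your "batch" version does not say how it would avoid this.

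\textbf{The missing idea is volume-neutral phases plus a single reservoir job for the optimum.} Each phase must release exactly as much work as it lasts, and the optimum must park its entire backlog in one job it never touches. The paper maintains a $(k,M)$-critical invariant:
\begin{itemize}
\item the optimum has one alive job with remaining $2M-\varepsilon$;
\item Operations-SRPT has a job whose second operation has remaining $M/2^k-\varepsilon$, and all its other active operations have remaining at least $M/2^k$.
\end{itemize}
A phase of length $2M/2^k-\varepsilon$ releases exactly that much work. At the start it releases $a$ with $p_{a_1}=M/2^k-\varepsilon/2$ and $p_{a_2}=M/2^{k+1}-\varepsilon/2$. At time $M/2^k+M/2^{k+1}-\varepsilon$ later it releases $b$ with $p_{b_1}=M/2^{k+1}$ and $p_{b_2}=0$. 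The optimum finishes $a$ and then $b$ inside the phase. Operations-SRPT instead proceeds as follows:
\begin{itemize}
\item it first completes the leftover second operation from the previous phase, which is just below $p_{a_1}$;
\item it then finishes $a_1$, whose remaining is just below $p_{b_1}$ when $b$ arrives;
\item it then prefers $a_2$, which is just below $p_{b_1}$.
\end{itemize}
It thus completes one job but ends the phase with both $a$ and $b$ alive, a net gain of one job per phase while the optimum stays at one. Its backlog, of the same volume $2M-\varepsilon$, is fragmented into the $b$-jobs of sizes $M/2, M/4,\dots$. This is exactly the "fragmented versus concentrated" property you identified but did not realize. After $k^*=\log_2 M$ phases, a stream of $M$ unit jobs gives $\alg\ge k^*M$ against $\opt=O(M)$ with $n=\Theta(M)$.
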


In the next section, however, we will see that for $m=2$ Operations-SRPT is constant-competitive
under the additional assumption that the stage-$1$ operations $j_1$ are all of equal size.

\section{Two Operations: Scheduling with Uniform Obligatory Tests}\label{sec:swt}

In this section, we consider the special case of $m=2$ and uniform-length stage-$1$ operations, that is, $p_{j_1} = \unit$ for a common integer $\unit \in \mathbb{N}_+$. 
This corresponds to 
scheduling with uniform-length obligatory tests~\cite{DogeasEL24}. Our goal is to prove~\Cref{thm:main-swt}, which we restate here for convenience.

\thmMainSWT*

We first classify the jobs based on the size of their stage-$2$ operation.

\begin{definition}[Job types]
    A job $j\in J$ is a \emph{type-A} job if $p_{j_2} \ge \unit$ and a \emph{type-B} job if $p_{j_2} < \unit$.
\end{definition}

This definition ties into the concept of chunks as described in~\Cref{sec:our:results}: The operations of a type-B job $j$ can be considered a single chunk, since $p_{j_1}$ is a $2$-approximation of the total size $p_j$ of the job. For type-A jobs this is not the case, so each operation can be considered an individual chunk.


Next, we state an important property of Operations-SRPT that holds for our special case provided that we break ties in favor of stage-$1$ operations. We defer the proof to~\Cref{app:swt}.

\begin{restatable}{observation}{obsSinglePartial}
  \label{obs:single:partial}
  At any time $t$, at most one job has a remaining processing time of less than $\unit$.
\end{restatable}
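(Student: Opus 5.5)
The plan is to prove a slightly stronger invariant by induction over the unit time steps. At each time $t$, let $S(t)$ be the set of active jobs $j$ that satisfy at least one of the following:
\begin{itemize}
\item[(i)] $j$ is \emph{partially processed in stage $1$}, i.e., $0 < y_j(t) < \unit$;
\item[(ii)] $j$ has remaining processing time $p_j(t) < \unit$.
\end{itemize}
I will show $|S(t)| \le 1$ for all $t$. This implies the observation, since every job with $p_j(t) < \unit$ lies in $S(t)$. The tie-breaking rule in favor of stage-$1$ operations is assumed, as stated before the observation.

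The key step is to compare active-operation lengths inside and outside $S(t)$. Let $j \in S(t)$.
\begin{itemize}
\item If $j$ is in stage $1$, then $y_j(t)>0$. This holds in case (i) by definition, and in case (ii) because $y_j(t)=0$ would give $p_j(t)=p_j\ge \unit$. Hence its active operation has remaining length $\unit - y_j(t) < \unit$.
\item If $j$ is in stage $2$ and $j\in S(t)$, then it is not partially processed in stage $1$, so it is in $S(t)$ by (ii). Its active operation then has remaining length $p_j(t) < \unit$.
\end{itemize}
So every job in $S(t)$ has active operation of remaining length strictly less than $\unit$. Conversely, let $k \notin S(t)$ be active. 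Either $k$ is untouched in stage $1$, with active-operation length exactly $\unit$, or $k$ is in stage $2$ with $p_k(t) \ge \unit$, and then its active-operation remaining length equals $p_k(t) \ge \unit$. So every active job outside $S(t)$ has active-operation remaining length at least $\unit$.

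The induction step then goes as follows. A job outside $S$ can enter $S$ only if it is processed, because (i) and (ii) are triggered only by processing; newly released jobs are untouched and have $p_j \ge \unit$. Jobs never leave $S$ except by completing, since $y_j$ only increases. Suppose $S(t)=\{j\}$, and Operations-SRPT processes some job $k \notin S(t)$ in $[t,t+1]$. Then $k$'s active operation has remaining length at least $\unit$, which is strictly larger than $j$'s. This contradicts the SRPT choice among active operations, and no tie-breaking is needed because the inequality is strict. Hence, when $|S(t)|=1$, either the single member of $S$ is processed, or nothing enters $S$. When $S(t)=\emptyset$, at most one job is processed and thus at most one job can enter. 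In all cases $|S(t+1)| \le 1$, and the base case $S(0)=\emptyset$ is trivial.

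The main obstacle is finding the right strengthening of the invariant. Working only with ``remaining $<\unit$'' fails, because a job partially processed in stage $1$ with a tiny stage-$2$ operation can fall below $\unit$ while another job already sits below $\unit$. Including partially processed stage-$1$ jobs in $S$ is what makes the strict comparison of active-operation lengths go through. I would also double-check the edge cases with zero-length stage-$2$ operations: such a job completes immediately after stage $1$, so it causes no issue.
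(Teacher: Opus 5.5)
Your proof is correct and is essentially the paper's argument. The paper reuses the reasoning of \Cref{obs:typeB:interrupt} to show that a job whose remaining time drops below $\unit$ is never interrupted afterwards; that rests on exactly your comparison (an active operation with remaining length below $\unit$ strictly beats every untouched stage-$1$ operation and every stage-$2$ operation with remaining length at least $\unit$), and your set $S(t)$ packages it as an explicit inductive invariant that also shows no tie-breaking is needed. One slip: ``jobs never leave $S$ except by completing'' is false, since a job with a long stage-$2$ operation leaves $S$ when its stage-$1$ operation finishes, but you only use $S(t+1)\subseteq S(t)\cup\{\text{processed job}\}$, so the argument is unaffected.
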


Before we move to the competitive analysis, we introduce more notation.
If $j$ is a type-$\gamma$ job, then we also say that $j_1$ and $j_2$ are type-$\gamma$ operations.
For a type $\gamma \in \{A,B\}$ and an operation stage $\ell \in \{1,2\}$, let $Q^*_{\ell \gamma}(t)$ denote the set of alive  \emph{stage-$\ell$ type-$\gamma$ operations} at time $t$ in an optimal solution.
We will also argue about job \emph{volumes}.
For a set of operations or jobs $H$ that are alive at point in time $t$ in the optimal solution, 
let $\vol_t^*(H) = \sum_{q \in H} p^*_q(t)$. Similarly, we define $\vol_t(H) = \sum_{q \in H} p_q(t)$ for a set of operations of jobs $H$.
Moreover, we define $\vol_t = \vol_t(J(t))$ and $\vol^*_t = \vol^*_t(J(t))$.


Our goal is to show Operations-SRPT is locally $2$-competitive, which implies~\Cref{thm:main-swt}.

\begin{lemma}
\label{thm:unit:locally:2}
Let $\ts$ be an arbitrary point in time. Then it holds that $|J(\ts)| \le 2 \cdot |J^*(\ts)|$. 
\end{lemma}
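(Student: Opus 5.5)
We prove the bound by a Schrage-style volume argument at the fixed time $\ts$. Let $t_0 \le \ts$ be the last time before $\ts$ at which Operations-SRPT either idles or processes an operation that is not among the ``small'' operations relevant at $\ts$. Concretely, we index the jobs in $J(\ts)$ by their remaining processing time and compare prefix volumes. Let $j^{(1)},\dots,j^{(k)}$ be the algorithm's active jobs at $\ts$, sorted by $p_j(\ts)$. We aim to show that for each $i$ the optimal schedule must still contain at least $\lceil i/2\rceil$ alive jobs. Equivalently, we pair each optimal alive job with at most two algorithm jobs.

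\textbf{Step 1: the algorithm's structure.} By \Cref{obs:single:partial}, at most one job in $J(\ts)$ has remaining time below $\unit$. We show that the jobs whose stage-$1$ operation has not started (remaining time $\unit + p_{j_2}$) and the type-A jobs in stage $2$ are processed in SRPT order. Here we use that tests are uniform: a fresh job's active operation has length $\unit$. Any stage-$2$ operation with remaining time $\ge \unit$ loses to a started test or to a fresh test under the tie-break in favor of stage $1$. Hence, among operations of size $\ge \unit$, the algorithm behaves like SRPT on the set $Q_{1\cdot}\cup Q_{2A}$. Type-B second operations (size $<\unit$) are always run immediately after their test, since they then have the smallest remaining time.

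\textbf{Step 2: counting type-A jobs via Schrage's inductive argument.} For stage-$2$ type-A operations alive in the algorithm, we treat each job $j$ as the pair of chunks $\{j_1\}, \{j_2\}$. We run Schrage's exchange/volume argument on the chunk instance, in which $j_2$ is released when $j_1$ completes in the algorithm. For any threshold $x \ge \unit$, consider the last time $t_x$ at which the algorithm worked on an operation with remaining time $> x$ or idled. The total volume released in $[t_x,\ts]$ in chunks of size $\le x$ minus the processed amount gives $\vol_\ts(\text{chunks}\le x) \le \vol^*_\ts(\text{chunks}\le x)$. Summed over thresholds, this yields that the number of algorithm chunks is at most the number of optimal alive chunks. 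Each optimal job contributes at most $2$ alive chunks, which gives the factor $2$.

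\textbf{Step 3: type-B jobs and the partial job.} For a type-B job, the optimum's job is a single chunk of size $<2\unit$. Here we argue by volume: each type-B job alive in the algorithm (except possibly the unique partial one) has remaining volume $\ge \unit$. Every optimal alive job has remaining volume at least $p_{j_2}$, and a fresh job has volume at least $\unit$. We charge type-B algorithm jobs using $\vol_\ts$ versus $\vol^*_\ts$ restricted to jobs released after $t_\unit$, and absorb the single partial job from \Cref{obs:single:partial} with the slack of the ``$\lceil\cdot\rceil$''.

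\textbf{Main obstacle.} The hard step is combining Steps 2 and 3. The chunk counting counts an optimal type-A job twice, while a type-B job should count once but can only be bounded by volume. The key point to verify is that, at threshold $\unit$, the volume inequality $\vol_\ts \le \vol^*_\ts$ over jobs released in $[t_\unit,\ts]$ suffices. The algorithm holds at most one job with volume below $\unit$, and every other job it holds has volume $\ge\unit$. The optimum, with $|J^*(\ts)|$ alive jobs, holds volume at most the sum of their remaining sizes, where each job's size is $<2\unit$ for type B. This mismatch is exactly the factor $2$. We first attempt the inequality $|J(\ts)| \le 1 + \vol^*_\ts(\cdot)/\unit \le 2|J^*(\ts)|$ on the small part, and fall back on a finer per-threshold case analysis if the additive $1$ breaks tightness.
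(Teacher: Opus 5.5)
\textbf{There is a genuine gap, and it sits in Step 2.}

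In the chunk instance where $j_2$ is released when the algorithm finishes $j_1$, Operations-SRPT is literally SRPT. Schrage's argument therefore bounds $|J(\ts)|$ by the number of alive chunks of an optimum \emph{for that instance}, not by the alive operations of $\opt$ in the original instance. $\opt$'s schedule is infeasible there, since it may run $j_2$ before the algorithm has completed $j_1$, so the two optima are not comparable.

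The per-threshold version fails for the same reason. A chunk $j_2$ that appears after $t_x$ may belong to a job released before $t_x$, which $\opt$ may already have processed before $t_x$. So ``released volume minus $(\ts-t_x)$'' is not a lower bound on $\opt$'s remaining volume of these chunks. This is exactly the $r_c$ versus $r_{j(c)}$ disconnect the paper warns about.

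The step also never uses $p_{j_1}=\unit$. If it were valid, it would make Operations-SRPT $2$-competitive on every two-operation instance, which contradicts \Cref{thm:srpt:lb}.

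Step 3 and your ``main obstacle'' paragraph fail at precisely the delicate moment. After $t_\unit$, the algorithm can still finish an old stage-$2$ type-A operation: one with remaining time $\unit+1$ at $t_\unit$ is processed again at remaining time $\unit$, then runs to completion. More generally, the algorithm may process a stage-$2$ type-A operation below $\unit$ while fresh tests and type-B jobs wait. So it does not spend $[t_\unit,\ts]$ only on new jobs, and the restricted inequality $\vol_\ts\le\vol^*_\ts$ fails. An additive ``$+1$'' would not give the exact factor $2$ in any case.

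Two ingredients are missing.

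\emph{The right invariant and its conversion.} The invariant must cover only stage-$2$ type-A operations, compared to $\opt$'s alive stage-$2$ type-A operations:
$\vol_\ts(L_{2A}(|Q^*_{2A}(\ts)|,\ts)) \ge \vol^*_\ts(Q^*_{2A}(\ts))$.
This converts to the count bound in \Cref{lem:volume:conversion}. The conversion uses $\vol_\ts=\vol^*_\ts$, $\vol^*_\ts(Q^*_{1A}(\ts))\le \unit|Q^*_{2A}(\ts)|$, $\vol^*_\ts(Q^*_{B}(\ts))<2\unit|Q^*_{2B}(\ts)|$, and \Cref{obs:single:partial}.

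\emph{A way to survive the bad moments of the induction.} The paper starts Schrage's induction at the critical time $t_0$ of \Cref{def:unit_t0}. This is the last time the algorithm processes one of its largest stage-$2$ type-A operations with remaining time $\ge\unit$. At that time, all of the algorithm's volume lies in those operations. If such an operation is later processed below $\unit$, the paper argues directly that neither schedule completes a job in $[t_0,\ts]$, which gives $|J(\ts)|=|J^*(\ts)|$. Your proposal has no substitute for either ingredient.
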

Our proof of 
\Cref{thm:unit:locally:2}
is inspired by the optimality proof for SRPT by Schrage~\cite{Schrage68}.  
At time $\tau$, he considers the remaining volume of the (at most) $|J^*(\ts)|$-largest active jobs 
in the algorithm's schedule, denoted by $\vol_\ts(L(|J^*(\ts)|,t))$, and shows that 
it is at least the total remaining volume in the optimal solution
$\vol_{\ts}^*(J^*(\ts)) = \vol^*_\ts$. 
Since $\vol_\ts = \vol^*_\ts$, this implies $\vol^*_\ts = \vol_\ts \ge \vol_\ts(L(|J^*(\ts)|,\ts)) \ge \vol^*_\ts = \vol_\ts$,
meaning that the $|J^*(\ts)|$-largest active jobs at time $\ts$ in the algorithm's schedule 
contain all of the algorithm's remaining volume $\vol_\ts$.
This can only be if $|J(\ts)| \leq |J^*(\ts)|$.

Inspired by this, we
(i) prove a similar volume invariant as in~\cite{Schrage68} but only for \emph{stage-$2$ type-A operations} and (ii) show that this weaker volume invariant implies~\Cref{thm:unit:locally:2}.
To this end, let $L_{2A}(x,t)$ denote the set of the $x$-largest stage-$2$ type-A operations alive at  time $t$ in Operations-SRPT's schedule and let $\vol_\tau(L_{2A}(x,t))$ denote the total remaining volume of $L_{2A}(x,t)$ at  time $t$. 
We would like to prove the following inequality, which states that the total remaining volume of the $|J^*_{2A}(t)|$-largest stage-$2$ type-A operations alive at $\ts$ in Operations-SRPT's schedule is at least as large as the remaining volume of all stage-$2$ type-A operations alive at time $\ts$ in the optimal solution:
\begin{equation}
\label{eq:volume:eq}
\vol_\tau (L_{2A}(|Q^*_{2A}(\ts)|,\ts)) \ge \vol^*_\ts(Q^*_{2A}(\ts)) \ .
\end{equation}
In the next lemma, we show that this volume invariant indeed implies \Cref{thm:unit:locally:2}.
To illustrate the proof idea, assume for now that the optimal solution does not have alive stage-$1$ type-A operations at time $\ts$.
Using $\vol_\tau = \vol^*_\tau$, Inequality~\eqref{eq:volume:eq} implies $R:= \vol_\tau - \vol_\tau(L_{2A}(|Q^*_{2A}(\ts)|,\ts)) \le \vol^*_{\tau} - \vol^*_\tau(Q_{2A}(\tau)) =: R^*$, i.e., when ignoring the $|Q^*_{2A}(\ts)|$-largest stage-$2$ type-A operations in both schedules, the remaining volume of the algorithm at time $\tau$ is at most as large as the remaining volume in the optimal solution. Since we assume $Q_{1A}^*(\ts)=\emptyset$, the optimal solution needs at least $R^*/(2\unit)$ alive type-B jobs at time $\ts$ to fill the volume $R^*$, as all those jobs have size at most $2\unit$. 
However, the algorithm has at most one job with a remaining volume smaller than $\unit$ (cf.~\Cref{obs:single:partial}). Hence, the algorithm can fit at most  $R/\unit \le R^*/\unit$ jobs into the volume $R$, which implies $|J(\ts)| \le 2 \cdot |J^*(\ts)|$. In~\Cref{app:swt}, we turn this idea into a full proof that takes stage-$1$ type-A operations into account.

\begin{restatable}{lemma}{lemVolumeConversion}
    \label{lem:volume:conversion}
    If $\vol_\tau(L_{2A}(|Q^*_{2A}(\ts)|,\ts)) \ge \vol^*_\tau (Q^*_{2A}(\ts))$, then $|J(\ts)| \le 2 \cdot |J^*(\ts)|$.
\end{restatable}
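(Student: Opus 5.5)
The plan is to run the volume-counting argument sketched before the lemma, now keeping track of the stage-$1$ type-A operations. Throughout, write $k := |Q^*_{2A}(\ts)|$ and $L := L_{2A}(k,\ts)$. If the algorithm has fewer than $k$ alive stage-$2$ type-A operations, $L$ is simply all of them.

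\textbf{Step 1 (count and bound the optimum's leftover volume).} The first observation is that an alive type-A job in the optimal schedule always has its stage-$2$ operation alive. Hence $k$ equals the number of alive type-A jobs in the optimum, and $|J^*(\ts)| = k + B^*$, where $B^*$ is the number of alive type-B jobs in the optimum. Define $R^* := \vol^*_\ts - \vol^*_\ts(Q^*_{2A}(\ts))$. This leftover volume consists of two parts:
\begin{itemize}
\item the remaining volume of alive type-B jobs, each at most $\unit + p_{j_2} < 2\unit$;
\item the remaining stage-$1$ volume of alive type-A jobs, each at most $\unit$, with at most $|Q^*_{1A}(\ts)| \le k$ such operations.
\end{itemize}
Hence $R^* \le 2\unit B^* + \unit k$.

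\textbf{Step 2 (transfer to the algorithm).} Define $R := \vol_\ts - \vol_\ts(L)$. Since the total remaining volume is schedule-independent, $\vol_\ts = \vol^*_\ts$. Combined with the hypothesis, this gives $R \le R^*$.

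Now split $J(\ts)$ into two groups. The first is the at most $k$ jobs whose stage-$2$ operation lies in $L$. The second is the set $X$ of all remaining jobs. Every job in $X$ contributes its entire remaining volume to $R$. By \Cref{obs:single:partial}, all but at most one job of $X$ have remaining volume at least $\unit$. The exceptional job is still active, so it has remaining volume at least $1$.

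\textbf{Step 3 (combine, with the main subtlety).} If $X \neq \emptyset$, Step 2 gives $R \ge \unit(|X|-1) + 1 > \unit(|X|-1)$. Together with Step 1 this yields
\[
|X| - 1 < \frac{R}{\unit} \le \frac{R^*}{\unit} \le 2B^* + k .
\]
Since all quantities are integers, $|X| \le 2B^* + k$. The case $X = \emptyset$ is trivial. Therefore
\[
|J(\ts)| \le k + |X| \le 2k + 2B^* = 2\,|J^*(\ts)|.
\]

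I expect the main obstacle to be exactly the additive $+1$ that \Cref{obs:single:partial} introduces; a naive estimate gives only $2|J^*(\ts)|+1$. The plan handles it through the strict inequality coming from the positive volume of the partial job, together with integrality.

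A second point to check is accounting. Type-A jobs of the algorithm whose stage-$1$ operation is still alive and whose stage-$2$ operation lies in $L$ put their stage-$1$ volume into $R$. This only strengthens the inequality, so it is harmless. The place where stage-$1$ type-A operations genuinely cost something is on the optimum's side, which is why the bound in Step 1 has the extra term $\unit k$ (at most $\unit$ per operation). This is precisely what is absorbed by the factor $2$ on $k$.
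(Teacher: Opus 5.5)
Your proof is correct and is essentially the paper's argument. Like the paper, you bound the optimum's volume outside $Q^*_{2A}(\ts)$ by $\unit\,|Q^*_{2A}(\ts)| + 2\unit\,|Q^*_{2B}(\ts)|$, transfer this to the algorithm via $\vol_\ts = \vol^*_\ts$ (both schedules being work-conserving) and the hypothesis, and count the jobs outside $L$ using \Cref{obs:single:partial}. Your handling of the single partial job via the strict inequality $R > \unit(|X|-1)$ and integrality spells out the step the paper compresses into ``this volume is only enough for at most $|Q^*_{2A}(\ts)| + 2|Q^*_{2B}(\ts)|$ jobs.''
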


Given~\Cref{lem:volume:conversion}, it only remains to prove that the volume invariant~\eqref{eq:volume:eq} holds. A common approach (see e.g.~\cite{Schrage68,BansalD07,AzarLT21}) for this 
is to prove that  
\begin{equation}
\label{eq:volume:inductive}
\vol_t(L_{2A}(|Q^*_{2A}(t) \cap Q^*_{2A}(\ts)|,t)) \ge \vol^*_t(Q^*_{2A}(t) \cap Q^*_{2A}(\ts))
\end{equation}
holds at any time $0 \le t \le \ts$ via induction, which then directly implies~\eqref{eq:volume:eq} for $t = \ts$.

A main difference between our volume invariant and previous approaches is that our invariant is not based on jobs but on  \emph{subsets} of operations.
To illustrate the additional challenges caused by this difference, 
consider the job-based invariant 
$\vol_t(L(|J^*(t) \cap J^*(\ts)|,t)) \ge \vol^*_t(J^*(t)\cap J^*(\ts))$, where $\vol_t(L(|J^*(t) \cap J^*(\ts)|,t))$ is the remaining volume of the $|J^*(t) \cap J^*(\ts)|$-largest alive \emph{jobs} at time $t$ in the algorithm's schedule. 
 The following argument is often crucially used when proving such an 
invariant (see e.g.~\cite{Schrage68,BansalD07,AzarLT21}):
If $|J(t)| \le |J^*(t) \cap J^*(\ts)|$, then the remaining volume of the $|J^*(t) \cap J^*(\ts)|$-largest alive jobs at time $t$ is $\vol_t( L(|J^*(t) \cap J^*(\ts)|,t)) = \vol_t$ as the set of the $|J^*(t) \cap J^*(\ts)|$-largest jobs alive at time $t$ contains \emph{all} jobs that are alive at time $t$. 
As $\vol_t = \vol_t^* \ge \vol^*_t(J^*(t)\cap J^*(\ts))$, this then gives $\vol_t(L(|J^*(t) \cap J^*(\ts)|,t)) \ge  \vol^*_t(J^*(t)\cap J^*(\ts))$.

Unfortunately, this type of argument does not apply to the operation-based volume invariant~\eqref{eq:volume:inductive}: 
Even if the number of alive stage-$2$ type-A operations at time $t$ is at most $|Q^*_{2A}(t) \cap Q^*_{2A}(\ts)|$, 
we can have $\vol_t(L_{2A}(|Q^*_{2A}(t) \cap Q^*_{2A}(\ts)|,t)) < \vol_t$ as there might be type-B and stage-$1$ type-A operations in the system.
In particular, $\vol_t(L_{2A}(|Q^*_{2A}(t) \cap Q^*_{2A}(\ts)|,t)) < \vol_t$ can hold at a point in time $t$ where the algorithm processes an operation $q \in L_{2A}(|Q^*_{2A}(t) \cap Q^*_{2A}(\ts)|,t)$, which is a situation where the argumentation above would typically be used. 
To address this additional challenge, we identify a certain time $t_0$ at which the invariant~\eqref{eq:volume:inductive} is guaranteed to hold. 
If the situation sketched above does not happen during $[t_0,\tau]$, 
we show that~\eqref{eq:volume:inductive} holds for any $t \in [t_0,\ts]$ by essentially replicating the inductive proof of Schrage's SRPT analysis~\cite{Schrage68}, 
but starting the induction at time $t_0$ instead of time $0$. 
Otherwise,
we follow a different proof strategy and show that the case is certainly not a worst-case w.r.t.~the local competitive ratio at time $\ts$. We give the full proof in~\Cref{app:swt}.

\section{Algorithm for General Instances}
\label{sec:general}

In this section, we give our results for general instances and prove \Cref{thm:main1,thm:main2}.
If the operation sizes of jobs are non-decreasing, 
then Operations-SRPT is $O(m)$-competitive (cf.~\Cref{thm:op-srpt}).
If $p_{j_1}$ is a good approximation of $p_j$, specifically if $p_{j_1} \geq \frac{1}{m} p_j$, 
then we can treat $p_{j_1}$ as a prediction and are $O(m)$-competitive~\cite{AzarLT21,GuptaKPW26}.
But what if neither of these scenarios applies? 

Our solution carefully combines these two approaches: 
we first find the largest index $i$ such that $\lfloor \log_2 p_{j_\ell} \rfloor \leq \lfloor \log_2 p_{j_1} \rfloor$ for all $\ell \in \{1,\ldots,i\}$. 
We call the resulting set of consecutive operations $\{j_1,\ldots,j_i\}$ a \emph{chunk}. 
We treat this chunk as a virtual job (as in Operations-SRPT) and essentially run the algorithm of \cite{GuptaKPW26} on it.
When the chunk is completed, we compute the next chunk of the job and repeat.
We formalize these ideas into an algorithm using the concept of \emph{classes}.
\begin{definition}[Class of an operation]
    For an operation $q$, we define its class as $k_q := \lfloor \log_2 p_q \rfloor$.
\end{definition}

\subsection{The Algorithm}
Our algorithm maintains the following objects.
\begin{itemize}
    \item 
We maintain a \emph{current class} $\curclass_j$ for every job.
When a job arrives, we set $\curclass_j \gets k_{j_1}$.
\item We maintain a queue $\afull$ of jobs, which is sorted by $\curclass_j$. Whenever a job arrives, we insert it into $\afull$. Let $\frontjob$ be the job with the smallest current class in $\afull$; break ties first in favor of the job with the smallest active operation, and all remaining ties arbitrarily. 
\item We maintain a stack $\apart$ of jobs. Let $\topjob$ be the job on top of $\apart$.
\end{itemize}
At any integer time $t$, we execute the following steps.
\begin{enumerate}
	\item
While $|\afull| \geq |J(t)| / 4$ \emph{and} the $\frontjob$ has 
a strictly smaller current class than $\topjob$, remove $\frontjob$ from $\afull$ and push it to $\apart$.
\item
Process $j =\topjob$ during $[t,t+1]$.
\item 
If 
$j$ completed, remove it from $\apart$. Otherwise, let $q$ be the active operation of $j$ at time $t+1$. 
If $k_{q} \geq \curclass_j + 1$, remove $j$ from $\apart$, update $\curclass_j \gets k_{q}$, and insert $j$ into $\afull$. 
\end{enumerate}

Note that in Step~3, $j$ will only be moved back to $\afull$ if $q$ just became active at time $t+1$, because otherwise the condition can never be true.
For the analysis, we use $\afull(t)$, $\apart(t)$, $\topjob(t)$, $\frontjob(t)$, and $\curclass_j(t)$ to denote the state of the algorithm at time $t$.
Also, we write $k(t) := \curclass_{\topjob(t)}(t)$ for the current class of the job that is processed during $[t,t+1]$.

\subsection{Chunks}
\label{sec:chunks}

We first analyze the structure of when the algorithm moves jobs between $\afull$ and $\apart$. This is exactly the structure of chunks of a job, which we formally define as follows.

\begin{definition}[Chunks]
    \label{def:chunk}
    Consider a job $j$ with $m$ operations $j_1,\ldots,j_m$. We partition the operations of $j$ into a sequence of \emph{chunks} $c_1,\ldots, c_{d_j}$ as follows:
    \begin{itemize}
        \item $c_1$ is the maximal prefix of operations such that all operations $j_i \in c_1$ are of class $\le k$, where $k$ is the class of $j_1$.
        \item For $h > 1$, the chunk $c_h$ is the maximal prefix of the operations $\{j_1,\ldots,j_m\} \setminus \bigcup_{h' < h} c_{h'}$ such that all operations $j_i \in c_h$ are of class $\le k$, where $k$ is the class of the first operation in $c_h$.
    \end{itemize}
    We use $p_c = \sum_{q \in c} p_q$ to refer to the processing time of a chunk.
    The \emph{class} $k_c$ of chunk $c$ is the maximum class of an operation in $c$, which is equivalent to the class of the first operation in $c$.
\end{definition}

See \Cref{fig:chunk-example} for an illustration of a job decomposed into chunks. 
The definition of chunks directly gives that the classes of the chunks of a job $j$ are strictly increasing.
\begin{fact}\label{fact:monotone-chunks}
    Let $c_1, \ldots, c_{d_j}$ be the chunks of a job $j$. Then, $k_{c_1} < \ldots < k_{c_{d_j}}$.
\end{fact}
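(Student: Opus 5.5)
The plan is to argue directly from \Cref{def:chunk}, in two steps.

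\emph{Step 1: the class of a chunk equals the class of its first operation.} Take a chunk $c_h$ and let $k$ be the class of its first operation. By definition, every operation of $c_h$ has class at most $k$, and the first operation itself has class exactly $k$. So the maximum class over the operations of $c_h$ is $k$. This gives $k_{c_h} = k$ and justifies the equivalence asserted in the definition.

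\emph{Step 2: consecutive chunks have strictly increasing classes.} Fix $h < d_j$ and let $q$ be the first operation of $c_{h+1}$. Since $c_{h+1}$ is nonempty and consists of the operations not in $c_1,\ldots,c_h$, the operation $q$ comes immediately after the last operation of $c_h$. We use maximality of the prefix $c_h$: if $q$ had class $\le k_{c_h}$, then $c_h \cup \{q\}$ would still be a prefix of the remaining operations with all classes $\le k_{c_h}$, contradicting maximality. Hence $k_q > k_{c_h}$. By Step 1 applied to $c_{h+1}$, we get $k_{c_{h+1}} = k_q$, and therefore $k_{c_h} < k_{c_{h+1}}$. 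Chaining this inequality over $h = 1,\ldots,d_j-1$ gives the claim.

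The only point needing care is the boundary case in Step 2, namely that $q$ exists and directly follows $c_h$. This holds because the chunks partition the operations into consecutive prefixes and $h < d_j$. Zero-length operations have class $\lfloor \log_2 0\rfloor = -\infty$, which is consistent with this argument: such an operation can never start a new chunk after a previous one, since it always fits into the current chunk.
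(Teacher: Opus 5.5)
Your proof is correct and follows the paper's approach; the paper only remarks that the fact follows directly from the definition of chunks. Your two steps, that a chunk's class equals its first operation's class and that maximality of the prefix forces the next chunk's first operation into a strictly higher class, are exactly that argument written out.
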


In the schedule of the algorithm, let $c_j(t)$ denote the chunk that contains the active operation $o_j(t)$ of an active job $j$ at time $t$. 
We say that this is the \emph{active chunk} at time $t$. Let $r_c$ denote the first time when chunk $c$ is active, i.e., the time it becomes active.
A chunk $c$ is \emph{alive} at time $t$ if its job $j(c)$ is active at $t$. Similar to operations, each active chunk is alive but not every alive chunk is active.
We observe that the description of the algorithm ensures that chunks are exactly the entities of consecutive operations that the algorithm works on before moving a job back to $\afull$. The following observation follows from the definition of the algorithm, we defer the proof to~\Cref{app:chunks}.

\begin{restatable}{observation}{obsAlgChunks}
    Consider a job $j$ with $\ell$ chunks $c_1,\ldots,c_\ell$. The algorithm inserts $j$ into $\afull$ at time $t$ if and only if $t \in \{r_{c_1},\ldots,r_{c_\ell}\}$. Define $r_{c_\ell+1}= C_j$ for the completion time $C_j$ of job $j$. Each interval $I_i = [r_{c_i},r_{c_{i+1}}]$ with $i \in \{1,\ldots,\ell\}$ satisfies the following properties:
    \begin{enumerate}
        \item Job $j$ is moved to $\apart$ exactly once during $I_i$.
        \item $\curclass_j(t) = k_{c_i}$ for all $t \in [r_{c_i},r_{c_{i+1}})$.
    \end{enumerate}
\end{restatable}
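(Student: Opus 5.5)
The plan is to track the life cycle of job $j$ under the three steps of the algorithm, and to show inductively over the chunks $c_1,\ldots,c_\ell$ that the ``if'' and ``only if'' directions and the two properties hold together. The basic invariant is that at every time $t$ with $r_j \le t < C_j$, job $j$ lies in exactly one of $\afull$ and $\apart$. This holds because the algorithm moves jobs only in three ways: from $\afull$ to $\apart$ in Step~1, from $\apart$ to $\afull$ in Step~3, and out of $\apart$ upon completion in Step~3. Insertions into $\afull$ happen only at arrival ($t=r_j=r_{c_1}$) or in Step~3.

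\textbf{Step 1: Step 3 fires exactly at chunk boundaries.} The re-insertion in Step~3 at time $t+1$ requires that the active operation $q$ at time $t+1$ satisfies $k_q \ge \curclass_j+1$. As the paper notes, this forces $q$ to have just become active. I would argue by induction on $i$ that $\curclass_j(t)=k_{c_i}$ for all $t\in[r_{c_i},r_{c_{i+1}})$.
\begin{itemize}
\item \emph{Base case.} At $r_{c_1}=r_j$ we have $\curclass_j=k_{j_1}=k_{c_1}$.
\item \emph{Within the chunk.} While the active operation lies in $c_i$, every newly activated operation $q\in c_i$ has $k_q\le k_{c_i}$ by \Cref{def:chunk}. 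So Step~3 does not fire and the class stays unchanged.
\item \emph{At the boundary.} When the first operation $q$ of $c_{i+1}$ becomes active (at time $r_{c_{i+1}}$), maximality of $c_i$ gives $k_q>k_{c_i}$, i.e.\ $k_q\ge k_{c_i}+1$. Step~3 therefore fires, sets $\curclass_j\gets k_q=k_{c_{i+1}}$, and inserts $j$ into $\afull$.
\end{itemize}
One subtlety concerns zero-length operations, which have class $\lfloor\log_2 0\rfloor$. The definition of $r_c$ as the first time $c$ is active handles these: several operations may become active at the same instant, and I would interpret ``the active operation'' at time $t+1$ as the first one with positive remaining processing time. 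A zero-length chunk is skipped consistently, since it is never active by the paper's definition.

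\textbf{Step 2: $j$ is processed at boundaries, so $j\in\apart$ there.} For the boundary case to apply, $j$ must be $\topjob$ at the time an operation completes, because operations advance only while $j$ is processed. Hence $j\in\apart$ just before the boundary, and Step~3 applies. This yields the ``if'' direction, and together with the enumeration of insertion events in the basic invariant, also the ``only if'' direction.

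\textbf{Step 3: exactly one move to $\apart$ per interval $I_i$.} At $r_{c_i}$ the job is in $\afull$, and it can leave only via Step~1 (moving to $\apart$). Chunk $c_{i+1}$ (or completion) can only begin once $j$ has been processed, which requires $j=\topjob$ and hence $j\in\apart$. So at least one move happens in $I_i$. Once in $\apart$, the job returns to $\afull$ only at the next boundary $r_{c_{i+1}}$ or leaves at completion. So at most one move happens.

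This uses that chunks have positive length, or that a zero-length $c_{\ell}$ is handled by the convention above. The main obstacle is exactly this bookkeeping at endpoints: the closed interval $I_i$ shares its endpoint with $I_{i+1}$, and zero-length operations require care. I would resolve it by noting that the move in Step~1 at time $r_{c_{i+1}}$ belongs to the next chunk's processing, and by counting moves on the half-open interval $[r_{c_i},r_{c_{i+1}})$.
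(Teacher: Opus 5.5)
Your proposal is correct and follows essentially the same route as the paper: re-insertions in Step~3 are matched to chunk starts via the maximality in \Cref{def:chunk}, the ``exactly once'' property comes from $j$ alternating between $\afull$ and $\apart$, and the class property comes from the fact that $\curclass_j$ changes only upon re-insertion. Your simultaneous induction over chunks is slightly cleaner than the paper's argument, which tacitly uses $\curclass_j = \max_{i' \le i} k_{j_{i'}}$ (essentially property~2) when characterizing re-insertions, and your half-open counting correctly handles the endpoints shared by consecutive closed intervals $I_i$.
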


\begin{figure}[t]
\centering
\begin{tikzpicture}[xscale=0.8,yscale=0.57]
  \tikzset{
    opbar/.style={draw=black!65, line width=0.35pt},
    firstop/.style={draw=black!85, line width=1.5pt},
    c1/.style={fill=job1},
    c2/.style={fill=job2},
    c3/.style={fill=job3}
  }

  \foreach \y in {0,1,...,6}{
    \draw[gray!15] (0,\y) -- (12,\y);
  }
  \draw[->] (0,0) -- (12.5,0) node[right] {operation $j_i$};
  \draw[->] (0,0) -- (0,6.2) node[above] {class $k_{j_i}$};

  \draw[c1,firstop] (0.1,0) rectangle (0.9,2);  
  \foreach \i/\h in {2/1,3/2}{
    \draw[c1,opbar] ({\i-0.9},0) rectangle ({\i-0.1},\h);
  }

  \draw[c2,firstop] (3.1,0) rectangle (3.9,3);  
  \foreach \i/\h in {5/2,6/3,7/1,8/2}{
    \draw[c2,opbar] ({\i-0.9},0) rectangle ({\i-0.1},\h);
  }

  \draw[c3,firstop] (8.1,0) rectangle (8.9,5);  
  \foreach \i/\h in {10/3,11/1,12/5}{
    \draw[c3,opbar] ({\i-0.9},0) rectangle ({\i-0.1},\h);
  }

  \foreach \i in {1,...,12}{
    \node[anchor=north] at ({\i-0.5},0) {$\i$};
  }

  \draw[densely dashed] (3,0) -- (3,6.0);
  \draw[densely dashed] (8,0) -- (8,6.0);

  \draw[decorate,decoration={brace,mirror,raise=3pt}] (0.1,-0.85) -- (2.9,-0.85)
    node[midway,below=7pt] {chunk $c_1$};
  \draw[decorate,decoration={brace,mirror,raise=3pt}] (3.1,-0.85) -- (7.9,-0.85)
    node[midway,below=7pt] {chunk $c_2$};
  \draw[decorate,decoration={brace,mirror,raise=3pt}] (8.1,-0.85) -- (11.9,-0.85)
    node[midway,below=7pt] {chunk $c_3$};

  \node[anchor=south] at (1.5,2) {$k_{c_1}=2$};
  \node[anchor=south] at (5.5,3) {$k_{c_2}=3$};
  \node[anchor=south] at (10,5) {$k_{c_3}=5$};
\end{tikzpicture}
\caption{Example chunk structure for a single job with $12$ operations and $3$ chunks. Each bar is an operation.}
\label{fig:chunk-example}
\end{figure}
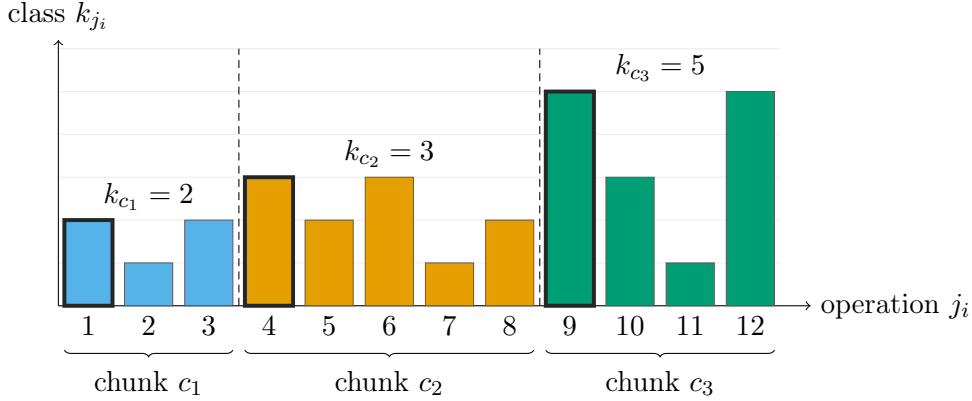

Since for every job $j \in J(t)$ there exists exactly one active chunk $c_j(t)$, we will without loss of generality refer to $J(t)$ as the set of active chunks at time $t$. Similarly, we use $\afull(t)$ and $\apart(t)$ for the set of active chunks of jobs in $\afull(t)$ and $\apart(t)$, respectively.

\subsection{Instance Parameters $\ops_1$ and $\ops_2$}

Based on the chunk structure of an instance, we derive two parameters $\ops_1$ and $\ops_2$ that
describe the maximal chunk size and the maximal number of chunks in a job, respectively.
Our analysis and performance guarantees will be based on these instance parameters.

\begin{definition}[$\ops_1$ and $\ops_2$]
    We define $\ops_1 := \max_{j \in J} d_j$ as the maximum number of chunks belonging to a single job, where $d_j$ is the number of chunks of job $j$,
    and we define $\ops_2 := \max_{c \in C} |c|$ as the size (number of operations) of the largest chunk.
\end{definition}

We can observe that the class of a chunk $c$ together with its length/size gives us an approximation of its processing time $p_c$. Hence, the class of a chunk can be used to approximate its size.

\begin{observation}\label{obs:chunk-size}
    Let $c$ be a chunk of class $k_c$ of some job $j$, then $p_c = \sum_{q \in c} p_{q} \le  \ops_2 \cdot 2^{k_c+1} \le 2 \cdot \ops_2 \cdot p_{q_1}$ where $q_1$ is the first operation in $c$.
\end{observation}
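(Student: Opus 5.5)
The bound follows by comparing every operation of $c$ with the first operation $q_1$, using only the class definition and \Cref{def:chunk}. Both inequalities are direct.

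First, the chunk $c$ has at most $\ops_2$ operations by the definition of $\ops_2$ as the maximum chunk size. Second, by \Cref{def:chunk} every operation $q \in c$ has class $k_q \le k_c$, where $k_c$ is the class of the first operation $q_1$. By the class definition, $k_q = \lfloor \log_2 p_q \rfloor \le k_c$ gives $\log_2 p_q < k_c + 1$, so $p_q < 2^{k_c+1}$. Zero-length operations have no well-defined class, but they contribute nothing to $p_c$. So either they are harmless, or we use the convention that their class is $-\infty$, which keeps them inside the chunk. Summing over the at most $\ops_2$ operations yields $p_c \le \ops_2 \cdot 2^{k_c+1}$.

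For the second inequality, we again use $k_c = k_{q_1} = \lfloor \log_2 p_{q_1} \rfloor$. This gives $2^{k_c} \le p_{q_1}$, so $2^{k_c+1} \le 2 p_{q_1}$. Multiplying by $\ops_2$ gives $\ops_2 \cdot 2^{k_c+1} \le 2 \cdot \ops_2 \cdot p_{q_1}$.

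The only point needing care is the first operation of a chunk. It must have positive length so that its class, and hence $k_c$, is finite. This holds whenever the chunk's class is meaningful. If some $q_1$ had length $0$, we would treat it by the same convention, and the bound would then be applied only to chunks with $p_{q_1} \ge 1$. There is no real obstacle beyond this bookkeeping.
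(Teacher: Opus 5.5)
Your proof is correct and matches the paper's argument: by \Cref{def:chunk}, every operation in $c$ has class at most $k_c$, so its size is below $2^{k_c+1} \le 2p_{q_1}$, and summing over at most $\ops_2$ operations gives both bounds. Your remark on zero-length operations is harmless bookkeeping that the paper leaves implicit.
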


\begin{proof}
    Since $q_1$ is of class $k_c$, we have $p_{q_1}\ge 2^{k_c}$. All $q \in c \setminus \{q_1\}$ are of a class $\le k_c$ by~\Cref{def:chunk}. Hence, $p_q \le 2^{k_c+1} \le 2 \cdot p_{q_1}$ for all $q \in c \setminus \{q_1\}$. In conclusion, $\sum_{q \in c} p_{q} \le \min\{ \ops_2 \cdot 2^{k_c+1}, 2 \cdot \ops_2 \cdot p_{q_1}\}$.
\end{proof}

\subsection{Local Competitive Analysis and the Local Chunk LP}
\label{sec:local}

Fix a time~$\ts$. 
Our goal is to prove $|J(\ts)| \leq O(\ops_1 \cdot \ops_2) \cdot |J^*(\ts)|$, which implies local competitiveness.
Similarly to our analysis of Operations-SRPT in~\Cref{sec:operations-srpt}, we will show this bound indirectly by comparing $|J(\ts)|$ against the number of chunks that are alive in the optimal solution, denoted by $|J^*_c(\ts)|$. Since each job in $J^*(\ts)$ is composed of at most $\ops_1$ chunks, we conclude that
\begin{equation}
    |J^*_c(\ts)| \leq \ops_1 \cdot |J^*(\ts)|. \label{eq:opt_chunk_bound}
\end{equation}

Henceforth, our goal is to prove the following lemma, which will together with \eqref{eq:opt_chunk_bound} imply
\Cref{thm:main2}.

\begin{lemma}\label{lemma:local_chunk_bound}
	It holds that $|J(\ts)| \leq O(\ops_2) \cdot |J^*_c(\ts)|$.
\end{lemma}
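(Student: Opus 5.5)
We plan to prove the lemma by dual fitting against a \emph{local chunk LP}, adapting the analysis of \cite{GuptaKPW26} with chunks in place of jobs. Fix $\ts$. We write an LP relaxation whose optimum lower-bounds $|J^*_c(\ts)|$. For every chunk $c$ there are variables $x_{c,t}\ge 0$ for $t\in[r_{j(c)},\ts]$. The objective is $\sum_c \sum_t x_{c,t}/p_c$, which counts the fraction of each chunk not finished by $\ts$. The constraints have two kinds. First, the capacity constraints $\sum_c x_{c,t}\le 1$. Second, covering constraints stating that every chunk alive at $\ts$ must receive its volume over $[\ts,\infty)$. To keep the LP local, we instead encode ``alive at $\ts$'' through volume-per-class constraints: for each class $k$, the remaining volume at $\ts$ of chunks of class at least $k$ is at least the volume released minus what could be processed. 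Setting $x$ from the optimal schedule (chunks processed in order within a job) gives a feasible solution of value $\le|J^*_c(\ts)|$. The precedence between chunks of one job is simply dropped, which only weakens the LP. Since $p_c\le 2\ops_2 2^{k_c}$ by \Cref{obs:chunk-size}, we may replace $p_c$ by the class estimate $2^{k_c}$ at a loss of $O(\ops_2)$. This is where the $\ops_2$ factor enters.

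Next we construct a dual solution from the algorithm's schedule with objective value $\Omega(|J(\ts)|)$ and show it is feasible after scaling by $O(\ops_2)$. Following \cite{GuptaKPW26}, the dual has a variable $\alpha_c$ for each chunk and a variable $\beta_t$ for each time step. We set $\beta_t$ proportional to the number of chunks of $\afull(\ts)$ (up to the constant from the $|J(t)|/4$ threshold) whose class is at most $k(t)$ and which were released before $t$. We set $\alpha_c$ to a constant for chunks active at $\ts$ and $0$ otherwise. The threshold rule in Step~1, namely ``$|\afull|\ge|J(t)|/4$'', ensures that $|\afull(\ts)|=\Omega(|J(\ts)|)$ or else $\apart$ dominates. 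When $\apart$ dominates, we charge the stack: chunks in $\apart$ are pushed in strictly decreasing class order (\Cref{fact:monotone-chunks} together with the push condition), so their classes are distinct. In that case we compare against processing windows, as in the robust setting.

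Dual feasibility amounts to checking, for each chunk $c$ and time $t\ge r_c$, that $\alpha_c/p_c - \sum_{t'\ge t}\beta_{t'}\le 1/p_c\cdot O(\ops_2)$. For this we use the observation on chunks: $\curclass_j$ equals $k_{c}$ throughout the chunk, and the algorithm always processes the top of a stack whose class is at most that of $\frontjob$. Whenever some chunk $c$ of class $k_c$ is waiting, the algorithm processes work of class $\le k_c$, and the total such work during $[r_c,\ts]$ is enough to pay $\alpha_c$.

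We expect the main obstacle to be chunks whose release time $r_c$ lies in the middle of the job's lifetime. There, the optimal solution may already have processed later operations in a different order, and the algorithm's class for $c$ is unknown until $r_c$. This breaks the usual argument that compares against jobs present since $r_j$. We would try first to define the LP with release times $r_{j(c)}$ for all chunks of a job, since the LP drops precedence anyway. Then we bound the algorithm's side using the fact that the previous chunk, of strictly smaller class, was occupying the machine in a way that can be charged with a geometric sum over classes, costing only a constant factor.
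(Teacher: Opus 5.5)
Your proposal has a genuine gap, and the most concrete one is the relaxation itself. The LP is never pinned down: as written, the objective sums processed fractions over $t\le\tau$, and the ``volume-per-class'' constraints are only described informally. More importantly, any LP whose objective is the fractional count $\sum_c p_c(\tau)/p_c$ and whose constraints only express capacity and volume is too weak for this lemma. The reason is that a large chunk can absorb a small excess $e$ at fractional cost $e/p_c$.

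For example, take single-operation jobs (so $m_1=m_2=1$) of sizes $K2^i$, $i=0,\dots,K$. Release them in order of decreasing size at times $\ell_K<\dots<\ell_0$, chosen so that the volume released in $[\ell_i,\tau]$ exceeds $\tau-\ell_i$ by exactly $2^i$. Every schedule must keep, for each $i$, alive jobs of size at most $K2^i$ with total size at least $2^i$. Hence the smallest job is alive, and consecutive alive indices differ by $O(\log K)$, so $|J(\tau)|=\Omega(K/\log K)$. On the other hand, leaving a $1/K$ fraction of every job unfinished is fractionally feasible and has value at most $2$. So no feasible dual of value $\Omega(|J(\tau)|/m_2)$ exists for your LP.

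This is exactly why the paper uses the knapsack-cover constraints $\sum_{c\in S}\min(p_c,e(S))\,x_c\ge e(S)$ over all chunk sets $S$. These use $\ell_S=\min_{c\in S}r_{j(c)}$, as you correctly anticipate in your last paragraph, and the dual has variables $y_S$ on sets rather than $\alpha_c,\beta_t$. The truncation is used essentially in the feasibility proof, where the first class of each group contributes only $\min(1,p_{c^*}/e(S_{<k_1}))\le 1$. Likewise, replacing $p_c$ by $2^{k_c}$ inside the LP is not a legitimate relaxation step. In the paper, $m_2$ enters only through $p_{c^*}\le m_2 2^{k^*+1}$ in the dual constraints and through the loss $2m_2 2^{k+1}$ in Excess Lemma~2.

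The dual-fitting part also rests on a false property of the algorithm. It is not true that while a chunk of class $k_c$ waits, the algorithm only processes work of class at most $k_c$. If $|\afull(t)|<|J(t)|/4$, a full chunk of smaller class stays in $\afull$ while the top of the stack runs. For instance, release jobs whose first chunks have classes $8,7,6,5$ in quick succession; each is pushed. A subsequent class-$3$ chunk is then not pushed and waits while class $5$ is processed. The strict-classes lemma only limits this to a single exception. The paper pays for it in Excess Lemma~2 by isolating the unique chunk of class $\le k$ active at $t_{>k}$ and bounding all later work on its job by the geometric sum $2m_2 2^{k+1}$, which is the idea you gesture at in your last paragraph.

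Conversely, your ``$\apart$ dominates'' case never arises. The invariant $|\afull(t)|\ge |J(t)|/4-1$ holds at all times, so it suffices to charge full chunks.

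To turn the sketch into a proof, three ingredients are missing:
\begin{itemize}
\item the excess bounds for the sets $S_{<k}$ and $S_{\le k}$ defined via $t_{\ge k}$ and $t_{>k}$;
\item the reduced-instance assumption guaranteeing that every full chunk of class $k$ at $\tau$ became active before $t_{\ge k}$ (without it, the argument that no chunk of class $<k$ is active at $t_{\ge k}$, on which Excess Lemma~1 rests, breaks down);
\item explicit dual values, namely $y_{S_{<k}}=|F(k)|/(3m_2\,e(S_{<k}))$ if $|F(k)|<6m_2$ and $y_{S_{\le k}}=1/(m_2 2^k)$ otherwise, whose feasibility is checked by grouping crucial classes according to $|F(k)|\in[2^\ell,2^{\ell+1})$.
\end{itemize}
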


Our proof relies on a linear programming relaxation for the problem of minimizing the number of alive chunks at time $\ts$. It follows the same structure as the LP in \cite{GuptaKPW26} for minimizing the number of alive jobs at time $\ts$.
Let $C$ denote the set of all chunks. 
We denote with $j(c)$ the job to which chunk $c$ belongs.
For a set $S \subseteq C$ of chunks, define the excess at time $\ts$ as $e(S) := \max(0, p(S) - (\ts- \ell_S))$ where $\ell_S = \min_{c \in S} r_{j(c)}$ and $p(S) = \sum_{c \in S} p_c$. 
For each chunk $c$, let $x_{c}$ be a variable indicating whether chunk $c$ is alive at time $\ts$.
While these quantities describe the state at time $\ts$, we do not add $\ts$ to the notation to keep it light.
The local chunk LP for time $\ts$ can be written as follows. 
\begin{alignat*}{3}
(\hypertarget{localchunklp}{\chunklp(\ts)}) \quad \min \; & \sum_{c \in C: \ts \geq r_{j(c)}} x_c \\
    \text{s.t.} \;  & \sum_{c \in S} \min(p_c, e(S)) \cdot x_c \geq e(S) \quad&& \forall S \subseteq C \\
    & x_c \geq 0 \quad&& \forall c \in C
\end{alignat*}

Intuitively, the LP operates on a surrogate instance $J_c$ that 
treats each chunk $c \in C$ as an individual job with processing 
time $p_c$ and release date $r_{j(c)}$. That is, all chunks $c$ that 
belong to job $j(c)$ are released at the same time and can be processed 
in any order. For the objective function of minimizing the number of alive 
chunks at time $\ts$, this is clearly a relaxation. This is a similar 
idea to the analysis of Operations-SRPT in~\Cref{sec:operations-srpt}. 
It is easy to verify that the covering constraints of the LP are valid 
constraints for the problem of minimizing the number of alive jobs at time 
$\ts$, which implies the following lemma, whose proof is deferred to~\Cref{app:chunk-lp}.

\begin{restatable}{lemma}{factLPRelax}
\label{fact:lp}
The optimal objective value of (\hyperlink{localchunklp}{$\chunklp(\ts)$}) is at most $|J^*_c(\ts)|$.
\end{restatable}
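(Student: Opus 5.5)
We prove the statement by exhibiting a feasible solution of (\hyperlink{localchunklp}{$\chunklp(\ts)$}) whose objective value is exactly $|J^*_c(\ts)|$. The candidate comes from a fixed optimal schedule for the original instance: set $x_c := 1$ if chunk $c$ is alive at time $\ts$ in that schedule, i.e., $r_{j(c)} \le \ts$ and $c$ is not yet fully processed by time $\ts$, and $x_c := 0$ otherwise. Chunks with $r_{j(c)} > \ts$ do not appear in the objective, so the objective value is precisely $|J^*_c(\ts)|$. Nonnegativity is immediate.

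It remains to verify the covering constraint for an arbitrary set $S \subseteq C$. If $e(S) = 0$, the constraint is trivial. Otherwise, set $\ell_S = \min_{c\in S} r_{j(c)}$ and observe the following.
\begin{itemize}
\item No chunk of $S$ can receive processing before $\ell_S$, since its job is not yet released.
\item The machine processes at most one unit per unit of time, so during $[\ell_S,\ts]$ the optimal schedule processes at most $\ts-\ell_S$ units of work from chunks in $S$.
\end{itemize}
The remaining work of $S$ at time $\ts$ is therefore at least $p(S)-(\ts-\ell_S) = e(S)$. Chunks with $x_c = 0$ have zero remaining work. This covers the case $r_{j(c)} > \ts$ too: such a chunk was never processed, but if it lay in $S$ then $\ell_S \le r_{j(c)}$, and its entire $p_c$ is counted, so the accounting still needs care.

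To avoid that subtlety, I would handle it directly. Let $S' = \{c\in S : x_c = 1\}$ and $S'' = \{c \in S: r_{j(c)} > \ts\}$. Every chunk of $S$ outside $S' \cup S''$ is fully processed by $\ts$.
\begin{itemize}
\item If some $c \in S''$ has $p_c \ge e(S)$, then $\min(p_c,e(S))\,x_c$ is $0$ since $x_c=0$. That does not help directly.
\item Instead, I would simply set $x_c := 1$ also for chunks with $r_{j(c)} > \ts$. This costs nothing in the objective, because those chunks are excluded from the sum.
\end{itemize}
With this choice, every chunk of $S$ that is not completed by $\ts$ has $x_c=1$, and the total remaining work of these chunks is at least $e(S)$.

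Now distinguish two cases.
\begin{itemize}
\item If some such chunk has $p_c \ge e(S)$, its term alone contributes $e(S)$.
\item Otherwise, every such chunk has $\min(p_c,e(S)) = p_c$, which is at least its remaining work. Summing these terms gives at least $e(S)$.
\end{itemize}
In both cases the covering constraint holds, so the solution is feasible.

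The main point requiring care is this handling of unreleased chunks and the truncation $\min(p_c,e(S))$. Both are resolved as above: we use the freedom in the objective for unreleased chunks, and we apply the case split on whether some single chunk already covers $e(S)$. Since a feasible solution of value $|J^*_c(\ts)|$ exists, the optimal value of (\hyperlink{localchunklp}{$\chunklp(\ts)$}) is at most $|J^*_c(\ts)|$.
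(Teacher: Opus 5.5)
Your main argument is correct and is essentially the paper's. The paper also takes the indicator of chunks alive at $\tau$ in the optimal schedule, and it also uses that chunks of $S$ can only be processed during $[\ell_S,\tau]$. It phrases this by contradiction: a violated constraint forces $\min(p_c,e(S))=p_c$ for every alive $c\in S$, and then the chunks of $S$ completed by $\tau$ would have volume exceeding $\tau-\ell_S$. That is equivalent to your case split. The paper's proof tacitly considers only chunks of jobs released by $\tau$. This matches its dual, which has constraints only for such chunks, and the sets $S_{<k}$, $S_{\le k}$ used later. Under that reading your proof is complete.

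The step that fails is your patch for unreleased chunks. Setting $x_c:=1$ for $r_{j(c)}>\tau$ does not make the literal LP feasible. Take $S=\{c\}$ for such a chunk: then $\ell_S>\tau$, so $e(S)=p_c+(r_{j(c)}-\tau)>p_c=\min(p_c,e(S))\cdot x_c$, and the constraint is violated. The faulty inference is ``at most $\tau-\ell_S$ units of $S$ are processed by $\tau$''. When $\ell_S>\tau$ this bound is negative while the true amount is $0$, so the remaining work is $p(S)<e(S)$, not at least $e(S)$.

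You have two fixes. You can drop the patch and restrict to released chunks as the paper does. Alternatively, since these variables are free in the objective, set $x_c:=1+(r_{j(c)}-\tau)/p_c$ for unreleased chunks with $p_c>0$. If $\ell_S>\tau$, then $e(S)>p(S)\ge p_c$, so $\sum_{c\in S}\min(p_c,e(S))\,x_c=p(S)+\sum_{c\in S}(r_{j(c)}-\tau)\ge p(S)+\ell_S-\tau=e(S)$. If $\ell_S\le\tau$, your argument goes through unchanged, since $x_c\ge 1$ on every uncompleted chunk of $S$.
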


In order to compare $|J(\ts)|$ to the optimal objective value of (\hyperlink{localchunklp}{$\chunklp(\ts)$}), we use a dual fitting analysis. 
The dual of (\hyperlink{localchunklp}{$\chunklp(\ts)$}) with variables $y_S$ can be written as follows:
\begin{alignat*}{3}
(\hypertarget{localchunkdp}{\chunkdp(\ts)}) \quad \max \;  &  \sum_{S} e(S) \cdot y_{S} \\
    \text{s.t.} \;  & \sum_{S: c \in S} \min(p_c, e(S)) \cdot y_{S} \leq 1 \quad&& \forall c \in C \colon r_{j(c)} \le \ts \\
    & y_{S} \geq 0 \quad&& \forall S \subseteq C
\end{alignat*}

\subsection{Dual Fitting}

To prove~\Cref{lemma:local_chunk_bound} via dual fitting, we 
construct a feasible solution to (\hyperlink{localchunkdp}{$\chunkdp(\ts)$}) with an objective value of $\Omega(|J(\ts)| / \ops_2)$. 
Therein, a crucial step is identifying subsets $S \subseteq C$ of chunks with a large excess $e(S)$, as the corresponding dual variable $y_S$ 
may have a large contribution to the objective. 
In~\Cref{sec:excess-lemmas}, we identify such sets and give lower bounds on their excess. Then, 
we construct a dual solution, 
lower bound its objective value by $\Omega(|J(\ts)| / \ops_2)$, and prove its feasibility. 

\subsubsection{Excess Lemmas}
\label{sec:excess-lemmas}

We start by proving lower bounds on the excess of certain chunk sets at time $\ts$. 
A natural strategy, which is also used in~\cite{GuptaKPW26}, for proving excess bounds on a \emph{job level}, is to identify a set of jobs $J'$ and an interval $I = [\ell,\ts]$ such that (i) all jobs in $J'$ are released during $I$, (ii) the algorithm only works on jobs in $J'$ during $I$, and (iii) the set $J'$ has a significant remaining volume $p_{J'}(\ts) := \sum_{j \in J'} p_j(\ts)$. 
Such $J'$ and $I$ then imply for the set of \emph{jobs} $J'$ an excess lower bound $e(J') := p(J') - (\ts-\ell) \ge p_{J'}(\ts)$.

To prove excess lower bounds for \emph{sets of chunks} $S$, we would like to follow this strategy but are facing an additional challenge: The algorithm treats each chunk $c \in S$ as its own job with release date $r_c$. However, to bound the excess $e(S)$ with the strategy above, we need an interval $I$ such that all \emph{jobs} $j(c)$ with $c \in S$ are released during $I$. Hence, we have to handle the disconnect between $r_c$ and $r_{j(c)}$. In the following, we provide proofs for excess bounds that take this challenge into account.

To this end, let $F(k) \subseteq J(\ts)$ denote the set of full chunks of class $k$ active at time $\ts$. Call class $k$ \emph{crucial} if $F(k) \neq \emptyset$.

\begin{definition}[$t_{\geq k}$, $S_{<k}$]
\label{def:geqk}
Let $t_{\geq k}$ be the last time before $\ts$ when the algorithm processed a chunk of class $\geq k$. Let $S_{<k}$ denote the set of chunks $c$ with the following properties:
\begin{enumerate}
    \item $c$ is of a class $< k$ and
    \item $j(c)$, the job to which $c$ belongs, is released during $[t_{\geq k}+1,\ts]$.
\end{enumerate}    
\end{definition}
We prove our excess bounds for the sets $S_{<k}$ under the following assumption.

\begin{restatable}{assumption}{asmReduced}\label{asm:reduced-property}
For each $k$, assume that every chunk $c \in J(\ts)$ that becomes active after $t_{\geq k}$ is of class $<k$.
\end{restatable}

In~\Cref{sec:reduced-instance}, we show that this is without loss of generality for our goal of proving~\Cref{lemma:local_chunk_bound} by exploiting the concept of a reduced instance as introduced in~\cite{GuptaKPW26}. 

We also rely on the following lemma, which formulates a crucial property of the algorithm regarding the chunk $c$ that is processed at a time $t$. In particular, such a chunk has the smallest class among all active chunks with possibly a single exception. We defer the proof to~\Cref{app:properties}.

\begin{restatable}{lemma}{lemStrictClasses}\label{lem:strict-classes}
	Let $c$ be the chunk of class $k(t)$ processed during $[t,t+1]$. 
	\begin{enumerate}
		\item If there exists a chunk $c'$ of class $< k(t)$ in $J(t)$, then there cannot be another chunk of class $\leq k(t)$ in $J(t)\setminus\{c,c'\}$.
		\item If there exists another chunk $c' \neq c$ of class $k(t)$ in $J(t)$, then there cannot be any chunk of class $<k(t)$ in $J(t)$.
	\end{enumerate}
\end{restatable}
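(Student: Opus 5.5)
The plan is to prove both parts at once by induction over the integer time steps. The induction maintains structural invariants of the pair $(\afull,\apart)$ as they stand after Step~1 of each iteration. The chunk processed during $[t,t+1]$ is the active chunk of $\topjob(t)$, and its class is $k(t)=\curclass_{\topjob(t)}(t)$. By the observation on chunks, $\curclass_j$ equals the class of $j$'s active chunk while $j$ is active. Hence it suffices to compare the current classes of the jobs in $\apart(t)$ and $\afull(t)$ with $k(t)$.

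\textbf{Invariant 1 (stack order).} Reading $\apart$ from top to bottom, the current classes are strictly increasing. This is preserved by each operation on the stack.
\begin{itemize}
\item A push in Step~1 happens only if $\frontjob$ has strictly smaller class than $\topjob$.
\item A job's current class changes only in Step~3, and at that moment the job is removed from $\apart$.
\item Pops remove only the top element.
\end{itemize}
Consequently every job in $\apart(t)\setminus\{\topjob(t)\}$ has class $>k(t)$. So every chunk of class $\le k(t)$ other than $c$ must lie in $\afull(t)$.

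\textbf{Invariant 2 (queue).} After Step~1 at time $t$, at most one job in $\afull(t)$ has class $<k(t)$. If such a job exists, then no other job in $\afull(t)$ has class $\le k(t)$. I would argue this through the loop's termination condition. If the loop stops because $\frontjob$ has class $\ge k(t)$, then every job in $\afull$ has class $\ge k(t)$, and only part~2 needs checking. For part~2, two jobs of class exactly $k(t)$ can coexist only if the smaller-class exception is absent, which again comes down to the loop condition. The remaining case is that the loop stops because $|\afull(t)|<|J(t)|/4$. Here I would use induction: at time $t-1$ the invariant held, and between two steps only two things can enter $\afull$. These are newly released jobs, and the single job $\topjob(t-1)$ reinserted in Step~3. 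I would try to show that any small-class job entering $\afull$ is pushed at the next Step~1 unless it is the unique such job.

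Granting the invariants, the lemma follows. For part~1, if $c'$ has class $<k(t)$, then by Invariant~1 it lies in $\afull(t)$. By Invariant~2 no further chunk of class $\le k(t)$ lies in $\afull(t)$, and by Invariant~1 none lies in $\apart(t)\setminus\{\topjob(t)\}$. For part~2, a second chunk of class $k(t)$ must be in $\afull(t)$, and Invariant~2 then excludes any chunk of class $<k(t)$ anywhere.

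The main obstacle is the threshold case $|\afull(t)|<|J(t)|/4$ of Invariant~2. Read literally, the algorithm seems to allow several low-class jobs to accumulate in $\afull$ while the stack is large. Suppose two small jobs are released at consecutive times while $|\afull|$ stays below $|J|/4$; nothing forces the first to be pushed before the second arrives. So this step cannot be settled from Invariant~1 alone. I would first try to exploit that the stack can only have grown through pushes made while $|\afull|\ge |J|/4$, which might bound how long $\afull$ can remain small relative to $|J|$. If that fails, the statement must be weakened: either restrict it to times when $|\afull(t)|\ge|J(t)|/4$, or let the exceptional set be all of $\afull(t)$, which has size $<|J(t)|/4$. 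The dual-fitting argument can probably absorb this weaker form.
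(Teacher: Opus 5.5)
\paragraph{Verdict: there is a gap at the key step.} Your Invariant~1 is the monotone-stack fact, and reducing both parts to a statement about $\afull(t)$ is sound; part~2 even follows directly from part~1. However, the proposal stops at the only nontrivial step. That step is the case where the Step~1 loop halts because $|\afull(t)|<|J(t)|/4$ while a chunk of class $<k(t)$ sits in $\afull(t)$. You leave this case unproven, suggest the lemma may fail there, and propose weakening it, so as written this is not a proof.

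Your worry is unfounded. The threshold held at the last relevant push time, and that caps how many new chunks can pile up in $\afull$. Your per-arrival induction would also not suffice by itself. When the top is popped, a chunk already sitting in $\afull$ can become ``small'' relative to the new top without re-entering $\afull$, and your invariant does not track this.

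\paragraph{The missing idea.} Anchor the count at the chunk $c'$ directly below $c$ in the stack. It exists, because $|\afull(t)|\ge 1$ and $|\afull(t)|<|J(t)|/4$ force $|\apart(t)|\ge 2$. Let $t'$ be the time $c'$ was pushed; at that time $|\afull(t')|\ge |J(t')|/4$.
\begin{itemize}
\item When pushed, $c'$ had minimum class in $\afull(t')$. Every chunk above $c'$ on the stack has class $<k_{c'}$. Hence no chunk of $\afull(t')\setminus\{c'\}$ can be pushed while $c'$ stays on the stack, and all of them remain in $\afull(t)$.
\item Only $c$ lies above $c'$. Hence every chunk of $J(t)$ that became active after $t'$, other than $c$, lies in $\afull(t)$.
\end{itemize}
Let $C'$ be the set of chunks of $J(t)$ that became active after $t'$. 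Then
\[
|\afull(t)|=|\afull(t')|-1+|C'|-1 \qquad\text{and}\qquad |J(t)|\le |J(t')|+|C'|.
\]
Now consider $c$ and any chunk of class $\le k(t)<k_{c'}$ in $\afull(t)$. Each must lie in $C'$. Otherwise it would have been in $\afull(t')$ with class below $k_{c'}$, contradicting that $c'$ was $\frontjob$. So two such chunks besides $c$ give $|C'|\ge 3$, and then
\[
|\afull(t)|\ge \tfrac14|J(t')|+\tfrac14|C'|\ge \tfrac14|J(t)|.
\]
This contradicts the assumption that the loop halted on the threshold. The argument works directly at the fixed time $t$; it needs no induction over time and no weakening of the statement.
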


With~\Cref{asm:reduced-property} and~\Cref{lem:strict-classes} in place, we are ready to prove excess lower bounds.

\begin{lemma}[Excess Lemma 1]\label{lem:excess-1}
    If class $k$ is crucial, then
	$e(S_{<k}) \geq \sum_{k'< k} \sum_{c \in F(k')} p_{c}$.
\end{lemma}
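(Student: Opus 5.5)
We follow the job-level excess strategy sketched above, with interval $I=[t_{\geq k}+1,\ts]$ and the chunk set $S_{<k}$. Since $\ell_{S_{<k}} \ge t_{\geq k}+1$, it suffices to show
\[
p(S_{<k}) - (\ts - t_{\geq k}-1) \;\ge\; \sum_{k'<k}\sum_{c\in F(k')} p_c .
\]
The plan has three steps. First, every unit processed during $[t_{\geq k}+1,\ts]$ goes to a chunk in $S_{<k}$. Second, every chunk in $F(k')$ with $k'<k$ belongs to $S_{<k}$ and is still full, so it is untouched at $\ts$. Third, we charge volume: $p(S_{<k})$ is at least the work done in the interval plus the untouched full chunks, which gives the inequality.

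For the first step, take any $t\in[t_{\geq k}+1,\ts)$. By definition of $t_{\geq k}$, the processed chunk $c$ has class $<k$. We must show $j(c)$ was released after $t_{\geq k}$.

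Suppose instead that $r_{j(c)}\le t_{\geq k}$. Then $j(c)$ was alive at time $t_{\geq k}$. At that time the algorithm processed a chunk $c^\ast$ of class $\ge k$, while $j(c)$'s active chunk then had class $\le k_c<k$, since chunk classes increase along a job (\Cref{fact:monotone-chunks}). Because class $k$ is crucial, some full chunk $f\in F(k)$ exists at $\ts$.

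\begin{itemize}
\item If $f$ became active after $t_{\geq k}$, this contradicts \Cref{asm:reduced-property}.
\item Otherwise $f$ is alive at $t_{\geq k}$ with class exactly $k$ (it stays full, so no processing occurs in between). Then at $t_{\geq k}$ there is a chunk of class $<k\le k(t_{\geq k})$ together with a distinct chunk $f$ of class $\le k(t_{\geq k})$. This contradicts \Cref{lem:strict-classes}(1), unless $f$ is the processed chunk itself. But a processed chunk is not full at $\ts$, unless it is being processed at $\ts$, which is a boundary case to handle.
\end{itemize}

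Hence $r_{j(c)}\ge t_{\geq k}+1$, and the chunk processed at every step of the interval lies in $S_{<k}$.

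For the second step, take $c\in F(k')$ with $k'<k$. The same dichotomy applies: by \Cref{asm:reduced-property} or \Cref{lem:strict-classes}, its job must have arrived after $t_{\geq k}$, so $c\in S_{<k}$.

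The third step is the charging. The machine is busy throughout the interval. Any idle step would mean no jobs are alive, which contradicts that $f$ is alive, or $f$ arrived later, which is again excluded. So the chunks of $S_{<k}$ received exactly $\ts-t_{\geq k}-1$ units of work (adjusting by one for the off-by-one convention on $t_{\ge k}$).

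The full chunks in $F(k')$ received no work. Their entire volume $p_c$ therefore remains on top of the processed volume, and $p(S_{<k})\ge (\ts-t_{\geq k}-1)+\sum_{k'<k}\sum_{c\in F(k')}p_c$. This immediately yields the bound on $e(S_{<k})$.

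The main obstacle is the first step: ruling out that the algorithm, during the interval, works on chunks of class $<k$ whose jobs were released before $t_{\geq k}$. This is precisely the gap between $r_c$ and $r_{j(c)}$. I would attack it first through \Cref{lem:strict-classes}, applied at time $t_{\geq k}$ together with the crucial full chunk $f$. If that falls short, I would fall back on the reduced-instance structure behind \Cref{asm:reduced-property}.
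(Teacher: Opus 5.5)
Your proposal is correct and follows the paper's proof. The paper first proves separately that no chunk of class $<k$ is active at time $t_{\geq k}$, using the reduced-instance assumption on a crucial chunk $f\in F(k)$ together with \Cref{lem:strict-classes}; it then uses \Cref{fact:monotone-chunks} to place all relevant chunks in $S_{<k}$ and charges volume exactly as you do, whereas you inline that contradiction for each chunk. The boundary case you flag is vacuous: $t_{\geq k}\le \ts-1$, so the chunk processed during $[t_{\geq k},t_{\geq k}+1]$ has received work by time $\ts$ and cannot be full. No off-by-one adjustment is needed either, since exactly $\ts-t_{\geq k}-1$ units are processed in $[t_{\geq k}+1,\ts]$, which matches $\ts-\ell_{S_{<k}}\le \ts-t_{\geq k}-1$.
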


\begin{proof}
    We first show that at time $t_{\geq k}$, there cannot exist 
    any active chunk of class $<k$.  
    To this end, we first argue that there is at least 
    one full chunk of class $k$ at time $t_{\geq k}$.
    Let $c \in F(k)$, which exists because class $k$ is crucial.
	If~$c$ becomes active at or after time $t_{\geq k}+1$, then it must be of class $<k$ by~\Cref{asm:reduced-property}, a contradiction to $c \in F(k)$.
		We conclude that every chunk in $F(k)$ became active before $t_{\geq k}$, and there exists at least one such chunk $c \in F(k)$.
        By the definition of time $t_{\geq k}$, 
          the algorithm must have processed a chunk $c'$ 
          of class $\geq k$ during $[t_{\geq k},t_{\geq k}+1]$.
		Since $c$ is full at time $\ts$, $c' \neq c$, and thus, \Cref{lem:strict-classes} implies that there cannot exist a chunk of class $<k$ at time $t_{\geq k}$.		
        
	Now, let $c$ be a chunk of class $k'<k$ that becomes active during $[t_{\geq k}+1,\ts]$. 
    We next argue that the job $j=j(c)$ corresponding 
    to $c$ must have been released after $t_{\geq k}+1$. 
	To see this, assume that $j$ was released before $t_{\geq k}$ and hence an earlier chunk $c'$ of $j$ is active at time $t_{\geq k}$. 
	By \Cref{fact:monotone-chunks}, $c'$ must also be of class $<k$. This is a contradiction, as we have established that at time $t_{\geq k}$ there cannot exist any active chunk of class $<k$.
	Hence $j$ must be released after $t_{\geq k}+1$.
	
    Thus, all chunks $c$ of class $<k$ that became active during $[t_{\geq k}+1,\ts]$ belong to jobs $j(c)$ that are released during $[t_{\geq k}+1,\ts]$. 
	By definition, all of these chunks belong to $S_{<k}$.
	Moreover, by the definition of time $t_{\geq k}$, the algorithm only works on those chunks during $[t_{\geq k}+1,\ts]$. 
	By the definition of excess,
	$e(S_{<k}) = \max\{0,p(S_{<k}) - (\ts - (t_{\geq k}+1))\} $.
	Since chunks $c \in F(k')$ for $k'<k$ are full at time $\ts$ and belong to $S_{<k}$,  
	we conclude that 
	\[
	    e(S_{<k}) \geq \sum_{k'< k} \sum_{c \in F(k')} p_{c} \ ,
	\]
	which completes the proof.     
\end{proof}

\begin{definition}[$t_{>k}, S_{\le k}$]
Let $t_{>k}$ be the last time before $\ts$ when the algorithm processed a chunk of class $> k$.
Let $S_{\le k}$ denote the set of chunks $c$ with the following properties.
\begin{enumerate}
    \item $c$ is of a class $\le k$, and
    \item $j(c)$, the job to which $c$ belongs, is released during $[t_{>k}+1,\ts]$.
\end{enumerate}
\end{definition}

\begin{lemma}[Excess Lemma 2]\label{lem:excess-2}
If class $k$ is crucial, then
$e(S_{\le k}) \geq \sum_{k' \leq k} \sum_{c \in F(k')} p_{c} - 2 \cdot  \ops_2 \cdot 2^{k+1}$. 
\end{lemma}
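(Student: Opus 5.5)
The plan is to mirror the proof of \Cref{lem:excess-1}, with class $\le k$ replacing class $<k$ and $t_{>k}$ replacing $t_{\geq k}$. The new feature is that at time $t_{>k}$ there may still be active chunks of class $\le k$. I will show that these leftover chunks can carry only a small amount of remaining volume, and that this amount is exactly what the correction term $2 \cdot \ops_2 \cdot 2^{k+1}$ pays for.

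\textbf{Step 1 (state at time $t_{>k}$).} By definition, the algorithm processes some chunk $c'$ of class $>k$ during $[t_{>k},t_{>k}+1]$. I will apply \Cref{lem:strict-classes} to $c'$, with $k(t_{>k}) > k$.
\begin{itemize}
\item If some chunk of class $<k(t_{>k})$ exists, part 1 says there is at most one such chunk, and no further chunk of class $\le k(t_{>k})$.
\item Hence at time $t_{>k}$ at most one active chunk $\hat c$ of class $\le k$ exists.
\end{itemize}
By \Cref{obs:chunk-size}, $p_{\hat c} \le \ops_2 \cdot 2^{k+1}$. I will also use \Cref{asm:reduced-property} (applied with index $k+1$, since $t_{>k}=t_{\ge k+1}$): every chunk in $J(\ts)$ that becomes active after $t_{>k}$ has class $\le k$. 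This is needed only to place the relevant chunks correctly.

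\textbf{Step 2 (which chunks get processed in $[t_{>k}+1,\ts]$).} During this interval the algorithm works only on chunks of class $\le k$. Each such chunk is of one of two kinds.
\begin{itemize}
\item It becomes active during the interval. Then, as in \Cref{lem:excess-1}, its job is released after $t_{>k}$, unless it is a later chunk of the job of $\hat c$.
\item It is $\hat c$ itself.
\end{itemize}
The second possibility is where the exception arises. By \Cref{fact:monotone-chunks}, any earlier chunk of a job active at $t_{>k}$ with a chunk of class $\le k$ also has class $\le k$, so the only job released before $t_{>k}+1$ that contributes is $j(\hat c)$. Its chunks of class $\le k$ are $\hat c$ plus possibly subsequent chunks of class at most $k$. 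Since chunk classes strictly increase, their total size is bounded by a geometric sum:
\[
\sum_{k''\le k} \ops_2 2^{k''+1} \le 2 \ops_2 2^{k+1}.
\]
Let $W$ denote the processing done on $j(\hat c)$'s class-$\le k$ chunks during the interval. Then $W \le 2\ops_2 2^{k+1}$.

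\textbf{Step 3 (conclude).} All other work in $[t_{>k}+1,\ts]$ goes to chunks in $S_{\le k}$. This gives
\[
\ts-(t_{>k}+1) \le \bigl(p(S_{\le k}) - \text{remaining volume of } S_{\le k} \text{ at } \ts\bigr) + W .
\]
The full chunks in $F(k')$ with $k'\le k$ that do not belong to $j(\hat c)$ are in $S_{\le k}$ and are entirely unprocessed, because they became active after $t_{>k}$ by the reasoning of Step 1 in \Cref{lem:excess-1}, using that $k$ is crucial. Therefore
\[
e(S_{\le k}) \ge \sum_{k'\le k}\sum_{c\in F(k')} p_c - W - (\text{possible full chunk of } j(\hat c)).
\]
The last term is also absorbed by the geometric bound, which yields the claim.

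\textbf{Main obstacle.} The delicate part is the bookkeeping of Step 2. One must argue that exactly one exceptional job can interfere, and charge all of its class-$\le k$ chunks, both the processed part and any full chunk it has at $\ts$, against a single $2\ops_2 2^{k+1}$ budget. The factor $2$ should come from the geometric sum over strictly increasing chunk classes.
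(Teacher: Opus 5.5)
Your proposal follows the paper's proof essentially step for step: uniqueness of the class-$\le k$ chunk $\hat c$ at $t_{>k}$ via \Cref{lem:strict-classes}, then \Cref{fact:monotone-chunks} to show that every other relevant job is released after $t_{>k}$, then a geometric bound over the strictly increasing classes of the chunks of $j(\hat c)$. Your explicit charging of a possible full chunk of $j(\hat c)$ at $\ts$ to that same budget also covers a case that the paper's intermediate inequality $e(S_{\le k})+\Delta\ge\sum_{k'\le k}\sum_{c\in F(k')}p_c$ does not account for.

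Two touch-ups are needed. First, the other full chunks lie in $S_{\le k}$ because of the uniqueness of $\hat c$ from your Steps 1--2, not because of Step 1 of \Cref{lem:excess-1} or crucialness of $k$. Second, your Step 3 inequality, exactly like the paper's, tacitly assumes the machine is never idle during $[t_{>k}+1,\ts]$; without this the stated bound can fail. You can secure it w.l.o.g.\ by discarding all jobs released before the algorithm's last idle time before $\ts$, since this leaves the algorithm's later schedule unchanged and cannot increase the optimum.
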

\begin{proof}
	We first establish that at time $t_{>k}$ there can be at 
	most one active chunk of class $\leq k$. 
	Let $c$ be such a chunk belonging to a job $j = j(c)$ (if it exists; otherwise the claim is trivial).
	By definition of $t_{>k}$, the algorithm processes a chunk $c'$ of class $>k$ during $[t_{>k},t_{>k}+1]$. This immediately implies that $c$ is full at $t_{>k}$.
	By \Cref{lem:strict-classes} and the existence of $c$, there cannot exist any additional active chunk of class $<k$ at time $t_{>k}$. Hence, $c$ is the only (full or partial) active chunk of class $\leq k$ at time $t_{>k}$.
	
	We next claim that all chunks of class $\leq k$ that do not belong to job $j$ and that became active during $[t_{>k}+1,\ts]$ must belong to jobs released after $t_{>k}$.
	To see this, let $c'$ be a chunk of class $\leq k$ that became active during $[t_{>k}+1,\ts]$ that belongs to a job $j'$ released before $t_{>k}+1$. 
	Thus, there must exist an earlier chunk $c''$ of $j'$ that is active at time $t_{>k}$. 
	By \Cref{fact:monotone-chunks}, $c''$ must also be of class $\leq k$. 
	This is a contradiction, since $c$ is the only such chunk at time $t_{>k}$.
	Thus, except chunks of job $j$, all chunks $c$ of class $\leq k$ that became active during $[t_{>k}+1,\ts]$ belong to jobs $j(c)$ that are released during $[t_{>k}+1,\ts]$. 
	By definition, all of these chunks belong to $S_{\le k}$.

    We have established that there can be at most one chunk $c$ of class $\leq k$ that is full at time $t_{>k}$, which is not part of $S_{\le k}$. Let $j = j(c)$ denote the job of this chunk.
    By choice of $t_{>k}$, the algorithm only works on chunks in $S_{\le k}$ and on chunks belonging to $j$ during $[t_{>k}+1,\ts]$. Let $\Delta$ denote the amount of time during $[t_{>k}+1,\ts]$ which the algorithm spends working on $j$. Then, by definition of excess,
	\[
	    e(S_{\le k}) + \Delta \geq \sum_{k'\leq k} \sum_{c \in F(k')} p_{c} \ .
	\]
    It remains to bound $\Delta$. Note that whenever the algorithm works on job $j$ during $[t_{>k}+1,\ts]$, then it either works on the chunk $c$ of class $\leq k$ that was full at time $t_{>k}$ or on later chunks that become active due to the completion of $c$. Let $c_1 \le \ldots \le c_d$ denote these chunks indexed in the order they are processed. Since all of these chunks are processed during $[t_{>k}+1,\ts]$, they have to be of class $\leq k$. Furthermore, by~\Cref{fact:monotone-chunks}, the classes of these chunks are strictly increasing.
    Since each $c_i$ is composed of at most $\ops_2$ operations, the total volume of $c_i$ is at most $\ops_2 \cdot 2^{k_{c_i}+1}$. We bound $\Delta$ by summing over the volume of all these $c_i$'s:
    $$
    \Delta \le \sum_{i = 1}^d \ops_2 \cdot 2^{k_{c_i}+1} \le \ops_2 \cdot \sum_{i=0}^k 2^{i+1} \le 2 \cdot \ops_2 \cdot 2^{k+1}. 
    $$
    We can conclude with the excess bound
    \[
	    e(S_{\le k}) + 2 \cdot \ops_2 \cdot 2^{k+1} \geq \sum_{k'\leq k} \sum_{c \in F(k')} p_{c} \ ,
	\]    
	which concludes the proof.
\end{proof}

\subsubsection{Construction and Analysis of the Dual Solution}

We construct our dual solution based on the sets $S_{<k}$ and $S_{\le k}$ as defined in the previous section.
For every crucial class $k$, we define
\begin{itemize}
	\item $y_{S_{<k}} := \frac{|F(k)|}{3\ops_2 \cdot e(S_{<k})}$ if $|F(k)| < 6\ops_2$,
	\item $y_{S_{\le k}} := \frac{1}{\ops_2 2^k}$ if $|F(k)| \geq 6\ops_2$,
\end{itemize}
and for all other sets $S$ we define $y_S := 0$.

\paragraph{Dual Objective Value.} Next, we lower bound the objective value of the dual solution in terms of $\frac{J(\ts)}{\ops_2}$ by proving the following lemma.

\begin{lemma}
\label{lem:dual:objective}
    We have $\sum_{S} e(S) y_{S} 
    \geq \frac{|J(\ts)|}{12\ops_2} - \frac{1}{3\ops_2}$.
\end{lemma}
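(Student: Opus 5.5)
The plan is to bound the dual objective class by class. I will show that every crucial class $k$ contributes at least $\frac{|F(k)|}{3\ops_2}$. Summing then gives $\sum_S e(S)y_S \ge \frac{1}{3\ops_2}\sum_k |F(k)|$, which is the total number of full chunks at time $\ts$. Finally I relate the number of full chunks to $|J(\ts)|$ through the invariant behind Step~1 of the algorithm. Each crucial class $k$ makes exactly one set $S$ receive a positive $y_S$, namely either $S_{<k}$ or $S_{\le k}$. I will check that these sets are distinct for distinct $k$, or else simply add up the contributions, so that summing per class is legitimate.

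\emph{Case $|F(k)| < 6\ops_2$.} Here the contribution is $e(S_{<k})\cdot \frac{|F(k)|}{3\ops_2 e(S_{<k})} = \frac{|F(k)|}{3\ops_2}$, provided $e(S_{<k})>0$. Positivity of the excess is a subtle point. By Excess Lemma~1 (\Cref{lem:excess-1}), $e(S_{<k})>0$ whenever some full chunk of class $<k$ exists. So a zero excess can only occur for the smallest crucial class. For that class I would try to get positivity from the active partial chunk counted in the proof of \Cref{lem:excess-1}, since its remaining volume also lies in $S_{<k}$. Failing that, I would adopt the convention $y_S=0$ and absorb the loss into the additive term.

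\emph{Case $|F(k)| \ge 6\ops_2$.} Every full chunk $c\in F(k)$ has $p_c \ge 2^k$, because its first operation has class $k$. Excess Lemma~2 (\Cref{lem:excess-2}) then gives
\[
e(S_{\le k}) \ge |F(k)|\,2^k - 4\ops_2 2^k .
\]
Multiplying by $y_{S_{\le k}} = \frac{1}{\ops_2 2^k}$, the contribution is at least $\frac{|F(k)|}{\ops_2} - 4$. Since $4 \le \frac{2}{3}\cdot\frac{|F(k)|}{\ops_2}$ in this case, the contribution is at least $\frac{|F(k)|}{3\ops_2}$.

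\emph{Full chunks versus active chunks.} It remains to show that the number of full chunks, that is $|\afull(\ts)|$, is at least $|J(\ts)|/4 - 1$. Multiplying this by $\frac{1}{3\ops_2}$ yields exactly $\frac{|J(\ts)|}{12\ops_2} - \frac{1}{3\ops_2}$. I would prove the inequality by induction over time, looking at the events that change $|\afull|$ or $|J|$:
\begin{itemize}
\item An arrival adds one job to both $\afull$ and $J$, which preserves the inequality.
\item A job moved back to $\afull$ in Step~3 only increases $|\afull|$.
\item A completion only decreases $|J|$.
\item Step~1 removes a job from $\afull$ only while $|\afull| \ge |J(t)|/4$. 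After the last removal we therefore still have $|\afull| \ge |J(t)|/4 - 1$.
\end{itemize}

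The main obstacle I expect is the bookkeeping at time $\ts$. I must make sure that "full chunk in $F(k)$" coincides with "chunk of a job in $\afull(\ts)$". I also need the inequality measured after Step~1 at time $\ts$, which is the state the excess lemmas refer to. Finally, the zero-excess issue for the smallest crucial class must not cost more than the additive $\frac{1}{3\ops_2}$.
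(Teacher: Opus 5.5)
Your main line is the paper's proof. It has the per-class contribution $\frac{|F(k)|}{3\ops_2}$ in the two regimes, using \Cref{lem:excess-2} and $p_c\ge 2^k$ when $|F(k)|\ge 6\ops_2$. It sums these to $\frac{|\afull(\ts)|}{3\ops_2}$. It then uses the balance invariant $|\afull(\ts)|\ge |J(\ts)|/4-1$, proved by the same event-by-event induction as \Cref{lem:alg:property}. Your caution about coinciding sets is warranted. Classes are integers, so $t_{>k-1}=t_{\ge k}$ and hence $S_{<k}=S_{\le k-1}$. Adding the two contributions is therefore the correct reading, and it is the one the separate sums in \Cref{claim:dual:feasibility:1} and \Cref{claim:dual:feasibility:2} implicitly use.

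The one place you go beyond the paper is the case $e(S_{<k})=0$. Your worry there is justified, and the paper simply cancels $e(S_{<k})$ against the denominator of $y_{S_{<k}}$. However, neither of your remedies gives the stated constant.

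\emph{Your first remedy fails.} The proof of \Cref{lem:excess-1} counts no partial chunk, and an active chunk of class $<k$ need not exist at all. Take five identical single-operation jobs of some class $k\ge 1$, all released at time $0$, and let $\ts=1$; here $\ops_2=1$ and nothing is changed by the reduced-instance construction.
\begin{itemize}
\item At time $0$, Step~1 pushes one job and then stops, because the front's class is not strictly smaller than the top's.
\item At $\ts$, the only crucial class is $k$, with $|F(k)|=4<6\ops_2$.
\item We have $t_{\ge k}=0$, so $S_{<k}=\emptyset$ and $e(S_{<k})=0$.
\end{itemize}
The only dual variable the construction sets is $y_{S_{<k}}$, so the dual objective is $0$ whatever value it gets. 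The claimed bound, however, is $\frac{5}{12}-\frac13=\frac1{12}$. So the lemma as stated is false in this edge case, and no argument can deliver $-\frac{1}{3\ops_2}$.

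\emph{Your second remedy is the correct repair, but it is more costly than you suggest.} Setting $y_{S_{<k_{\min}}}=0$ for the smallest crucial class forfeits up to $\frac{6\ops_2-1}{3\ops_2}<2$, not $\frac{1}{3\ops_2}$. What you actually obtain is
$\sum_S e(S)y_S\ge \frac{|J(\ts)|}{12\ops_2}-\frac{1}{3\ops_2}-2$.

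This weaker form is harmless for \Cref{thm:main2}. The algorithm is work-conserving, so $J(\ts)\neq\emptyset$ forces $J^*(\ts)\neq\emptyset$, and an additive $O(\ops_2)$ term in $|J(\ts)|\le O(\ops_2)\cdot|J^*_c(\ts)|+O(\ops_2)$ is absorbed. You should therefore state and prove the lemma with that extra additive constant.
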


\begin{proof}
	For every class $k$ with $|F(k)| < 6\ops_2$ we get a contribution equal to
    \[
	    e(S_{<k}) y_{S_{<k}} = e(S_{<k}) \cdot \frac{|F(k)|}{3\ops_2 \cdot e(S_{<k})} =  \frac{1}{3\ops_2} |F(k)|.
    \]
	
	For every class $k$ with $|F(k)| \geq 6\ops_2$,
    we have 
	\begin{align*}
	e(S_{\le k}) y_{S_{\le k}} 
	&\geq \bigg( \sum_{k'\le k}\sum_{c \in F(k')} p_{c} - 2 \cdot \ops_2 \cdot 2^{k+1} \bigg) \cdot \frac{1}{\ops_2 2^k} 
    \geq \bigg(\sum_{c \in F(k)} p_{c} - 2 \cdot \ops_2 \cdot 2^{k+1} \bigg) \cdot \frac{1}{\ops_2 2^k} \\ 
	&\geq \frac{2^k}{\ops_2 2^k} |F(k)| - 4 
    \geq \frac{1}{\ops_2} |F(k)| - 4 
    \geq \frac{1}{\ops_2} |F(k)| - \frac{4}{6\ops_2} |F(k)| 
    = \frac{1}{3\ops_2} |F(k)| \ .
	\end{align*}
    Here, the first inequality uses~\Cref{lem:excess-2}.	
    
	Summing over all crucial classes, we obtain
	$\sum_{S} e(S) y_{S} \geq \frac{|\afull(\ts)|}{3\ops_2}$.
    The final inequality uses that $|\afull(\ts)| \geq \frac{|J(\ts)|}{4} - 1$, which is a property of the algorithm (cf.~\Cref{lem:alg:property} in~\Cref{app:properties} or~\cite{GuptaKPW26}) for a proof).
\end{proof}

\paragraph{Dual Feasibility.} Finally, we argue about the feasibility of the dual solution.
Fix a chunk $c^*$ of class $k^*$ with $r_{j(c^*)} \le \ts$. 
Our goal is to show that our dual variables violate the constraint of $c^*$ only by some constant factor, which then implies that scaling our variables by that factor yields a feasible dual solution.

By definition of our dual variables, 
the only sets $S$ that can have $c^* \in S$ and $y_S > 0$ are sets $S_{\le k}$ and $S_{<k}$ with $k \ge k^*$. 
Let 
\begin{itemize}
	\item $K_0 = \{k \geq k^* \mid c^* \in S_{<k} \}$ and 
	\item $K_1 = \{k \geq k^* \mid c^* \in S_{\le k} \}$.
\end{itemize}

First, we analyze the contribution of sets $S_{\le k}$ with $k \in K_1$:

\begin{lemma}
    \label{claim:dual:feasibility:1}
    $\sum_{k \in K_1} \min(p_{c^*}, e(S_{\le k})) \cdot y_{S_{\le k}} < 4$.
\end{lemma}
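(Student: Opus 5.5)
We bound each term by $p_{c^*}\cdot y_{S_{\le k}}$ and sum a geometric series. Only classes $k\in K_1$ with $y_{S_{\le k}}>0$ contribute, i.e., crucial classes with $|F(k)|\ge 6\ops_2$, where $y_{S_{\le k}}=\frac{1}{\ops_2 2^k}$. For each such $k$ we use $\min(p_{c^*},e(S_{\le k}))\le p_{c^*}$. This gives
\[
\sum_{k\in K_1}\min(p_{c^*},e(S_{\le k}))\,y_{S_{\le k}}\le \sum_{k\ge k^*} \frac{p_{c^*}}{\ops_2 2^k},
\]
since every $k\in K_1$ satisfies $k\ge k^*$ by definition of $K_1$. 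Each class appears at most once, because the map $k\mapsto S_{\le k}$ contributes one dual variable per class.

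Next we invoke \Cref{obs:chunk-size}. Chunk $c^*$ has class $k^*$, so $p_{c^*}\le \ops_2\cdot 2^{k^*+1}$. Substituting,
\[
\sum_{k\ge k^*}\frac{\ops_2 2^{k^*+1}}{\ops_2 2^{k}} = 2\sum_{i\ge 0}2^{-i}=4,
\]
which gives the bound $\le 4$.

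To get the strict inequality, we use that $K_1$ is finite, since only finitely many classes are crucial. The geometric series is therefore truncated, and the finite sum is $2\sum_{i=0}^{N}2^{-i}<4$. Alternatively, the bound $p_{c^*}<\ops_2 2^{k^*+1}$ is already strict, since each operation of class $k^*$ has size $<2^{k^*+1}$. Either argument yields $<4$.

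The only step needing care is confirming that no class is counted twice and that all contributing $k$ satisfy $k\ge k^*$. Both follow directly from the definitions of $K_1$ and of the dual variables. The chunk-size bound from \Cref{obs:chunk-size} is the key input, and no excess bound is needed here.
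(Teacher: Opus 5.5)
Your proof is correct and follows the paper's argument: bound each $\min(p_{c^*},e(S_{\le k}))$ by $p_{c^*}$, use $p_{c^*}\le \ops_2 2^{k^*+1}$ from \Cref{obs:chunk-size}, and sum the geometric series over $k\ge k^*$. Your extra justification of strictness is a worthwhile addition, since the paper's displayed infinite series sums to exactly $4$. Either of your arguments closes that gap: only finitely many classes are crucial, or $p_q<2^{k^*+1}$ holds for every operation $q$ of class at most $k^*$.
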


\begin{proof}
For every $k \in K_1$, we have $y_{S_{\le k}} = \frac{1}{\ops_2 2^k}$ if $|F(k)|\geq 6\ops_2$ and  $y_{S_{\le k}} = 0$ otherwise.
Since $c^*$ is of class $k^*$, we have $p_{c^*} \leq \ops_2 2^{k^*+1}$. Thus,
\[
\sum_{k \in K_1} \min(p_{c^*}, e(S_{\le k})) \cdot y_{S_{\le k}} 
\leq \sum_{k \geq k^*} p_{c^*} \cdot \frac{1}{\ops_2 2^k}
\leq \sum_{k \geq k^*} \frac{\ops_2 2^{k^*+1}}{\ops_2 2^k} < 4 \ .
\]   
\end{proof}

Next, we analyze the contribution of sets $S_{<k}$ with $k \in K_0$. 

\begin{lemma}
        \label{claim:dual:feasibility:2}
        $\sum_{k \in K_0} \min(p_{c^*}, e(S_{<k})) \cdot y_{S_{<k}} \leq 10$.
\end{lemma}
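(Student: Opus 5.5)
The plan is to bound each summand separately first, and then sum over $K_0$ by splitting it into a short initial range and a tail. Only classes with $y_{S_{<k}}>0$ contribute, so every $k\in K_0$ is crucial, satisfies $|F(k)|<6\ops_2$, and has $k>k^*$ (because $c^*\in S_{<k}$ forces $k_{c^*}<k$). For such $k$ the summand is
\[
\min(p_{c^*},e(S_{<k}))\cdot\frac{|F(k)|}{3\ops_2\, e(S_{<k})}=\frac{|F(k)|}{3\ops_2}\cdot\min\Bigl(1,\frac{p_{c^*}}{e(S_{<k})}\Bigr).
\]
This is at most $2$, and by Observation~\ref{obs:chunk-size} it is also at most $\frac{|F(k)|\,2^{k^*+1}}{3\,e(S_{<k})}$. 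To control $e(S_{<k})$ I would use Excess Lemma~1 (\Cref{lem:excess-1}). Since every full chunk of class $k'$ has volume at least $2^{k'}$, it gives
\[
e(S_{<k})\ \ge\ E_k:=\sum_{k'<k}|F(k')|\,2^{k'}.
\]

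Order $K_0$ increasingly as $k_1<k_2<\dots$. Any crucial class $k'$ with $k^*\le k'<k$ contributes at least $|F(k')|\,2^{k^*}$ to $E_k$. The first range consists of the classes $k_i$ with $\sum_{i'<i}|F(k_{i'})|<2\ops_2$. On this range I only use the trivial bound $\frac{|F(k)|}{3\ops_2}$. The total of $|F(k_i)|$ over this range is less than $2\ops_2$ plus the last term, which is less than $6\ops_2$. So this range contributes at most $\frac{8\ops_2}{3\ops_2}<3$.

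On every later class we have $E_k\ge 2\ops_2\,2^{k^*}$, so $p_{c^*}/E_k\le 1$ and the summand is at most $\frac{|F(k)|\,2^{k^*+1}}{3E_k}$. I would then charge it against the growth of $E$ along $K_0$, using
\[
E_{k_{i+1}}\ \ge\ E_{k_i}+|F(k_i)|\,2^{k_i}.
\]
The extra factor $2^{k_i-k^*}$ here is the key source of decay. With $a_i:=|F(k_i)|2^{k_i}$, the summand for $k_i$ is at most $\frac{2}{3}\,2^{k^*-k_i}\,a_i/E_{k_i}$. From the class where $k_i$ first exceeds $k^*+\log_2\ops_2$, these terms should decay geometrically in $k_i$, because $E_{k_i}\ge 2^{k_{i-1}}$ and $|F(k_i)|<6\ops_2$. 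That tail sums to an absolute constant.

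The main obstacle is the intermediate window $k^*<k_i\le k^*+\log_2\ops_2$. A naive bound of $2$ per class there costs $O(\log\ops_2)$, not a constant. To avoid this I would exploit the telescoping structure: $\sum_i\frac{a_i 2^{k^*-k_i}}{E_{k_i}}$ is a weighted version of $\sum_i\frac{E_{k_{i+1}}-E_{k_i}}{E_{k_i}}$, and in this window each increment is at most $6\ops_2 2^{k_i}\le O(E_{k_i}\,2^{k_i-k^*})$. The weight $2^{k^*-k_i}$ should then offset the logarithmic growth of the telescoping sum. If that estimate is not tight enough, my fallback is to use \Cref{lem:strict-classes} together with \Cref{asm:reduced-property}. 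These should show that $e(S_{<k})$ also contains the volume of already-processed chunks of class $\ge k_{i-1}$, which gives $e(S_{<k_i})\gtrsim \ops_2 2^{k_{i-1}}$ directly. With that bound the window terms become geometric as well, and the three parts together stay below $10$.
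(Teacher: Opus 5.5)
\textbf{There is a genuine gap: the window step you flag as the main obstacle is never proved, and neither of your two remedies closes it.}

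What is correct: the rewriting of each summand as $\frac{|F(k)|}{3\ops_2}\min\bigl(1,p_{c^*}/e(S_{<k})\bigr)$, the bound $e(S_{<k_i})\ge E_{k_i}$, and the first-range estimate (total below $\frac83$).

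\emph{The telescoping remark.} This is not an argument. What actually telescopes is $\sum_i\log(E_{k_{i+1}}/E_{k_i})$. Over the window $k^*<k_i\le k^*+\log_2\ops_2$ this can be of order $\log\ops_2$, and you give no mechanism by which the weights $2^{k^*-k_i}$ beat it. The weights do handle indices with $|F(k_i)|2^{k_i}\le E_{k_i}$, where the summand is at most $\frac23 2^{k^*-k_i}$. At the remaining indices, however, the summand can be of order one, and something else is needed.

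\emph{The fallback.} It rests on a false premise. As the proof of \Cref{lem:excess-1} shows, $e(S_{<k})$ measures the volume of $S_{<k}$ that is still \emph{unprocessed} at $\ts$; processed volume is exactly what the term $\ts-\ell_S$ subtracts. So it cannot contain already-processed chunks. The only lower bound available is $\sum_{k'<k_i}|F(k')|2^{k'}$. This is far below $\ops_2 2^{k_{i-1}}$ when, say, $|F(k_{i-1})|=1$ and the large sets $F(k')$ sit at classes near $k^*$. \Cref{lem:strict-classes} does not supply the missing factor $\ops_2$.

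\emph{The constant.} The bound $10$ is not justified even outside the window. The tail estimate you indicate gives summands up to $4\ops_2 2^{k^*-k_{i-1}}$, and these can by themselves sum to nearly $8$.

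\textbf{The missing idea.} Avoid pairing the crude cap $|F(k_i)|<6\ops_2$ with the weak bound $E_{k_i}\ge|F(k_{i-1})|2^{k_{i-1}}$. That pairing is where the stray factor $\ops_2$, and hence the $\log\ops_2$ window, comes from. There are two ways to do this.

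\emph{The paper's route.} The paper buckets the contributing classes by $T_\ell=\{k: 2^\ell\le|F(k)|<2^{\ell+1}\}$. Inside a bucket the preceding class satisfies $|F(k_{i-1})|\ge 2^\ell>|F(k_i)|/2$. So \Cref{lem:excess-1} gives $e(S_{<k_i})\ge 2^\ell 2^{k_{i-1}}$, and the summand is at most $\frac43 2^{k^*-k_{i-1}}$. The $\ops_2$ from $p_{c^*}\le\ops_2 2^{k^*+1}$ is cancelled by the $\frac{1}{3\ops_2}$ in $y_{S_{<k}}$. Each bucket is then a geometric series. Only its first class is charged the trivial $\frac{2^{\ell+1}}{3\ops_2}$, and these sum over $\ell$ to at most $4$.

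\emph{A doubling dichotomy within your framework.} You can also close your own argument without a window.
\begin{itemize}
\item Beyond the first range, $E_{k_i}\ge 4\ops_2 2^{k^*}$, since all earlier classes exceed $k^*$.
\item Put $\delta_i:=2^{k^*}|F(k_i)|/E_{k_i}$. Then the summand is at most $\frac23\delta_i$, and $E_{k_{i+1}}\ge E_{k_i}(1+\delta_i2^{k_i-k^*})$.
\item Indices with $\delta_i<2^{k^*-k_i}$ have $\sum_i\delta_i<1$.
\item At every other index $E$ at least doubles. So the $j$-th such index has $E_{k_i}\ge 2^{j-1}\cdot4\ops_2 2^{k^*}$ and $\delta_i<\frac32\,2^{-(j-1)}$, and these sum to less than $3$.
\end{itemize}
Together with your first range this gives $\frac83+\frac23\cdot4=\frac{16}{3}<10$.
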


\begin{proof}

To this end, for all $0 \leq \ell \leq \Delta := \lfloor \log_2(3\ops_2) \rfloor$, let $T_\ell = \{k \in K_0 \mid 2^\ell \leq |F(k)| < 2^{\ell+1} \}$.
For each $k \in T_\ell$, by definition $y_{S_{<k}} = \frac{|F(k)|}{3\ops_2 \cdot e(S_{<k})}$ if $|F(k)| < 6\ops_2$ and $y_{S_{<k}} = 0$ otherwise. Thus,
\begin{align}
	\min(e(S_{<k}), p_{c^*}) \cdot y_{S_{<k}} 
	&= \min(e(S_{<k}), p_{c^*}) \cdot \frac{|F(k)|}{3\ops_2 \cdot e(S_{<k})} \notag \\
	&\leq \min(e(S_{<k}), p_{c^*}) \cdot \frac{2^{\ell+1}}{3\ops_2 \cdot e(S_{<k})} 
	= \frac{2^{\ell+1}}{3\ops_2} \min \bigg(1, \frac{p_{c^*}}{e(S_{<k})} \bigg) \ . \label{eq:dual:feasibility:2}
\end{align}
Now, assume that $T_\ell \neq \emptyset$ and let $k_1 < k_2 < \ldots < k_{|T_\ell|}$ denote the classes of $T_\ell$. All those classes are crucial. Hence, we can apply~\Cref{lem:excess-1} to all of the corresponding sets $S_{< k_i}$. For each $i \geq 2$, this implies:
\begin{align}
    e(S_{< k_i}) \geq \sum_{k' < k_i} \sum_{c \in F(k')} p_{c}
    \geq \sum_{c \in F(k_{i-1})} p_{c} 
    \geq |F(k_{i-1})| \cdot 2^{k_{i-1}} \geq 2^\ell 2^{k_{i-1}} \ . 
    \label{eq:dual:feasibility:3}
\end{align}
Hence,
\begin{align*}      
\sum_{k \in T_\ell} \min(e(S_{<k}), p_{c^*}) \cdot y_{S_{<k}} 
    &\leq \frac{2^{\ell+1}}{3\ops_2}  \sum_{k \in T_\ell} \min \bigg(1, \frac{p_{c^*}}{e(S_{<k})} \bigg) 
    \leq \frac{2^{\ell+1}}{3\ops_2}  \bigg( 1 + \frac{1}{2^\ell}\sum_{i \geq 2} \frac{p_{c^*}}{2^{k_{i-1}}} \bigg) \\
    &\leq \frac{2^{\ell+1}}{3\ops_2} \bigg(1  + \frac{\ops_2 2^{k^*+1}}{2^\ell} \sum_{i \geq 2} \frac{1}{2^{k_{i-1}}} \bigg) 
    \leq \frac{2^{\ell+1}}{3\ops_2} \bigg(1  + \frac{\ops_2 2^{k^*+1}}{2^\ell} \frac{2}{2^{k_1}} \bigg) \ ,
\end{align*}
where the first inequality uses \eqref{eq:dual:feasibility:2},
the second inequality uses \eqref{eq:dual:feasibility:3} and that $\min(1, p_{c^*} / e(S_{< k_1})) \leq 1$,
the third inequality uses that $p_{c^*} \leq \ops_2 2^{k^*+1}$, and
the fourth inequality uses $\sum_{i \geq 2} \frac{1}{2^{k_{i-1}}} \leq 2 \cdot 2^{-k_1}$.

For each $\ell$ with $T_\ell \neq \emptyset$, let $k(\ell)$ denote its smallest class.
Since $y_{S_{<k}} = 0$ for all $k$ with $|F(k)| \geq 6\ops_2$, we can finally sum over all indices $\ell$ with $T_\ell \neq \emptyset$ to verify the dual constraint
\begin{align*}
    \sum_{k \in K_0} \min(e(S_{<k}), p_{c^*}) \cdot y_{S_{<k}}
    &= \sum_{\ell=0}^{\Delta} \sum_{k \in T_\ell} \min(e(S_{<k}), p_{c^*}) \cdot y_{S_{<k}} \\
    &\leq \sum_{\substack{\ell=0\\ T_\ell \neq \emptyset}}^{\Delta} \frac{2^{\ell+1}}{3\ops_2} \bigg(1  + \frac{\ops_2 2^{k^*+1}}{2^\ell} \frac{2}{2^{k(\ell)}} \bigg)  
    = \sum_{\substack{\ell=0\\ T_\ell \neq \emptyset}}^{\Delta} \frac{2^{\ell+1}}{3\ops_2} + \sum_{\substack{\ell=0\\ T_\ell \neq \emptyset}}^{\Delta}  \frac{2 \cdot 2^{k^*+1}}{3} \frac{2}{2^{k(\ell)}} \ .
\end{align*}
For the first sum, we have $\sum_{\ell=0}^{\Delta}\frac{2^{\ell+1}}{3\ops_2} \leq \frac{4}{3\ops_2} 2^{\log_2(3\ops_2)} = 4$.
For the second sum, since $T_0, \ldots, T_\Delta$ partition all crucial classes $\geq k^*$, all $k(0),\ldots,k(\Delta)$ are pairwise distinct and at least $k^*$. Hence, we get $\sum_{\ell=0}^\Delta 2^{-k(\ell)} \leq 2 / 2^{k^*}$. Thus, the second sum is at most $16/3$. 
Since $4 + 16/3 < 10$, this completes the proof.
\end{proof}

\paragraph{Putting Everything Together.}
We can finally prove \Cref{thm:main2}, which then implies \Cref{thm:main1}.
\begin{proof}[Proof of \Cref{thm:main2}]
Combining~\Cref{claim:dual:feasibility:1} and~\Cref{claim:dual:feasibility:2}, we get
\[
\sum_{S: c^* \in S} \min(p_{c^*}, e(S)) \cdot y_{S} \leq 14 \ .
\]
Thus, the dual assignment $y_{S} / 14$ for all $S \subseteq C$ is a feasible solution to (\hyperlink{localchunkdp}{$\chunkdp(\ts)$}). 
By~\Cref{lem:dual:objective}, this dual solution has an objective value of at least $\frac{|J(\ts)|}{168\ops_2}- \frac{1}{52\ops_2}$. Hence, \Cref{fact:lp} and weak duality give
$$
|J(\ts)| \le 168\ops_2 \cdot |J^*_{c}(\ts)| +\frac{1}{52\ops_2} \ ,
$$
where $|J^*_{c}(\ts)|$ is the number of active chunks in $J^*$ at time $\ts$. This concludes the proof of~\Cref{lemma:local_chunk_bound} and, using Inequality~\eqref{eq:opt_chunk_bound}, implies $|J(t)| \in \mathcal{O}(\ops_1 \cdot \ops_2) \cdot |J^*(t)|$ for any time $t$. Integrating over time then implies~\Cref{thm:main2}.
\end{proof}

\printbibliography

\appendix

\section{Further Related Work}
\label{app:related}

For non-clairvoyant scheduling on a single machine, Becchetti and Leonardi~\cite{BecchettiL04} presented an $O(\log n)$-competitive randomized algorithm, matching the lower bound of \cite{MotwaniPT94}; their deterministic lower bound of $\Omega(n^{1/3})$ remained unmatched until today.
Bender et al.~\cite{BenderMR04} proposed a model where $\lfloor \log_2 p_j \rfloor$ is revealed to an algorithm when job $j$ arrives, for which Becchetti et al.~\cite{BecchettiLMP04} gave an $O(1)$-competitive algorithm.
Building on these results, Azar, Leonardi, and Touitou assumed that a job $j$ arrives with a predicted processing time $\hp_j$ with prediction error $\mu = (\max_j \hp_j / p_j) \cdot (\max_j p_j / \hp_j)$, and presented algorithms with competitive ratios $O(\mu^2)$~\cite{AzarLT21} and $O(\mu \log \mu)$~\cite{AzarLT22}. Gupta et al.~\cite{GuptaKPW26} very recently improved this to $O(\mu)$,
which is best-possible~\cite{AzarLT22}.
Under $(1+\varepsilon)$-speed augmentation, a simple non-clairvoyant algorithm can be shown to be $O(1/\varepsilon)$-competitive~\cite{KalyanasundaramP00}.

The weighted objective is in terms of guarantees much harder, even in the 
clairvoyant setting: Bansal and Chan~\cite{BansalC09} showed that no online algorithm can be $O(1)$-competitive.
The currently best-known algorithms are due to Azar and Touitou~\cite{AzarT18}, building up on earlier works~\cite{BansalD07,ChekuriKZ01}. Under $(1+\varepsilon)$-speed augmentation, there are several $O(1)$-competitive algorithms~\cite{BansalD07,BecchettiLMP06}.
In the offline setting, unlike the unweighted problem, it is NP-hard~\cite{LabetoulleLLR84}, and a PTAS for minimizing the weighted flow time has been established only very recently~\cite{ArmbrusterRW23}.

Our problem can be considered a special case of scheduling with \emph{online precedence constraints} and \emph{weighted} jobs. In scheduling with precedence constraints, we are given a directed acyclic graph $G = (J,E)$ that uses the jobs as vertices and formulates precedence constraints between the jobs, i.e., a job $j$ can only be processed once all $j'$ with $(j',j) \in E$ have been completed. In the online variant, the job set and precedence constraint graph are initially unknown to the scheduler, and a job is only revealed to the scheduler once all predecessors have been completed. This scheduling model has for example been studied for makespan~\cite{AzarE02} and for weighted total completion time minimization, where no deterministic algorithm has a competitive ratio better than $\Omega(n)$ but $\mathcal{O}(1)$-competitive algorithms are possible in learning-augmented settings~\cite{LassotaLMS23}. The model captures our \ofts: A job $j$ with $m$ operations in our model can be  represented by introducing (i) a single job $j_i'$ for each operation $j_i$ of $j$, (ii) introducing precedence constraints $(j'_i,j'_{i+1})$ for all $1 \le i < m$, and (iii) using weights $w_{j'_m}=1$ and $w_{j'_i} = 0$ for all $i < m$. To the best of our knowledge, online precedence constraints have not yet been studied for the total flow time objective.

As outlined in the introduction, our model is related to non-clairvoyant flow time minimization with untrusted predictions~\cite{GuptaKPW26,AzarLT21,AzarLT22}. Untrusted predictions have also been studied for the special case of total completion time minimization (see, e.g.,~\cite{PurohitSK18,WeiZ20,,ImKQP23, BenomarP23,BenomarP24nonclarivoyant, DinitzILMV22, EliasKMM24,LindermayrM25permutation,BenomarCLS25}).

\section{Lower Bounds}

In this section, we prove several lower bounds for the \ofts. For the sake of convenience, we state our lower bound constructions using rational release dates and processing times. Lower bound instances with integer release dates and processing times can then be obtained via scaling.

For a fixed algorithm and a fixed instance, we denote by $\delta(t)$ the number of unfinished jobs in the algorithm's schedule and by $\delta^*(t)$ the number of unfinished jobs at time $t$ in an optimal solution.
To prove lower bounds, we use the following well-known technique, see e.g.~\cite{AzarLT22,MotwaniPT94,GuptaKLSY25}.

\begin{proposition}\label{prop:dos}
    If for every sufficiently large integer $N$ there exists an instance such that $\delta^*(t) = N$ and $\delta(t) \geq \rho \cdot \delta^*(t)$ for some constant $\rho$ at some time $t$, then the algorithm has a competitive ratio of at least $\rho$.
\end{proposition}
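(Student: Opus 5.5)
The plan is to use the standard ``flow time is the integral of the number of unfinished jobs'' argument. Fix a sufficiently large $N$ and the corresponding instance and time $t$ with $\delta^*(t)=N$ and $\delta(t)\ge \rho N$. The goal is to extend this instance so that the gap persists for an arbitrarily long period, and so that the contribution of the interval $[0,t]$ becomes negligible.

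\textbf{Extension.} Starting at time $t$, we release one job of unit size (say a single operation of length $1$, padded with zero-length operations) at each integer time $t, t+1, \dots, t+L-1$, where $L$ is a large parameter.

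\textbf{Optimal side.} The optimal schedule can process each new unit job immediately upon arrival. It keeps its $N$ unfinished old jobs untouched until time $t+L$ and then finishes them. Hence during $[t,t+L]$ it has at most $N+1$ unfinished jobs. Its total flow time is therefore at most
\[
\opt \le A + L(N+1) + B,
\]
where $A$ is the cost incurred before $t$ and $B$ the cost of finishing the old jobs afterwards. Both $A$ and $B$ depend only on the original instance, not on $L$.

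\textbf{Algorithm side.} The total remaining volume is invariant across schedules and the machine is busy throughout, so the algorithm also gains nothing from the new jobs. It starts the period with $\delta(t)\ge\rho N$ unfinished jobs. Each unit of time it either completes an arriving unit job, in which case the old jobs stay unfinished, or it works on an old job, in which case the new job stays unfinished. Either way the count $\delta(t')\ge \rho N$ persists for all $t'\in[t,t+L]$. More carefully, the number of unfinished jobs can drop only upon a completion, and each completion is balanced by an arrival at the next integer step; so the count stays at least $\rho N - 1$, assuming the old jobs have remaining volume at least one. Thus $\alg \ge L(\rho N-1)$.

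\textbf{Conclusion.} Letting $L\to\infty$ gives $\alg/\opt \ge (\rho N-1)/(N+1)$, and letting $N\to\infty$ gives a ratio approaching $\rho$.

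The main obstacle is the algorithm-side accounting: we must argue rigorously that the algorithm's count cannot decrease over the extension period. This is cleanest as an exchange argument. Running the shortest job first is optimal for the remainder, i.e.\ SRPT from state $t$ onward. Under SRPT every new unit job preempts the old ones, so the old jobs are never touched and $\delta$ stays at $\delta(t)$. Any other continuation has at least as many unfinished jobs at each time, because SRPT is locally optimal given the state at time $t$. This avoids the ``at least'' bookkeeping.

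If the instances underlying the hypothesis do not already come in this form, I would simply state the proposition with ``unfinished'' measured at that time $t$ and rely on the adversary's continuation as above; integer scaling handles the rational data.
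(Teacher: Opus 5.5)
Your proposal is correct. It is the standard stream-of-unit-jobs argument that the paper invokes by citation rather than proving (see the references just before the proposition), and you take the limits in the right order: first $L\to\infty$, then $N\to\infty$.

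Two small points.
\begin{itemize}
\item Jobs of the original instance released after $t$ must be discarded. Otherwise the bound of $N+1$ unfinished jobs for the comparison schedule fails.
\item The algorithm-side bound is immediate in the integral model. Each unit slot contains at most one completion and brings one new arrival, so the count cannot drop during the stream. You therefore need neither the SRPT exchange argument nor the remaining-volume caveat.
\end{itemize}
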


\subsection{A Lower Bound for Deterministic Algorithms}

We first present a lower bound for deterministic algorithms.
In fact, our lower bound follows directly from a lower bound of Motwani, Philipps, and Torng \cite{MotwaniPT94}. They show that every deterministic non-clairvoyant algorithm has a competitive ratio of at least $P$, where $P$ is the ratio between the largest and smallest processing time. The construction works for jobs with operation sizes $0$ and $1$ where the smallest job has processing time $1$ and the largest jobs has a processing time of $m$.
Since this implication is immediate, we give the proof of \cite{MotwaniPT94} in the language of operations.

\begin{theorem}\label{thm:det-lb}
  For any integer $\ops \geq 2$, the competitive ratio of any deterministic algorithm for minimizing the total flow time with $\ops$ operations is at least~$\ops$, even if all operations have size $0$ or $1$ and are monotone non-increasing.
\end{theorem}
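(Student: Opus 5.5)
We adapt the adversary argument of Motwani, Phillips, and Torng~\cite{MotwaniPT94} to operations and then invoke \Cref{prop:dos}. Fix a deterministic algorithm and a large integer $N$. At time $0$ we release $N$ jobs. Each job has first operation of size $1$, and its later operations have sizes in $\{0,1\}$, with all $1$'s preceding all $0$'s, so the sequence is monotone non-increasing. A job whose first $i$ operations have size $1$ and whose remaining $m-i$ operations have size $0$ has total processing time $i$. The adversary decides the length of each job online. Whenever the algorithm completes an operation of size $1$ of some job $j$ and the next operation would be revealed, the adversary reveals another size-$1$ operation, as long as fewer than $m$ such operations have been revealed. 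Consequently every job the algorithm finishes has received $m$ units of processing. The only exception is the jobs the algorithm has never touched, and even for these the adversary keeps freedom.

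The key steps are as follows. First, consider the time $t = N$ roughly; the choice of exact time follows from counting. The algorithm has spent $t$ units of processing in total, and each unit advances some job by one size-$1$ operation. By the adversary strategy, no job completes before it has received $m$ units. Hence at most $t/m$ jobs are finished by time $t$.

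Second, the adversary now fixes the unrevealed sizes. Jobs that the algorithm has partially processed, with $i<m$ units received, are declared to have exactly $i+1$ size-$1$ operations, or simply size $m$. This choice is consistent with everything revealed, since only the revealed prefix is known. Jobs that the algorithm has not touched are declared to have size $1$, with all later operations of size $0$.

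Third, we compare with the optimum, which knows the sizes and processes the short jobs first, in SRPT order. To make the gap a factor of $m$, we use the standard variant in which a stream of unit-size jobs is released after time $t$. At each time $t, t+1, \dots, t+L$ with $L \gg N$, one new job of size $1$ is released, with first operation $1$ and the rest $0$. The optimum then keeps its long jobs pending and has a fixed backlog $\delta^*$. The algorithm, being deterministic, must either process the new unit jobs or its old jobs. An averaging argument over the long stream then shows $\delta \ge m\cdot\delta^*$ up to lower-order terms. For concreteness, we set it up as in~\cite{MotwaniPT94}. Release $N$ jobs and run until time $Nm$. By then, either the algorithm has left at least one unit on every job, or it has completed some jobs that the adversary made long. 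The optimum, by processing the jobs the adversary made short, has only about $N - (\text{number of short jobs done})$ alive, and the parameters are chosen so that this ratio equals $m$. Feeding the unit stream freezes this ratio over a long horizon, and \Cref{prop:dos} finishes the proof.

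The main obstacle is the precise counting that produces exactly the factor $m$ rather than $m/2$ or $m-1$. It requires choosing when to stop and which jobs to declare short so that $\delta(t)\ge m\,\delta^*(t)$. I would first try the classical bookkeeping: at time $t=N(m-1)+\ldots$, the algorithm's processed volume is spread over at most $N$ jobs with at least one unit remaining each. The optimum, by contrast, can finish all but about $N/m$ jobs if the untouched and lightly touched jobs are declared short. We must then verify monotone non-increasing sizes in every declared job.
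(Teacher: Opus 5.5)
\textbf{There is a genuine gap.} Your adversary is the right one: it keeps revealing unit operations, so every job the algorithm finishes has absorbed $m$ units, and later it cuts the unrevealed tail to $0$. But the proposal never carries out the step that \emph{is} the proof, namely choosing $t$ and showing $\delta(t)\ge m\,\delta^*(t)$ for every deterministic algorithm. You acknowledge this yourself, and the concrete choices you offer fail:
\begin{itemize}
\item At $t\approx N$, round robin has given each of the $N$ jobs one unit. Declaring ``exactly $i+1$'' unit operations makes every job of size $2$. Then $\delta(t)=N$ while the optimum has $N/2$ jobs alive, a ratio of $2$. Declaring ``size $m$'' instead gives a ratio of about $m/(m-1)$.
\item If unit operations keep being revealed until $t=Nm$, a busy algorithm has finished every job, since no job has more than $m$ units.
\item The bookkeeping in your last paragraph tacitly assumes the algorithm still has about $N$ alive jobs. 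An algorithm that runs jobs to completion has only about $N/m$ of them at $t=N(m-1)$, so ``the optimum has about $N/m$ alive'' yields ratio $1$.
\end{itemize}
What you need is $\delta^*(t)\le\lceil\delta(t)/m\rceil$, not a bound in terms of $N$. The unit-job stream does not help here: \Cref{prop:dos} only converts an existing per-time gap into a flow-time gap; it does not create one.

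The paper obtains the gap from an adaptive stopping rule plus a volume identity. It releases $N(m+1)$ jobs. Let $s$ be the time at which $N$ jobs (the set $J_1$) have completed $m-1$ unit operations, and at $s$ every still-hidden operation is set to $0$. Then:
\begin{itemize}
\item the jobs of $J_1$ have size exactly $m$;
\item every job has the form $1\cdots 1\,0\cdots 0$;
\item every unfinished job has exactly one unit left.
\end{itemize}
Let $t>s$ be the time at which the algorithm has finished $N$ jobs, so $\delta(t)=Nm$. The $Nm$ unfinished jobs have one unit left each, so the total volume is at most $t+Nm$. Hence the jobs outside $J_1$ have total size at most $t$ and can all be finished by $t$, giving $\delta^*(t)\le N$.

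Your fixed-time version can also be completed, at $t=N(m-1)$ with the ``exactly $i+1$'' declaration. Suppose the algorithm is busy and $k$ jobs are unfinished. Then:
\begin{itemize}
\item the unfinished jobs have received $km-N\ge 0$ units in total, so $k\ge N/m$;
\item at least $k$ jobs have size $m$;
\item the optimum only needs to leave $\lceil k/m\rceil$ of these unfinished, giving a ratio tending to $m$.
\end{itemize}
But this counting is exactly what is missing from your proposal.
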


\begin{proof}
Let $N$ be a larger integer. At time $0$, we release a set $J$ of $N(m+1)$ jobs with $p_{j_1} = 1$ for all $j \in J$.
Whenever the algorithm completes an operation, the next operation again has size $1$.
We assume w.l.o.g.\ by the integrality of the instance that the algorithm only preempts at integer times.
Let $s$ be the time when the algorithm completed the $(m-1)$th operation of $N$ jobs. Let $J_1$ denote the set of those jobs.
At this time, we set the processing time of all hidden operations to $0$. 
Let $t$ be the time when the algorithm completed exactly $m$ jobs, which must be after time $s$.
By construction, $\delta(t) = mN$.
Since at time $s$ all jobs have the same remaining total length, we can assume without loss of generality that the algorithm completed all $m$ jobs in $J_1$ by time $t$.
Hence, it worked for $t - Nm$ units on jobs in $J \setminus J_1$ until time $t$, 
and at time $t$, each of the $Nm$ jobs in $J \setminus J_1$ has remaining length $1$.
Thus, another solution can finish all jobs in $J \setminus J_1$ until time $t$,
and thus,
$\delta^*(t) \leq N$ because only the $N$ jobs in $J_1$ remain.
Then, the lemma follows from \Cref{prop:dos}.
\end{proof}

\subsection{A Lower Bound for Randomized Algorithms}

We next move to randomized algorithms. We show that randomization does not significantly help to improve the competitive ratio over deterministic algorithms. We emphasize that this lower bound in new and does not appear in \cite{MotwaniPT94}. Our proof roughly follows the same framework as the randomized lower bound proofs for related flow time problems in~\cite{AzarLT22,GuptaKLSY25} but tailors it to scheduling with $m$ operations.

\begin{theorem}\label{theorem:randomized-lb}
    The competitive ratio of any randomized algorithm against an oblivious adversary for minimizing the total flow time with $\ops$ operations is at least~$\Omega(\ops)$.
\end{theorem}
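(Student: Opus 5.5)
The plan is to use Yao's principle. We construct a distribution over instances with $\ops$ operations of size $0$ or $1$. For every deterministic algorithm we then exhibit a time $t$ with $\EX[\delta(t)] \ge \Omega(\ops)\cdot \EX[\delta^*(t)]$, with $\delta^*(t)$ growing like a large parameter $N$. This follows the framework of \cite{AzarLT22,GuptaKLSY25} and the variant of \Cref{prop:dos} for expectations. The instance randomizes the deterministic construction of \Cref{thm:det-lb}. At time $0$ we release $n = N\ops$ jobs. Each job $j$ independently draws a hidden length $L_j \in \{1,\ldots,\ops\}$: its first $L_j$ operations have size $1$ and the rest size $0$. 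Since a job's operations are revealed only upon completing the previous one, the algorithm cannot distinguish jobs until they finish. It only learns that a job has length $>\ell$ after processing $\ell$ units of it.

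The distribution I would try first is geometric-like: $\Pr[L_j \ge \ell] = 2^{-(\ell-1)}$ for $\ell<\ops$, truncated at $\ops$. Because $\ops$ is in the numerator we must be careful, so an alternative is to use scales: length $\ell$ at scale $2^{\ell}$ with operations of size $0$ or $2^{\ell}$. The key step is the following. Fix a time $t \approx n\cdot \EX[L]/2$, i.e.\ when about half the total expected volume has been processed. By a memorylessness or exchange argument, any deterministic algorithm's best strategy against this distribution is essentially round robin across levels. Concretely, the number of jobs it has completed by time $t$ is at most its processed volume divided by the expected volume per completion given the information it has. OPT, knowing the lengths, runs shortest-first and completes all short jobs. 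We compare the counts: OPT leaves only the longest jobs, in expectation $n\Pr[L \ge \ell^*]$ for a threshold $\ell^*$. The algorithm, which pays for partially processed long jobs, leaves $\Omega(\ops)$ times more.

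Because the naive geometric distribution tends to give only a constant ratio, the cleanest route is to embed the deterministic adversary directly. The adversary's power in \Cref{thm:det-lb} comes from choosing which $N$ jobs get $\ops-1$ units. Randomly, we let a uniformly random subset of $N$ out of $N(\ops+1)$ jobs be long, of length $\ops$, and the rest have length $1$. The algorithm cannot identify long jobs before investing $2$ units in a job. Up to the time $t$ when OPT has finished all short jobs, namely $t = N\ops$, the algorithm has spent volume $N\ops$. Each job it touches beyond one unit is long only with probability about $1/(\ops+1)$. So in expectation it finished at most about $N\ops$ jobs if it processes short ones, but it also must handle the long ones. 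I would then move the observation time later, or repeat with a stream of short jobs as in \cite{MotwaniPT94}'s strengthened version, to force a $\Theta(\ops)$ gap.

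The main obstacle is designing the distribution so that the algorithm's expected number of alive jobs is $\Omega(\ops)$ times OPT's at a common time. A single static batch seems to give only constant factors for randomized algorithms. I expect to need continuous arrivals of unit jobs after time $s$ to make partially processed long jobs costly, as in the $\Omega(\log n)$-style randomized arguments. I would bound the algorithm's cost by showing that, conditioned on its information, each unit of work on an unfinished job finishes it with probability $O(1/\ops)$. Meanwhile OPT concentrates its work and keeps only $O(N)$ jobs alive.
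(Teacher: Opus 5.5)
Your proposal does not yet contain a proof. It sketches three constructions, drops the first two, and leaves the third (continuous arrivals) unspecified, so $\EX[\delta(t)] \geq \Omega(\ops)\cdot\EX[\delta^*(t)]$ is never established for any concrete distribution and time. Moreover, the geometric distribution you discard is exactly the one that works. The constant ratio you observe comes from looking at half the total volume, not from the distribution: there the algorithm has finished about $n/2$ jobs and shortest-first about $3n/4$. The paper releases $n=\lfloor 2^{\ops/2}\rfloor$ jobs at time $0$ with $\pr[P_j=p]=2^{-p}$ and unit-size operations, any excess beyond $\ops$ going into the last operation. Tying $n$ to $\ops$ gives two things at once. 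With probability $1-O(1/n)$ all jobs satisfy $P_j\le\ops$, so every active operation has size $1$ and jobs are indistinguishable without truncation; and $\log n=\Theta(\ops)$. By memorylessness, whatever a deterministic algorithm does, each unit of work completes a job with probability $1/2$, so its completions behave like heads in fair coin flips; there is no ``best strategy'' to analyze. The decisive choice is the observation time $t=\lfloor 2(n-n^{3/4})\rfloor$, just before the expected total volume $2n$ runs out. Chebyshev's inequality shows that, with probability $1-o(1)$, the algorithm has finished at most $n-n^{3/4}/2$ jobs, so $\EX[\delta(t)]=\Omega(n^{3/4})$. Meanwhile the residual volume $P-t=O(n^{3/4})$ fits inside the roughly $n^{3/4}$ jobs of size larger than $\frac14\log n$, so $\EX[\delta^*(t)]=O(n^{3/4}/\log n)$. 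The gap $\Omega(\log n)=\Omega(\ops)$ is a tail effect near the end of the volume, which a halfway comparison cannot detect.

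Your other routes do not close this gap. Suppose an oblivious adversary makes a uniformly random set of $N$ among $N(\ops+1)$ jobs long. The algorithm that gives each job one unit and then handles the survivors finishes all $N\ops$ short jobs by time $N(\ops+1)$ and stays within a constant factor of the optimum at every time. The force of \Cref{thm:det-lb} lies in adaptively making long exactly the jobs in which the algorithm has already invested $\ops-1$ units, which an oblivious adversary cannot do. It is also not true that a single static batch yields only constant factors: the paper's gap comes from one batch released at time $0$. A later stream of unit jobs is only the standard device behind \Cref{prop:dos} that turns a gap between $\delta(t)$ and $\delta^*(t)$ into a gap in total flow time. It cannot create a gap when the local ratio is constant.
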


We later apply Yao's principle, so fix any deterministic algorithm.
We consider a distribution of instances where at time $0$ we release $n =  \floor{2^{\ops / 2}}$ jobs
with integer processing times $P_j$ drawn independently from the geometric distribution with mean $2$ at time $0$. That is, $P_j = p$ with probability $2^{-p}$ for every integer $p \geq 1$.
Note that $\EX[P_j] = 2$.
For every job $j$, we define its operation processing times as follows: $P_{j_\ell} = 1$ for all $1 \leq \ell \leq \min\{\ops,P_j\}$, $P_{j_\ell} = 0$ for all $P_j + 1 \leq \ell < \ops$, and $P_{j_\ops} = \max\{P_j - \ops + 1, 0\}$.

Let $t = \floor{2(n-n^{3/4})}$.
We first upper bound the expected number of alive jobs at time $t$ in the optimal solution.

\begin{lemma}\label{lemma:randomized-lb-opt}
  It holds that $\EX[\delta^*(t)] \leq \cO(\frac{n^{3/4}}{\log n})$.
\end{lemma}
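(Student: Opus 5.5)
We exhibit a concrete offline schedule and bound its expected number of unfinished jobs at time $t$. The natural choice is SRPT, which is clairvoyant and optimal for the offline problem. Since all jobs are released at time $0$ and the operation structure does not constrain an offline scheduler that knows everything, SRPT processes jobs in nondecreasing order of $P_j$. Hence $\delta^*(t)$ is at most the number of jobs not among the shortest ones whose total size fits into $[0,t]$. Concretely, sort the $P_j$ and let $k$ be the largest index with $P_{(1)}+\dots+P_{(k)}\le t$; then $\delta^*(t)\le n-k$. So it suffices to show that, in expectation, the $n - O(n^{3/4}/\log n)$ shortest jobs have total size at most $t\approx 2n-2n^{3/4}$.

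First, control the total volume. $\sum_j P_j$ has mean $2n$ and variance $\Theta(n)$, since a geometric variable has variance $2$. By Chebyshev, or a Chernoff bound for geometric sums, $\sum_j P_j\le 2n+n^{5/8}$ with probability $1-O(n^{-1/4})$. On the failure event we use the trivial bound $\delta^*(t)\le n$, which contributes $O(n^{3/4})\cdot$ something small; a sharper exponential tail bound at deviation $n^{3/4}$ gives failure probability $e^{-\Omega(\sqrt n)}$, making this contribution negligible.

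Second, on the good event we must remove at least $V:=2n^{3/4}+n^{5/8}+O(1)=\Theta(n^{3/4})$ volume by discarding the largest jobs, and we want to show few jobs suffice. Here the geometric tail matters. Let $L=\lfloor \tfrac12\log_2 n\rfloor$. The number of jobs with $P_j\ge L$ concentrates around $n2^{-L+1}\approx 2\sqrt n$. The volume contributed by jobs of size at least $s$ is about $n(s+1)2^{-s+1}$. Choose the threshold $s^*$ so that this volume is about $2V$, i.e.\ $n s^* 2^{-s^*}\approx n^{3/4}$, giving $2^{-s^*}\approx n^{-1/4}/s^*$ and $s^*=\Theta(\log n)$. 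The number of jobs of size at least $s^*$ is then about $n2^{-s^*+1}=O(n^{3/4}/\log n)$. Removing all of them frees volume at least $V$, so all remaining jobs complete by time $t$ in SRPT. Therefore $\delta^*(t)\le |\{j: P_j\ge s^*\}|$ on the good event.

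Third, combine the pieces: $\EX[\delta^*(t)]\le \EX|\{j:P_j\ge s^*\}| + n\cdot\Pr[\text{bad}] = O(n^{3/4}/\log n)$. The bad event now includes a failure of concentration for the count and volume above $s^*$. These are sums of i.i.d.\ bounded indicators, or variables with geometric tails, with means $\Theta(n^{3/4}/\log n)$ and $\Theta(n^{3/4})$, so Chernoff gives failure probabilities $e^{-n^{\Omega(1)}}$.

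The main obstacle is the second step: choosing $s^*$ so that the volume above $s^*$ concentrates and reliably exceeds $V$ while the count above $s^*$ stays $O(n^{3/4}/\log n)$. The key point is that the count equals the volume divided by roughly $s^*=\Theta(\log n)$, which is exactly where the $\log n$ saving comes from. Integrality and floor effects, including $t=\lfloor 2(n-n^{3/4})\rfloor$, are absorbed into constants.
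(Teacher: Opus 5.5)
Your argument is correct and is essentially the paper's. You concentrate $\sum_j P_j$ so that the excess volume $P-t$ is $O(n^{3/4})$, note that the optimum only needs to leave unfinished jobs of length $\Omega(\log n)$, and charge bad events by $n\cdot\Pr[\text{bad}]$.

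The paper streamlines this in two ways. It fixes $b=\frac{\log n}{4}$, shows via Chebyshev only that at least $\frac12 n^{3/4}$ jobs exceed $b$, and bounds the unfinished count by $(P-t)/b$; this avoids concentrating the truncated volume above $s^*$. Also, plain Chebyshev at deviation $n^{3/4}$ (failure probability $O(n^{-1/2})$, contribution $O(\sqrt n)$) already suffices. Your Chebyshev at deviation $n^{5/8}$ is too weak, since it gives a contribution of $\Theta(n^{3/4})$, which is why you needed exponential tail bounds.
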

\begin{proof}
  First, note that the total processing time $P$ of all jobs is a sum of $n$ independent random variables with mean $2n$ and variance $2n$. Thus, Chebyshev's inequality gives
  \[
      \pr[P \geq 2n + n^{3/4}] = \pr[P \geq 2n + n^{1/2} \cdot n^{1/4}] \leq \cO \left( \frac{1}{n^{1/2}} \right) \ .
  \]

  Moreover, let $b = \frac{\log n}{4}$ and let $B$ be the number of jobs with processing time more than $b$.
  Note that $B$ is binomially distributed, because it can be written as the sum of $n$ binary random variables $\ind[P_j > b]$ with equal success probabilities $\pr[P_j > b] = 2^{-b} = n^{-1/4}$.
  Therefore, $\EX[B] = n^{-1/4} \cdot n = n^{3/4}$ and $\VAR[B] = n^{-1/4} (1-n^{-1/4}) n = O(n^{3/4})$.
  Thus, Chebyshev's inequality gives
  \[
      \pr   [B \leq \tfrac{1}{2}n^{3/4} ]
      = \pr     [B - \EX[B] \leq -\tfrac{1}{2}\EX[B] ]
      \leq \frac{\VAR[B]}{(\frac{1}{2}\EX[B])^2}
      \leq \cO \left( \frac{1}{n^{3/4}} \right) \ .
  \]

  Let $\cE_1$ denote the event $P < 2n + n^{3/4}$ and let $\cE_2$ denote the event $B > \frac{1}{2}n^{3/4}$.
  By the union bound, we have
  \[
      \pr[\cE_1 \cap \cE_2] \ge 1 - \pr[\bar{\cE}_1 ] - \pr[\bar{\cE}_2] \ge 1 - \cO\left(\frac{1}{n^{1/2}}\right) - \cO\left(\frac{1}{n^{3/4}}\right)
      = 1 - \cO\left(\frac{1}{n^{1/2}}\right) \ .
  \]

  Conditioned on $\cE_1 \cap \cE_2$, we compute the maximum number of jobs of size at least $b$ that an optimum solution cannot complete until time $t$ as follows
  \begin{align*}
      \frac{1}{b} \left(P - t \right) \leq \frac{1}{b} \left(2n + n^{3/4} - 2(n-n^{3/4}) + 1 \right)
      \leq \cO\bigg( \frac{n^{3/4}}{\log n} \bigg) \ .
  \end{align*}
  Note that for sufficiently large $n$ it holds that $\cO(\frac{n^{3/4}}{\log n}) \leq \frac{1}{2}n^{3/4} \leq B$, i.e., there are many such long jobs.
  In total, the expected number of alive jobs in an optimum solution at time $t$ is at most
  \[
      \EX[\delta^*(t)] \leq \left(1-\cO\left(\frac{1}{n^{1/2}} \right) \right) \cdot \cO\bigg( \frac{n^{3/4}}{\log n} \bigg) + \cO \bigg( \frac{1}{n^{1/2}} \bigg) \cdot n = \cO\bigg( \frac{n^{3/4}}{\log n} \bigg) \ ,
  \]
  which concludes the proof of the lemma.
  \end{proof}

\begin{lemma}\label{lemma:randomized-lb-alg}
   $\EX[\delta(t)] \geq \Omega(n^{3/4})$.
\end{lemma}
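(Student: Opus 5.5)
The plan is to show that, on the event that no job is long enough to reach its last operation, the instance is effectively a non-clairvoyant instance with memoryless job sizes. On that event every unit of processing completes the processed job with probability exactly $1/2$, independently of the past. The number of completions by time $t$ is then concentrated around $t/2 \approx n - n^{3/4}$, which leaves $\Omega(n^{3/4})$ alive jobs.

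\textbf{Step 1 (discard long jobs).} Let $\cE_3$ be the event that $P_j \le \ops - 1$ for all jobs $j$. Since $\pr[P_j \ge \ops] = 2^{-\ops+1}$ and $n = \floor{2^{\ops/2}}$, a union bound gives $\pr[\bar{\cE}_3] \le n \cdot 2^{-\ops+1} = \cO(1/n)$. On $\cE_3$, every job consists of $P_j$ unit operations followed only by zero-length operations, so the stage-$\ops$ operation never reveals anything.

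\textbf{Step 2 (the algorithm's information on $\cE_3$).} On $\cE_3$, after a job has received $i$ units of processing, the algorithm only knows that $P_j > i$ or that the job has completed; the next operation of an unfinished job always has length $1$. By memorylessness of the geometric distribution, conditioned on $P_j > i$, the job completes after its next unit with probability exactly $1/2$. This holds independently of all other jobs and of the algorithm's past decisions.

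We would formalize this with a coupling. Let $X_1, X_2, \ldots$ be i.i.d.\ Bernoulli$(1/2)$ coins, and let the $s$-th unit of processing performed by the (deterministic) algorithm complete its job iff $X_s = 1$. This produces exactly the right joint law of the schedule, up to the event $\bar{\cE}_3$. The subtle point is that the construction of the $P_{j_\ell}$ slightly truncates the distribution at stage $\ops$. We would handle this by defining the coupling on the untruncated geometric process and noting that the two processes agree on $\cE_3$.

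\textbf{Step 3 (counting completions).} Idle time and zero-progress steps can only reduce completions, so the number of jobs completed by time $t$ is at most $\sum_{s=1}^{t} X_s$. This sum has mean $t/2 \le n - n^{3/4}$ and variance $t/4 = \cO(n)$. Chebyshev's inequality then gives
\[
\sum_{s \le t} X_s \le n - n^{3/4} + n^{5/8} \quad \text{with probability } 1 - \cO(n^{-1/4}).
\]
On this event intersected with $\cE_3$, we get $\delta(t) \ge n^{3/4} - n^{5/8} = \Omega(n^{3/4})$. Since $\delta(t) \ge 0$ always, we obtain $\EX[\delta(t)] \ge (1 - o(1)) \cdot \Omega(n^{3/4}) = \Omega(n^{3/4})$.

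\textbf{Main obstacle.} The delicate step is Step 2: justifying rigorously that an adaptive deterministic algorithm cannot bias the completion probabilities. We would first try the standard argument from the randomized lower bounds in \cite{AzarLT22,GuptaKLSY25}. That argument reveals the random sizes lazily, flipping a fresh coin exactly when a unit is processed. It also handles the truncation at stage $\ops$ by charging everything on $\bar{\cE}_3$ to the $\cO(1/n)$ failure probability, which costs at most $\cO(1/n) \cdot n = \cO(1)$ in expectation.

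Combined with \Cref{lemma:randomized-lb-opt}, this gives $\EX[\delta(t)]/\EX[\delta^*(t)] = \Omega(\log n) = \Omega(\ops)$. Yao's principle then yields \Cref{theorem:randomized-lb}.
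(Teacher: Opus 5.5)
Your proposal is correct and follows essentially the same route as the paper. You restrict to the probability-$1-\cO(1/n)$ event that no job has a positive-length last operation, note that each unit of processing then acts as a fresh fair coin deciding completion, and apply Chebyshev to the number of heads among the $t=\floor{2(n-n^{3/4})}$ flips. The paper instead conditions on the all-short event and then treats the coins as unbiased; your choice to keep the unconditioned coin process and intersect with $\cE_3$ only at the end is, if anything, slightly more careful.
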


\begin{proof}
We call a job $j$ \emph{short} if it has
$P_j \leq m$; otherwise \emph{long}. 
Let $\cE_1$ be the event that
all $n$ jobs are short.
The probability of a job $j$ being
long is $\pr(P_j > m) = 2^{-(m-1)} = 2/n^2$,
and thus, a union bound gives
\(
\pr(\cE_1) = 1 - n\cdot(2/n^2) = 1-2/n \ .
\)

Let $L := n^{3/4}$, and define $\cE_2$ to be the event that the
algorithm finishes at most $n-\frac{L}2$ jobs during the first
$t = \floor{2(n-L)}$ timesteps. Thus, under $\cE_2$ we have $\delta(t) \geq \frac{L}{2}$.

In the following, we condition on $\cE_1$ and assume that all jobs are short.
Note that every active short operation has an operation processing time of $1$,
hence an algorithm cannot distinguish between all jobs under $\cE_1$.
To analyze $\cE_2$ under $\cE_1$, consider the complementary event, and suppose
the algorithm finishes $k = n-\frac{L}2$ jobs in the first $t$
timesteps. Say these completed jobs are numbered $1, \ldots, k$,
this means that $\sum_{j = 1}^k P_j \leq t = \floor{2(n-L)}$. 
Rephrasing, this means that among the first $N := 2(n - L)$ unbiased coin flips
there are at least $n-\frac{L}2$ heads. 
However, the expected number of heads is only $\EX[X] = N/2 = n - L$,
hence we can show that this only happens with small probability.
Indeed, by Chebyshev's inequality, we have
\[
\pr\left[X \ge \EX[X] + \frac{L}{2}\right]
\le \frac{\VAR(X)}{(L/2)^2} 
\leq \frac{4n}{L^2} = o(1) \ ,
\]
because $\VAR(X)=N\cdot \frac12(1-\frac12)=N/4 \le n$ and $L=n^{3/4}$.
Thus, we have $\pr(\cE_2 \mid \cE_1) = 1-o(1)$.

Since also $\Pr(\cE_1) = 1-o(1)$, we conclude $\Pr(\cE_1 \cap \cE_2) = 1-o(1)$.
Then
  \[
  \EX[\delta(t)] \geq \frac{L}{2} \cdot \pr(\cE_1 \cap \cE_2) \geq
    \Omega(L) = \Omega(n^{3/4}) \ ,
\]
which proves the lemma.
\end{proof}

We can now prove~\Cref{theorem:randomized-lb}.

\begin{proof}[Proof of \Cref{theorem:randomized-lb}]
    \Cref{lemma:randomized-lb-alg,lemma:randomized-lb-opt} immediately imply 
    \begin{align*}
        \frac{\EX[\delta(t)]}{\EX[\delta^*(t)]} \ge \Omega(\log n) = \Omega(m).
    \end{align*}
    The theorem now follows by applying \Cref{prop:dos} to every realization of our distribution and finally using Yao's principle.
\end{proof}

\subsection{Stronger Lower Bound for Operations-SRPT}
\label{app:srpt:lb}

In this section, we show that the competitive ratio of Operations-SRPT is not constant, even if each job has two operations.

\thmSRPTlB*

The basic idea of our construction follows from the lower bound that is given in~\cite{BenderMR04} for an approximate version of SRPT.
This version of SRPT groups the jobs into classes based on their remaining processing time, i.e., job $j$ is in class $k$ at time $t$ if $p_j(t) \in [2^k,2^{k+1})$, and always processes some job from the smallest class. However, our lower bound has to take into account that Operations-SRPT is precise w.r.t.\ the remaining processing times of the active operations. This requires a more involved construction compared to~\cite{BenderMR04}. 

To construct the lower bound instance, we first observe that if there 
exists a \emph{critical} point in time $t$ at which Operations-SRPT and 
$\OPT$ satisfy certain properties, then there exists a sequence of job 
releases after time $t$ that strictly \emph{increases} the local competitive 
ratio of Operations-SRPT on the instance. We define these critical times as follows. 
For the remaining section, fix a sufficiently small constant $\varepsilon > 0$. We also assume w.l.o.g.\ that $\OPT$ is running SRPT.
Let $J(t)$ and $J^*(t)$ denote the sets of active jobs
at time $t$ in the schedule of Operations-SRPT and $\OPT$, respectively.

\begin{definition}
    \label{def:lb:critical}
    Fix parameters $M \in \mathbb{R}_+$ and $k \in \mathbb{N}_+$. We refer to a time $t$ as $(k,M)$-critical if the following three conditions hold:
        \begin{enumerate}
            \item $|J^*(t)| = 1$ and $p^*_j(t) =2M - \varepsilon$ for the single job $j \in J^*(t)$.
            \item There is a job $i \in J(t)$ such that $p_{i_1}(t) = 0$ and $p_{i_2}(t) = \frac{M}{2^k}-\varepsilon$.
            \item All active operations in $J(t)$ except for $i_2$ have remaining processing time at least $\frac{M}{2^k}$.
        \end{enumerate}
\end{definition}

The next lemma shows that if Operation-SRPT reaches a critical time in its schedule, then the lower bound instance can enforce another critical point in time with a worse local competitive ratio.

\begin{lemma}
    \label{lem:lb:recursion}
    Let $t$ be a $(k,M)$-critical time. Then, an adversary can enforce a $(k+1,M)$-critical time $t' = t + 2 \frac{M}{2^k} - \varepsilon$ with $|J(t')| = |J(t)|+1$ by releasing two jobs during $[t,t')$.
\end{lemma}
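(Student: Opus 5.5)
The plan is a direct construction: fix $\delta := \frac{M}{2^k}$, release two new jobs $a$ and $b$, and track Operations-SRPT and $\OPT$ through $[t,t')$ with $t' = t + 2\delta - \varepsilon$. The target state at $t'$ is the following. In $\OPT$, the new jobs have been completed and the old job still has remaining time $2M-\varepsilon$. In Operations-SRPT, job $i$ has completed, $a$ and $b$ are still alive, and $b$ plays the role of the new ``$i$''. That is, $b_1$ has just finished and $b_2$ has remaining time $\frac{\delta}{2}-\varepsilon = \frac{M}{2^{k+1}}-\varepsilon$. Every other active operation has remaining time at least $\frac{\delta}{2}$. This gives $|J(t')| = |J(t)| - 1 + 2 = |J(t)|+1$, and the three conditions of \Cref{def:lb:critical} hold with $k+1$.

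\textbf{Volume accounting.} For $\OPT$ to keep exactly one alive job with remaining time $2M-\varepsilon$ at $t'$, the new jobs must have total volume exactly $p_a + p_b = 2\delta-\varepsilon$. Also, $\OPT$ (which runs SRPT) must process them first; this holds since they are smaller than $2M-\varepsilon$ whenever $k\ge 1$, or after a trivial check for $k=0$. In Operations-SRPT, $i_2$ is the unique smallest active operation at $t$ by condition~3. So, provided $a_1$ is chosen no smaller than $\delta-\varepsilon$ (with ties going the right way), $i$ finishes at $t+\delta-\varepsilon$. This leaves exactly $\delta$ units of algorithm time in $[t+\delta-\varepsilon,t')$ for the new jobs. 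Hence at $t'$ the algorithm keeps new volume $\delta - \varepsilon$, which must split as $b_2$ with $\frac{\delta}{2}-\varepsilon$ and $a$ with exactly $\frac{\delta}{2}$ in its active operation.

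\textbf{Choosing $a$ and $b$.} I would release $a$ at time $t$ with $a_1$ just above $\delta-\varepsilon$ and a tiny $a_2$, respecting $a_1>a_2$. After $i$ completes, the algorithm works on $a_1$ for some amount $x$. At that moment I release $b$ with $b_1$ slightly smaller than the remaining time of $a_1$, so that Operations-SRPT switches to $b_1$ and runs it until exactly $t'$. I would set $b_2 = \frac{\delta}{2}-\varepsilon < b_1$, which is hidden while $b_1$ runs. I then solve for $x$, $b_1$ and $a_1$ from the volume equation $p_a+p_b = 2\delta-\varepsilon$ and the end-state requirements.

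\textbf{Main obstacle.} The hard part is the additive-$\varepsilon$ bookkeeping, which the naive choice just fails. Three constraints compete: the remaining time of $a_1$ when $b$ arrives must exceed $b_1$ so that the switch happens; $b_1$ must end exactly at $t'$ so that $b_2$ is untouched; and $a$'s remaining time must be at least $\frac{\delta}{2}$. With the numbers above these meet only with equality. To make the inequalities strict, I would first try exploiting the tie-breaking rule, where smaller operation index wins, so $b_1$ beats a stage-$1$ tie. Failing that, I would perturb $a_2$ and the $\varepsilon$-shifts by fractions of $\varepsilon$, for instance putting part of $a$'s remaining volume into $a_2$, and absorb the slack using the fact that $\varepsilon$ is a fixed small constant. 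Finally I would check condition~3 for the old jobs: their remaining times are at least $\delta \ge \frac{\delta}{2}$ and were untouched in $[t,t')$.
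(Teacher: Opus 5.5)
There is a genuine gap, and it is not an $\varepsilon$-level bookkeeping issue. In your construction Operations-SRPT switches to $b_1$ at $r_b = t+\delta-\varepsilon+x$ and runs it uninterrupted until exactly $t'$, so $p_{b_1} = t'-r_b = \delta - x$. Then
$p_b = p_{b_1}+p_{b_2} = (t'-r_b) + \frac{\delta}{2}-\varepsilon > t'-r_b$.
So no schedule, and in particular not $\OPT$, can complete $b$ by $t'$. Hence $|J^*(t')|\ge 2$, and condition~1 of \Cref{def:lb:critical} fails for every choice of $x$.

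Your volume accounting only checks the aggregate $p_a+p_b = 2\delta-\varepsilon$. You also need release-date feasibility for $\OPT$. First, it must not run out of new work before $r_b$, i.e.\ $p_a \ge r_b - t$; otherwise SRPT works on the big job. In your scheme this fails by about $\frac{\delta}{2}-\varepsilon$. Second, it must fit all of $b$ into $[r_b,t']$. On the algorithm's side, the second requirement says that during $[r_b,t']$ Operations-SRPT must spend at least $\frac{\delta}{2}-\varepsilon$ units on jobs other than $b$. Your plan (switch to $b_1$ and run it to $t'$) never does this. Perturbing $a_2$ or the $\varepsilon$-shifts cannot close a gap of order $\delta/2$.

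There are also smaller problems.
\begin{itemize}
\item $a_1$ and $b_1$ are both stage-$1$ operations, so the paper's rule breaks their tie by job index, not by operation index.
\item At $t'$, $a$'s total remaining time must be exactly $\frac{\delta}{2}$, while condition~3 requires its \emph{active} operation to have at least $\frac{\delta}{2}$. So any positive $a_2$ behind an active $a_1$ violates condition~3.
\item Your end state forces $p_{a_1}\ge\delta$, not ``just above $\delta-\varepsilon$''. This conflicts with $a_1$ being preferred over old operations whose remaining time is exactly $\delta$, which do occur when the lemma is iterated.
\end{itemize}

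The fix is to swap the roles, so that the job released at $t$ becomes the new $i$. The paper releases $a$ at $t$ with $p_{a_1}=\delta-\frac{\varepsilon}{2}$ and $p_{a_2}=\frac{\delta}{2}-\frac{\varepsilon}{2}$. It releases $b$ at $t_b=t+\frac{3\delta}{2}-\varepsilon$, exactly when $\OPT$ finishes $a$, with $p_{b_1}=\frac{\delta}{2}$ and $p_{b_2}=0$.
\begin{itemize}
\item Operations-SRPT finishes $i_2$ at $t+\delta-\varepsilon$ and then runs $a_1$.
\item At $t_b$ the remaining time of $a_1$ is $\frac{\delta}{2}-\frac{\varepsilon}{2}<p_{b_1}$, so the algorithm never touches $b$.
\item $a_1$ completes at $t'-\frac{\varepsilon}{2}$, and $a_2$ then runs for $\frac{\varepsilon}{2}$, leaving $\frac{\delta}{2}-\varepsilon$.
\item $\OPT$ completes $a$ during $[t,t_b]$ and $b$ during $[t_b,t']$, with no idle time and without touching the big job.
\end{itemize}
The algorithm leaving $b$ entirely unprocessed is exactly what makes $b$ finishable by $\OPT$ while keeping $b$'s active operation at $\frac{\delta}{2}$.
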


\begin{proof}
Let $t$ and $t'$ be as defined in the lemma, and consider the following job releases during $[t,t')$:
    \begin{enumerate}
        \item At time $t$, release a job $a$ with $p_{a_1} = \frac{M}{2^k} - \frac{\varepsilon}{2}$ and $p_{a_2} = \frac{M}{2^{k+1}} - \frac{\varepsilon}{2}$.
        \item At time $t_b = t + \frac{M}{2^k} + \frac{M}{2^{k+1}} -  \varepsilon$, release a job $b$ with $p_{b_1} = \frac{M}{2^{k+1}}$ and $p_{b_2} = 0$.
    \end{enumerate}
Next, we argue about the behavior of $\opt$ and Operations-SRPT between $t$ and $t'$:
\begin{itemize}
    \item First consider the time interval $[t,t_b]$: 
    Since $t$ is $(k,M)$-critical by assumption, there are exactly two jobs that $\opt$ can process at time $t$: The job $j$ with $p_j^*(t)= 2M-\varepsilon$ that exists because $t$ is $(k,M)$-critical by assumption and the job $a$ that is released at $t$. Since $a$ is the shorter job and no further jobs are released during $[t,t_b)$, $\opt$ will process $a$ and complete it by time $t_b$.

    Operations-SRPT will first work on the stage-$2$ operation of the job $i$ with $p_{i_2}(t) = \frac{M}{2^k}-\varepsilon$, which exists by the assumption that $t$ is $(k,M)$-critical. Since no further jobs are released, Operations-SRPT will run the job to completion. 
    
    The completion of $i_2$ takes $\frac{M}{2^k}-\varepsilon$ time units. For the remaining $\frac{M}{2^{k+1}}$ time units of the interval, Operations-SRPT reduces the remaining processing time of $a$'s stage-$1$ operation $a_1$ to $\frac{M}{2^{k+1}}- \frac{\varepsilon}{2}$.
    
    \item Next, consider the interval $[t_b,t']$. At the beginning of this interval, job $b$ is released.  Since $t'-t_b = \frac{M}{2^{k+1}}$, the optimal solution has enough time to complete $b$ during $[t_b,t']$. Hence, $J^*(t) = J^*(t')$ and $|J^*(t')| = 1$. Since the single job in $J^*(t) \cap J^*(t')$ is not processed during the interval, we still have $p_j^*(t') = 2M-\varepsilon$ by our assumption that $t$ is $(k,M)$-critical. Hence, $t'$ satisfies the first condition of~\Cref{def:lb:critical}.

    Starting from time $t_b$, Operations-SRPT continues to process the stage-$1$ operation of $a$, and completes it at time $t'-\frac{\varepsilon}{2}$. 
    After $a$ is completed, we have $p_{a_2}(t'-\frac{\varepsilon}{2}) = \frac{M}{2^{k+1}} - \frac{\varepsilon}{2}$,  and
    $p_{b_1}(t'-\frac{\varepsilon}{2}) = \frac{M}{2^{k+1}}$. 
    Hence, in the interval $[t' -\frac{\varepsilon}{2},t']$, Operations-SRPT works on the stage-$2$ operation of job $a$.
    At time $t'$, we have $p_{b_1}(t') = \frac{M}{2^{k+1}}$ and $p_{a_2}(t') = \frac{M}{2^{k+1}}-\varepsilon$. The existence of job $a$ implies  that $t'$ satisfies the second condition of~\Cref{def:lb:critical}.

    In order to show that $t'$ is $(k+1,M)$-critical, it only remains to show that the third condition of~\Cref{def:lb:critical} is satisfied, i.e., all jobs apart from $a$ have remaining time at least $\frac{M}{2^{k+1}}$ at time $t'$. For $b$, we already argued above that the remaining time is exactly $\frac{M}{2^{k+1}}$. Since we have $J(t') = (J(t)\setminus \{i\}) \cup \{a,b\}$, the assumption that $t$ is $(k,M)$-critical implies that all jobs in $J(t')\setminus \{a,b\}$ have remaining time at least $\frac{M}{2^k}$. Hence, the condition is satisfied.

    We conclude the proof of the lemma by observing that $J(t') = (J(t)\setminus \{i\}) \cup \{a,b\}$ implies $|J(t')| = |J(t)|+1$.\qedhere
\end{itemize}
\end{proof}

By repeatedly applying~\Cref{lem:lb:recursion}, an adversary can make the local competitive ratio of Operations-SRPT arbitrarily large, provided that Operations-SRPT reaches a first critical time at some point in its schedule. The next lemma shows that there indeed is an instance that forces Operations-SRPT to reach a critical time.

\begin{lemma}
\label{lb:instance:start}
Fix parameter $M \in \mathbb{R}_+$ and let $t = M$. There exists an instance that releases two jobs at time $0$ and forces time $t$ to be $(1,M)$-critical.
\end{lemma}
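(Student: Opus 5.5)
The plan is to write down an explicit two-job instance, simulate Operations-SRPT and $\opt$ (which we assume runs SRPT) on $[0,M]$, and check the three conditions of \Cref{def:lb:critical} for $k=1$.

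\textbf{Where the instance comes from.} Equal remaining volumes at time $t=M$ force the shape of the instance. Conditions~1 and~2 say that $\opt$ has exactly one job, with remaining time $2M-\varepsilon$, while the algorithm still holds a job $i$ whose only remaining work is a stage-$2$ operation of length $\frac{M}{2}-\varepsilon$. Since the algorithm's remaining volume also equals $2M-\varepsilon$, it must hold a second job $j$ with $\frac{3M}{2}$ remaining. Its active operation must have length at least $\frac{M}{2}$ by condition~3. So $\opt$ should have finished $i$ by time $M$, while the algorithm spent part of $[0,M]$ on $j$ instead.

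The only way to make Operations-SRPT spend time on $j$ first is a short first operation of $j$ followed by a long second one. This means $j$ will not satisfy $p_{j_1}>p_{j_2}$, but \Cref{lb:instance:start} does not require that.

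\textbf{Concrete instance.} At time $0$ I release two jobs:
\begin{itemize}
\item job $i$ with $p_{i_1}=p_{i_2}=\frac{M}{2}$, so $p_i=M$;
\item job $j$ with $p_{j_1}=\frac{M}{2}-\varepsilon$ and $p_{j_2}=\frac{3M}{2}$, so $p_j=2M-\varepsilon$.
\end{itemize}

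\textbf{Step 1: $\opt$.} SRPT on whole jobs processes $i$ during $[0,M]$, since $p_i=M<2M-\varepsilon=p_j$, and never touches $j$. Hence $J^*(M)=\{j\}$ with $p^*_j(M)=2M-\varepsilon$, which is condition~1.

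\textbf{Step 2: Operations-SRPT.}
\begin{itemize}
\item At time $0$ the active operations are $i_1$ (length $\frac{M}{2}$) and $j_1$ (length $\frac{M}{2}-\varepsilon$). The algorithm runs $j_1$ to completion at time $\frac{M}{2}-\varepsilon$.
\item Then $j_2$ (length $\frac{3M}{2}$) becomes active. The algorithm runs $i_1$, which finishes at time $M-\varepsilon$.
\item Then $i_2$ (length $\frac{M}{2}$) becomes active and is shorter than $j_2$, so it is processed during $[M-\varepsilon,M]$.
\end{itemize}
At time $M$ we therefore have $p_{i_1}(M)=0$ and $p_{i_2}(M)=\frac{M}{2}-\varepsilon$, which is condition~2. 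The only other active operation is $j_2$ with $p_{j_2}(M)=\frac{3M}{2}\ge \frac{M}{2}$, which is condition~3.

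\textbf{Main obstacle.} The delicate part is the boundary and tie-breaking behaviour. I need to check that no ties occur between competing operations, so that the stated tie rule plays no role: $\frac{M}{2}-\varepsilon<\frac{M}{2}$ at time $0$, and $\frac{M}{2}<\frac{3M}{2}$ later. I also need $\varepsilon$ small, and the release times and lengths to be integral after scaling $M$ and $\varepsilon$, as the section allows.

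If one insists that the two initial jobs also satisfy $p_{j_1}>p_{j_2}$, the volume argument above shows this is impossible. The algorithm would then always finish whichever first operation is shorter, and afterwards a second operation that is no longer. So I would explicitly state that $j$ is an auxiliary starting job. Its single occurrence adds only a constant to the job counts and does not affect the $\Omega(\log n)$ growth obtained by iterating \Cref{lem:lb:recursion}.
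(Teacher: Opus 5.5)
Your proof is correct. For $p_{i_1}=p_{i_2}=\frac{M}{2}$, $p_{j_1}=\frac{M}{2}-\varepsilon$, $p_{j_2}=\frac{3M}{2}$ the simulation is right and no ties occur. At time $M$ all three conditions of \Cref{def:lb:critical} hold with $k=1$.

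Your method is the paper's: one explicit two-job instance, then a simulation of SRPT for $\opt$ and Operations-SRPT on $[0,M]$. The instance differs, however, in a way that matters. The paper releases $j$ with $p_{j_1}=p_{j_2}=M-\frac{\varepsilon}{2}$ and $i$ with $p_{i_1}=M$, $p_{i_2}=0$. Operations-SRPT then finishes $j_1$ at $M-\frac{\varepsilon}{2}$ and leaves $p_{j_2}(M)=M-\varepsilon$. That is $\frac{M}{2^k}-\varepsilon$ for $k=0$, not the required $\frac{M}{2}-\varepsilon$. So the paper's instance actually produces the $k=0$ configuration, from which the proof of \Cref{lem:lb:recursion} happens to go through verbatim. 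Yours literally witnesses $(1,M)$-criticality as stated and plugs directly into \Cref{lem:lb:recursion} with $k=1$.

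Your closing remark is also right. With strictly decreasing operations, Operations-SRPT processes the two jobs one after the other. This forces $p_i=\frac{3M}{2}-\varepsilon$ and $p_j=\frac{3M}{2}$, so $\opt$ cannot finish any job by time $M$. The paper's starting job has $p_{j_1}=p_{j_2}$, so it also misses the strict hypothesis of \Cref{thm:srpt:lb}.

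Calling $j$ \emph{auxiliary} does not by itself satisfy a hypothesis quantified over all jobs. If you want that hypothesis, start the recursion at $k=0$ instead, for example with $(p_{i_1},p_{i_2})=(M-\frac{\varepsilon}{4},M-\frac{3\varepsilon}{4})$ and $(p_{j_1},p_{j_2})=(M,0)$. This instance is $(0,M)$-critical at time $M$, with both jobs strictly decreasing.
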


\begin{proof}
Consider the following two job releases at time $0$:
\begin{enumerate}
    \item Release a job $j$ with $p_{j_1} = p_{j_2} = M - \frac{\varepsilon}{2}$.
    \item Release a job $i$ with $p_{i_1} = M$ and $p_{i_2} = 0$.
\end{enumerate}
During $[0,M]$ the optimal solution will only work on job $i$ and complete it at time $M$. This implies $J^*(t) = \{j\}$ and $p^*_j(t)=2M-\varepsilon$. Hence, $t=M$ satisfies the first condition of~\Cref{def:lb:critical}.

Operations-SRPT, on the other hand, first completes the stage-$1$ operation of $j$ at time $M-\frac{\varepsilon}{2}$. Then, Operations-SRPT uses the remaining $\frac{\varepsilon}{2}$ time units to work on the stage-$2$ operation of job $j$ and reduces its remaining time to $M-\varepsilon$. The existence of $j$ implies that $t$ satisfies the second condition of~\Cref{def:lb:critical}. Finally, the fact that $p_{i_1}(t) = M$ implies that the third condition is satisfied as well. Hence, $t$ is $(1,M)$-critical.
\end{proof}

Having~\Cref{lb:instance:start,lem:lb:recursion} in place, we are ready to complete our lower bound construction and prove~\Cref{thm:srpt:lb}.

\begin{proof}[Proof of~\Cref{thm:srpt:lb}]

Fix a large $k^* \in \mathbb{N}$, define $M = 2^{k^*}$, and consider the following instance:
\begin{enumerate}
\item At time $0$, release two jobs as described in the proof of~\Cref{lb:instance:start}.
\item For each $k \in \{2,\ldots,k^*\}$:
\begin{enumerate}
    \item Consider time $t = M + \left(\sum_{k' = 2}^{k-1} 2 \cdot \frac{M}{2^{k'}}- \varepsilon\right)$ and $t'= t + 2 \frac{M}{2^k}-\varepsilon$.
    \item Release two jobs during $[t,t']$ as described in the proof of~\Cref{lem:lb:recursion}.
\end{enumerate}
\item Let $\hat{t}=  M + \left(\sum_{k' = 2}^{k^*} 2 \cdot \frac{M}{2^{k'}}- \varepsilon\right)$. At each time $t' \in \{\hat{t}+1-\varepsilon,\ldots,\hat{t}+1-\varepsilon+M\}$, release a job $j$ with $p_{j_1}=1$ and $p_{j_2}=0$. 
\end{enumerate}

We first argue that $|J(\hat{t})| = k^*+1$ and $|J^*(\hat{t})|=1$. Using~\Cref{lb:instance:start}, we get that $|J(M)| = 2$ and $|J^*(M)|=1$. For each time $t'$ considered in Step $2$ of the construction, we can apply~\Cref{lem:lb:recursion}. Hence, $|J^*(t')|=1$, whereas $|J(t')|$ increases by one for each such time $t'$. This gives us $|J(\hat{t})| = k^*+1$ and $|J^*(\hat{t})|=1$.

Furthermore, we can observe that $\hat{t}$ is $(k^*,M)$-critical. By~\Cref{def:lb:critical} and since $M = 2^{k^*}$, this implies that there is a single operation in $J(\hat{t})$ with a remaining time of $1-\varepsilon$, and other active operations have remaining time at least $1$. By time $\hat{t}' = \hat{t}+1-\varepsilon$, we have $|J(\hat{t}')| = k^*$, $|J^*(\hat{t}')|=1$, and all operations in $J(\hat{t}')$ have a remaining time at least one.
The third step of the construction implies $|J(t)| = k^*+1$ and $|J^*(t)|=2$ for each $t \in [\hat{t}',\hat{t}'+M]$.

Since $\opt$ never has more than two active jobs during $[0,\hat{t}'+2M-\epsilon]$ and completes the last job at time $\hat{t}'+2M-\epsilon$, we can conclude that $\opt \le 2 \cdot (\hat{t}+1-\varepsilon+M) + 2M -\epsilon \le 2 \cdot(2M+1-\varepsilon) + 2M \le 7M$, using that $\hat{t} \le 2M$. On the other hand, we can lower bound $\alg \ge k^* \cdot M$ by just considering the cost incurred during $[\hat{t}',\hat{t}+M]$. Hence, $\frac{\alg}{\opt} \in \Omega(k^*)$.

By observing that $n = 2 \cdot k^* + M = 2 \log_2(M) + M$, we can conclude with $\frac{\alg}{\opt} \in \Omega(\log(n))$.
\end{proof}

\subsection{Lower Bounds for Scheduling with (Non-Obligatory) Testing}
\label{sec:optional-tests}

In this section, we briefly discuss the scheduling model of \emph{scheduling with testing}~as introduced by Dürr et al~\cite{DurrEMM20}. In contrast to scheduling with obligatory tests~\cite{DogeasEL24}, which corresponds to our setting with $m=2$ operations, this model considers \emph{optional} tests. That is, the scheduler can decide whether it wants to test jobs or not. Formally, each job $j$ has a \emph{testing time} $c_j$, a \emph{worst-case processing time} $\bar{p}_j$, and an actual processing time $p_j \le \bar{p}_j$ which is initially unknown to the scheduler.  If the scheduler decides to test a job $j$, then it has to execute the test for $c_j$ time units. After the test is completed, the actual job processing time $p_j$ is revealed, and the scheduler can complete $j$ by processing it for $p_j$ time units. On the other hand, if the scheduler decides to not test $j$, then it has to complete $j$ by processing it for $\bar{p}_j$ time units. Hence, depending on whether a job is tested or not, it has to be processed for $c_j+p_j$ or $\bar{p}_j$ time units. Algorithms for scheduling with testing are usually analyzed via adversarial competitive analysis, and compared against an offline optimal solution that knows the actual processing times in advance.\footnote{We remark that there exists an alternative line of research on scheduling with stochastic tests~\cite{LeviMS19}. However, within the context of our work, the adversarial model is more relevant.}

Scheduling with testing has been studied for sum of completion time minimization in different machine environments~\cite{DurrEMM20,LiuLWZ23,GongCH24,AlbersE20,BuldS25,DameriusKLX023}, for minimizing the makespan~\cite{AlbersE21,GongGLM22,GongL21,GongFLSSZ25,DameriusKLX023}, and energy minimization~\cite{BampisDKLP21}. However, to the best of our knowledge, scheduling with testing has not yet been studied for flow time minimization. We show that this is for a good reason, by providing strong lower bounds for scheduling with testing to minimize the flow time on a single machine. 

For the special case of sum of completion time minimization on a single machine, the best possible (deterministic) competitive ratio is known to be between $1.8546$ and $2$ for unit test sizes~\cite{DurrEMM20}. For arbitrary test sizes, the best-known upper bound is $2.3166$~\cite{LiuLWZ23}. If we allow randomization, the lower bound becomes slightly weaker and the upper bounds slightly improve~\cite{DurrEMM20,LiuLWZ23}.

In contrast to these results for sum of completion time minimization, and also to our results for flow time minimization with obligatory tests (cf.~\Cref{sec:swt}), we show that no deterministic algorithm for flow time minimization with testing has a constant competitive ratio even for unit tests. 

The main reason for this strong lower bound is that, depending on the executed tests, a deterministic algorithm and the optimal solution can have a different total processing volume, which is in contrast to scheduling with obligatory tests where all algorithms have the same processing volume. In fact, the following lower bound holds even if we allow the algorithm to retrospectively rearrange its schedule into the optimal SRPT schedule for the tests it decided to execute. The proof of the following lower bound combines the lower bound instance of~\cite{MotwaniPT94} with ideas from the lower bounds in~\cite{DurrEMM20}.

\begin{lemma}
    Every deterministic algorithm for scheduling with testing to minimize the flow time on a single machine has a competitive ratio of $\Omega(n^{1/3})$, even for unit testing times.
\end{lemma}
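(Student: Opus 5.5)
We build an adaptive adversary against an arbitrary deterministic algorithm $\mathcal{A}$, combining the structure of the Motwani--Phillips--Torng lower bound (\Cref{thm:det-lb}) with a test/no-test dilemma, and conclude via \Cref{prop:dos}. All tests have $c_j = 1$. Each job has worst-case size $\bar p_j = P$ for a parameter $P$, and its true size after testing is chosen adversarially, either $0$ or $P$. The point is that testing costs one extra unit per job. The adversary can make tests useless, so the algorithm pays $1+P$ where $P$ would have sufficed. Or it can make untested jobs short, so skipping the test wastes $P$. Unlike obligatory tests, the two schedules then have different total volume, and the algorithm cannot recover by reordering.

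\textbf{Phase 1: a batch of $n_1$ jobs at time $0$.} Whenever the algorithm completes the test of a job, the adversary reveals $p_j = P$, so that test was a pure loss of one unit. Whenever the algorithm begins processing an untested job, that job has length $P$ no matter what, so the adversary can fix its hidden true size to $0$ in hindsight, so that $\opt$ finishes it after a unit test. We then fix a time $T$ of order $n_1$ (or $P$, depending on the branch) and split into two cases according to what the algorithm has done by then.

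In the first case, the algorithm tested many jobs, say a constant fraction. Each such test created one unit of volume that $\opt$ avoids, since $\opt$ runs those jobs untested for $P$. This accumulated surplus volume of order $n_1$ is used in a second phase in the style of \Cref{thm:det-lb}. After $T$ we release a stream of unit jobs, one per time unit, for a long horizon $H$. $\opt$ has finished its backlog and keeps $O(1)$ jobs plus the residual large ones alive. The algorithm lags by its surplus, which forces it to hold roughly surplus$/1$ extra jobs, or else to leave large jobs unfinished, in the manner of MPT. Either way its count exceeds $\opt$'s by a factor in the number of jobs.

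In the second case, the algorithm tested few jobs. Then it must either sit on untested jobs or run them at full length $P$ each. The adversary reveals, for all remaining untested jobs, that $p_j = 0$. $\opt$ tests them all and finishes each in $1$ unit. The algorithm has wasted about $P$ units per job it ran untested, or still holds almost all $n_1$ jobs. Hence at time about $n_1$, $\opt$ has $O(1)$ alive jobs while the algorithm has $\Omega(n_1)$ alive, or a volume deficit that a subsequent unit-job stream converts into many alive jobs.

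\textbf{Parameter balancing.} Balancing $n_1$, $P$ and the stream length $H$ so that both branches give ratio $\rho$ with total job count $n \approx n_1 + H$ should yield $n_1 \approx P \approx n^{1/3}$ and $H \approx n^{2/3}$. This gives $\rho = \Omega(n^{1/3})$, matching the classical non-clairvoyant deterministic bound. To apply \Cref{prop:dos} we scale so that $\delta^*(t) = N$, by replicating the gadget $N$ times or adding $N$ long dummy jobs.

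\textbf{Main obstacle.} The hard step is the adversary's bookkeeping in the mixed case, where the algorithm partially processes jobs, interleaves tests, and preempts untested work. The adversary must keep true sizes consistent with everything revealed so far, while guaranteeing that every unit the algorithm spends is "wasted" relative to some feasible $\opt$. The first approach is a potential argument: charge each algorithm time unit either to a useless test or to untested processing beyond the revealed short size, and show the wasted total is $\Omega(\min(n_1, P))$ by time $T$ in every case. This reduces the problem to the two clean branches above.
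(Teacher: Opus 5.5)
There is a genuine gap, and it is more basic than the mixed-case bookkeeping you flag: with $\bar p_j = P$ large (you propose $P \approx n_1$), the test/no-test dilemma cannot produce a superconstant ratio.

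\textbf{Why your Case 1 fails.} A useless test wastes only one unit, against a job of length $P$. The jobs the algorithm tested are long for $\mathrm{OPT}$ as well. So $\mathrm{OPT}$ also keeps $\Theta(n_1)$ of them alive until it has spent about $n_1 P$ time on them. By then, an algorithm running SRPT on the revealed sizes trails $\mathrm{OPT}$ only by the surplus volume $O(n_1)$. It can hold that surplus in $O(n_1/P+1)=O(1)$ partially processed long jobs. Hence your claim that the surplus forces roughly surplus$/1$ extra alive jobs during the unit-job stream is false. Moreover, $\mathrm{OPT}$'s own phase-one cost of order $n_1^2 P$ would swamp a stream of length $H\approx n^{2/3}$. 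Your balancing is also inconsistent: $n_1+H\approx n^{2/3}$, not $n$.

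\textbf{Why no refinement helps.} On your instance family with $P \gtrsim n_1$, the algorithm that tests every phase-one job as early as possible and then runs SRPT on the revealed sizes loses at most about one long job's worth of volume. It keeps only a constant factor more jobs alive than $\mathrm{OPT}$. So no refinement of the case split or potential argument can rescue the construction.

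\textbf{Why your Case 2 fails as stated.} If the algorithm tests $\varepsilon n_1$ jobs, $\mathrm{OPT}$ still has $\varepsilon n_1$ long jobs alive, not $O(1)$.

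\textbf{The missing idea.} Put the worst-case size in the same range as the test. The paper takes $c_j=1$ and $\bar p_j=\alpha=(1+\sqrt{5})/2$, so both mistakes waste a constant fraction of a job: testing a long job, and running a short job untested.
\begin{itemize}
\item It releases $k$ jobs at time $0$, half with $p_j=0$ and half with $p_j=\alpha$.
\item If the algorithm tests $x$ of them, the adversary makes the first $\min\{x,k/2\}$ tests hit long jobs.
\item If $x\ge k/2$, every long job costs $1+\alpha$, so at time $(1+\alpha)k/2$ at least $k/(2(1+\alpha))$ jobs are unfinished.
\item If $x\le k/2$, no short job is tested and every job costs at least $\alpha$, so at least $(\alpha-1)k/(2\alpha)$ jobs are unfinished.
\item For the golden ratio, both bounds are a constant fraction of $k$.
\item $\mathrm{OPT}$ tests exactly the short jobs, has total volume $(1+\alpha)k/2$, and is empty at that time.
\item Releasing one unit job per time step for $k^2$ steps keeps the algorithm's remaining volume, and hence its number of alive jobs, at $\Omega(k)$. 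This gives cost $\Omega(k^3)$ against $O(k^2)$, with $n=k+k^2$.
\end{itemize}
The argument only counts the total processing requirements induced by the final set of tested jobs. So no bookkeeping of interleaved or partially processed work is needed.
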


\begin{proof}
    Consider the following adversarial strategy:
    \begin{enumerate}
        \item At time $t = 0$, release a total of $k$ jobs each with testing time $c_j = 1$ and worst-case processing time $\bar{p}_j = \alpha$. Select the actual processing times as follows:
        \begin{itemize}
            \item $k/2$ jobs have an actual processing time of $p_j = 0$. Use $S$ to refer to the set of these jobs.
            \item $k/2$ jobs have an actual processing time of $p_j = \alpha$. Use $L$ to refer to the set of these jobs.
        \end{itemize}
        \item At each time $t \in \{ (1+\alpha)\frac{k}{2}, (1+\alpha)\frac{k}{2} + 1, \ldots , (1+\alpha)\frac{k}{2} + k^2\}$, release a job $j$ with $c_j = 1$, $\bar{p}_j = 1$ and $p_j = 1$.
    \end{enumerate}

    Fix an arbitrary deterministic algorithm and let $0 \le x \le k$ denote the number of jobs in $S \cup L$ that the algorithm decides to test.  Let $X \subseteq S \cup L$ denote the set of jobs that the algorithm decides to test. 
    
    By definition of the model, the algorithm has to process each job $j \in S \cup L$ for 
    $$
        \hat{p}_j = \begin{cases}
            c_j + p_j & j \in X\\
            \bar{p}_j & j \not\in X
        \end{cases}
    $$
    time units. Consider the algorithm's schedule for these processing times $\hat{p}_j$. We assume that the algorithm schedules the jobs according to SRPT for the processing times $\hat{p}_j$. This is without loss of generality, as any deviation from SRPT can never decrease the objective value of the algorithm~\cite{Schrage68}.
    
    Since a deterministic algorithm cannot distinguish between the jobs in $S$ and $L$ before testing them, an adversary can force the first $\min\{x,k/2\}$ tests executed by the algorithm to all go to jobs in $L$. 
    We distinguish two cases:

    \begin{enumerate}
        \item     If $x \ge k/2$, then the algorithm tests all jobs in $L$. Hence, the algorithm processes each job $j \in L$ for $\hat{p}_j = 1+\alpha$ time units and each job in $S$ for $\hat{p}_j \ge 1$ time units. By time $t = \frac{k}{2} \cdot (1+\alpha)$, the algorithm can complete at most $ (1 +\frac{\alpha}{1+\alpha}) \frac{k}{2} $ jobs, the $\frac{k}{2}$ jobs in $S$ and $\frac{\alpha}{2(1+\alpha)} k$ jobs in $L$. Hence, at least $\frac{1}{2(\alpha+1)}k$ jobs in $S \cup L$ are still unfinished at $t = \frac{k}{2} \cdot (1+\alpha)$.
        \item     If $x \le k/2$, then the algorithm does not test any jobs in $S$. Hence, the algorithm processes each job $j \in S$ for $\hat{p}_j = \alpha$ time units and each job in $j \in L$ for $\hat{p}_{j} \ge \alpha$ time units. By time $t = \frac{k}{2} \cdot (1+\alpha)$, the algorithm can complete at most $\frac{1+\alpha}{2\alpha} k$ jobs. Hence, at least $\frac{\alpha-1}{2\alpha}k$ jobs in $S \cup L$ are still unfinished at $t = \frac{k}{2} \cdot (1+\alpha)$.
    \end{enumerate}

    Choosing $\alpha = \frac{1 + \sqrt{5}}{2}$, guarantees that the algorithm in both cases has at least $c \cdot k$ unfinished jobs at time $t = \frac{k}{2} \cdot (1+\alpha)$ with $c = \frac{3}{4}-\frac{\sqrt{5}}{4}$. Since we assume that the algorithms schedules according to SRPT, at most one of these jobs has been processed during $[0,t]$. Except of a single job, all unfinished jobs have a remaining processing time of at least one at $t$.  By the second step of the adversarial strategy, this means that, at any $t \in \{ (1+\alpha)\frac{k}{2}, (1+\alpha)\frac{k}{2} + 1, \ldots , (1+\alpha)\frac{k^2}{2}\}$, the algorithm has $\Omega(k)$ unfinished jobs. Integrating over time, this implies an objective value of $\Omega(k^3)$.

    Next, consider the optimal solution. The optimal solution tests exactly the jobs in $S$ and leaves all remaining jobs untested.  Thus, the jobs in $S \cup L$ have a total processing volume of $(1+\alpha) \cdot \frac{k}{2}$ and the optimal solution is able to complete all jobs in $S \cup L$ before any further jobs are released. 
    The total flow time for the jobs in $S \cup L$ is
    $$
    \sum_{i = 1}^{\frac{k}{2}} i \cdot \alpha + \sum_{i = \frac{k}{2}+1}^{k} i \le \alpha \cdot \sum_{i = 1}^{k} i  =  \frac{\alpha}{2} \cdot (k^2+k). 
    $$
    The $k^2$ remaining jobs outside $S \cup L$ all have a flow time of exactly $1$, as the optimal solution can complete each job before the next one is released. Hence, these jobs have a total flow time of $k^2$, which implies that the total flow time over all jobs is $\mathcal{O}(k^2)$.

    Combining both bounds, the competitive ratio of the algorithm is $\Omega(\frac{k^3}{k^2}) = \Omega(k)$. The total number of released jobs is $n = k + k^2 < k^3$. Thus, $k > n^{1/3}$, which completes the proof.
\end{proof}

\section{Missing Proofs of~\Cref{sec:swt}}
\label{app:swt}

We provide all proofs that are missing from~\Cref{sec:swt}.
Recall that, for a type $\gamma \in \{A,B\}$, $Q^*_{\gamma}(t)$ denotes the set of alive  \emph{type-$\gamma$ operations} at time $t$ in an optimal solution.
For a type $\gamma \in \{A,B\}$ and an operation stage $\ell \in \{1,2\}$, let $Q^*_{\ell \gamma}(t)$ denote the set of alive \emph{stage-$\ell$ type-$\gamma$ operations} at time $t$ in an optimal solution.
Define $Q^*_{\ell \gamma}(t,t') := Q^*_{\ell \gamma}(t) \cap Q^*_{\ell \gamma}(t')$, 
$\delta^*_{\ell \gamma}(t) := |Q^*_{\ell \gamma}(t)|$ and 
$\delta^*_{\ell \gamma}(t,t') := |Q^*_{\ell \gamma}(t,t')|$.

We start by proving basic properties of Operations-SRPT.

\begin{restatable}{observation}{obsTypeBInterrupt}
  \label{obs:typeB:interrupt}
  Once a type-B job $j$ is processed for the first time, it will never be interrupted.
\end{restatable}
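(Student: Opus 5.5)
The claim is a direct monotonicity argument about the remaining time of the operation being processed; I would argue by contradiction. Fix a type-B job $j$ and let $t_0$ be the first time Operations-SRPT processes $j$. Since operations are only revealed in order and $j$ has not been processed before, at time $t_0$ the active operation of $j$ is $j_1$, with $p_{j_1}(t_0)=\unit$. The key quantity to track is the remaining processing time of $j$'s active operation from $t_0$ on. While $j$ runs, the remaining time of $j_1$ strictly decreases, so in $[t_0,t_0+\unit)$ it is at most $\unit$. Once $j_1$ completes, the next active operation is $j_2$ with $p_{j_2}<\unit$ by the definition of type B, and that remaining time also only decreases. Hence, from $t_0$ until $C_j$, the active operation of $j$ always has remaining processing time at most $\unit$, and strictly less than $\unit$ as soon as any progress has been made.

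Next I would check what could preempt $j$ at some $t\in(t_0,C_j)$. By the rule of Operations-SRPT, this requires an active operation $q$ with $p_q(t)<p_{o_j(t)}(t)$, or equal remaining time with a winning tie-break. The competitors fall into three groups.

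\begin{itemize}
\item \emph{Newly released jobs.} Their active operation is a stage-$1$ operation of full length $\unit$, which is at least the remaining time of $j$'s operation. If there is equality, that can only happen at $t=t_0$, before $j$ has received any processing.
\item \emph{Jobs already active at $t_0$ and not processed during $[t_0,t)$.} Their remaining operation times are unchanged since $t_0$. At $t_0$, SRPT chose $j_1$, so each such $q$ satisfied $p_q(t_0)\ge \unit$, or had equal time but lost the tie-break. At $t>t_0$, $j$'s active operation has remaining time strictly below $\unit$, so these operations cannot win.
\item \emph{The transition at completion of $j_1$.} At that moment $j$'s new operation $j_2$ has length $<\unit$. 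This is still strictly smaller than all the other operations just described, so no competitor appears here either.
\end{itemize}

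The main subtlety is ties. The paper fixes tie-breaking in favour of stage-$1$ operations, and earlier by smaller index. So I would have to check that a stage-$1$ operation of some job $j'$ with remaining time exactly $\unit$ cannot tie with $j$'s active operation after $t_0$. That holds because $j$'s remaining time is strictly below $\unit$ after one unit of processing. At time $t_0$ itself, $j$ was chosen, so the tie was already resolved in $j$'s favour. Another stage-$1$ operation that has been partially processed cannot exist alongside $j$: it would have been processed before $t_0$ and then interrupted, which I would exclude by the same invariant or simply note is irrelevant, since its remaining time is compared against $j$'s. If such a competitor had remaining time below $j$'s, SRPT would have preferred it at $t_0$.

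Once these cases are covered, the smallest active remaining time at any $t\in[t_0,C_j)$ always belongs to $j$, so $j$ runs uninterrupted to completion.
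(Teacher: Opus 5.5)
Your proof is correct and follows essentially the same route as the paper's. At the first time $t_0$ that $j_1$ is processed, every other active operation has remaining time at least $\unit$. From then on, $j$'s active operation has remaining time strictly below $\unit$, including after the switch to $j_2$ since $p_{j_2}<\unit$. Newly released jobs only bring stage-$1$ operations of length exactly $\unit$, so nothing can preempt $j$.

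Your separate paragraph on partially processed stage-$1$ competitors is redundant. Any such operation would have remaining time below $\unit$ and would have been preferred over $j_1$ at $t_0$, which your second bullet already rules out.
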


\begin{proof}
Let $t$ denote the earliest time such that the algorithm processes the first operation $j_1$ of $j$ during $[t,t+1]$.
 At time $t$, all active operations $q \in J(t)$ have to satisfy $p_q(t) \ge \unit$, as otherwise $j_1$ with $p_{j_1}(t)=\unit$  would not be processed. Since $p_{j_1}(t+1) < \unit$ and $p_{j_2} < \unit$, no active operation in $J(t)$ will be processed before $j$ completes.

  This only leaves jobs $i$ with $r_i > t$ to potentially interrupt $j$. However, the stage-$1$ operations $i_1$ of such jobs $i$ all have a size of $\unit$, so they also will not interrupt the processing of $j$.
\end{proof}

\obsSinglePartial*

\begin{proof}
  Following the same argumentation as in the previous observation, we can argue that a job with a remaining processing time less than $\unit$ will never be interrupted. This implies that there can be at most one such job at any point in time.
\end{proof}

We continue by showing that the volume invariant~\eqref{eq:volume:eq} indeed implies local competitiveness at time $\ts$.

\lemVolumeConversion*

\begin{proof}
    First, consider the following more fine-grained characterization of $\vol_\ts^*$, which uses the assumption of the lemma:
    \begin{align*}
      \vol_\ts^* 
      &=  \vol^*_\ts(Q^*_{2A}(\ts)) + \vol^*_{\ts}(Q^*_{B}(\ts)) + \vol^*_{\ts}(Q^*_{1A} (\ts)) \\
      &\le \vol_\tau(L_{2A}(|Q^*_{2A}(\ts)|,\ts)) + \vol^*_{\ts}(Q^*_{B}(\ts)) + \vol^*_{\ts}(Q^*_{1A} (\ts)) \ .
    \end{align*}  
    Rearranging yields
    \begin{equation}\label{eq:2}
      \vol_\ts^*-  \vol_\tau(L_{2A}(|Q^*_{2A}(\ts)|,\ts))  \leq  \vol^*_{\ts}(Q^*_{B}(\ts)) + \vol^*_{\ts}(Q^*_{1A})(\ts) \ .
    \end{equation}
    We continue to upper bound the different terms on the right-hand side of Inequality~\eqref{eq:2}:
    \begin{enumerate}
      \item  $\vol^*_{\ts}(Q^*_{1A} (\ts)) \le \unit |Q^*_{1A} (\ts)| \le \unit |Q^*_{2A} (\ts)| $: The first inequality is 
      because all stage-$1$ operations have remaining processing times of at most $\unit$. The second inequality is because each active stage-$1$ operation at $\ts$ must have a corresponding stage-$2$ operation in $Q^*_{2A} (\ts)$.
      \item  $\vol^*_{\ts}(Q^*_{B}(\ts)) < 2\unit \cdot |Q^*_{2B}(\ts)|$. The number of alive type-B jobs at time $\ts$ in the optimal solution is $|Q^*_{2B}(\ts)|$. Since all such jobs have a remaining processing time of strictly less than $2\unit$, this gives us a total volume of strictly less than $2p \cdot |Q^*_{2B}(\ts)|$.
    \end{enumerate}
    We now plug these upper bounds into~\eqref{eq:2} and obtain:
    \begin{align*}
    \vol_\ts^*-  \vol_\tau(L_{2A}(|Q^*_{2A}(\ts)|,\ts))
      \le  \unit|Q^*_{2A} (\ts)| + 2\unit \cdot |Q^*_{2B}(\ts)| \ .
    \end{align*}

    Since $\vol_\ts = \vol_\ts^*$, this implies $\vol_\ts  \le \unit|Q^*_{2A} (\ts)| + 2\unit \cdot |Q^*_{2B}(\ts)| + \vol_\tau(L_{2A}(|Q^*_{2A}(\ts)|,\ts))$. 
    By assumption of the lemma and using that the volume of a job is at least the volume of its second operation, we get that the jobs that correspond to the operations in $L_{2A}(|Q^*_{2A}(\ts)|,\ts)$ have a total remaining volume of at least $\vol_\tau(L_{2A}(|Q^*_{2A}(\ts)|,\ts))$. 
    Hence, all other jobs must have a remaining volume of at most $\vol_\ts -  \vol_\tau(L_{2A}(|Q^*_{2A}(\ts)|,\ts)) \le \unit|Q^*_{2A} (\ts)| + 2\unit \cdot |Q^*_{2B}(\ts)|$.
    As the algorithm by~\Cref{obs:single:partial} has at most one job with a remaining processing time less than $\unit$ at time $\ts$, this volume is only enough for at most $|Q^*_{2A} (\ts)| + 2 \cdot |Q^*_{2B}(\ts)|$ jobs. Together with the $|Q^*_{2A} (\ts)|$ jobs that correspond to the stage-$2$ operations in $L_{2A}(|Q^*_{2A}(\ts)|,\ts)$, this gives us a total number of jobs
    $$
    |J(\ts)| \le 2 \cdot |Q^*_{2A} (\ts)|  + 2 \cdot |Q^*_{2B}(\ts)|  = 2 \cdot |Q^*_2(\ts)| = 2 \cdot |J^*_2(\ts)| \ ,
    $$
    and completes the proof.
\end{proof}

Given~\Cref{lem:volume:conversion}, it suffices to show that the volume invariant~\eqref{eq:volume:eq} holds in order to prove~\Cref{thm:unit:locally:2} and, thus,~\Cref{thm:main-swt}. That is, it suffices to show $\vol_\tau (L_{2A}(|Q^*_{2A}(\ts)|,\ts)) \ge \vol^*_\ts(Q^*_{2A}(\ts))$.
As outlined in~\Cref{sec:swt}, a common approach (see e.g.~\cite{Schrage68,BansalD07,AzarLT21}) for proving such an invariant is to show that
$$
\vol_t(L_{2A}(|Q^*_{2A}(t) \cap Q^*_{2A}(\ts)|,t)) \ge \vol^*_t(Q^*_{2A}(t) \cap Q^*_{2A}(\ts))
$$
holds at any time $0 \le t \le \ts$ via induction, which then directly implies~\eqref{eq:volume:eq} for $t = \ts$. Note that this corresponds to inequality~\eqref{eq:volume:inductive} in~\Cref{sec:swt}. 

A critical difference between our setting and previous works employing this proof strategy is that there can be a time $t \in [0,\ts]$ such that the algorithm processes an operation $q \in L_{2A}(|Q^*_{2A}(t) \cap Q^*_{2A}(\ts)|,t)$ during $[t,t+1]$, but we still have $\vol_t(L_{2A}(|Q^*_{2A}(t) \cap Q^*_{2A}(\ts)|,t)) < \vol_t$. 
However, by exploiting the tie breaking rule, we can observe that this can only happen if $p_q(t) < \unit$ 

\begin{restatable}{observation}{obsT}
    \label{obs:t_0}
    Let $t \le \ts$ be a time such that Operations-SRPT processes an operation $q \in L_{2A}(|Q^*_{2A}(t) \cap Q^*_{2A}(\ts)|,t)$ with $p_q(t) \ge \unit$ during $[t,t+1]$. Then, $\vol_t(L_{2A}(|Q^*_{2A}(t) \cap Q^*_{2A}(\ts)|,t)) = \vol_t \ge \vol^*_t(Q^*_{2A}(t) \cap Q^*_{2A}(\ts))$.
\end{restatable}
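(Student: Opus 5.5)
The plan is to show that, in the situation of the observation, every job alive at time $t$ in the algorithm's schedule is a type-A job. Moreover, its entire remaining volume should sit in its stage-$2$ operation, and that operation should belong to $L_{2A}(|Q^*_{2A}(t) \cap Q^*_{2A}(\ts)|,t)$. Then $\vol_t(L_{2A}(\cdot,t)) = \vol_t$, and the second inequality follows from $\vol_t = \vol^*_t \ge \vol^*_t(Q^*_{2A}(t)\cap Q^*_{2A}(\ts))$. The latter holds because all jobs have the same total volume in every schedule, and restricting to a subset of alive operations can only decrease remaining volume.

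First, since $q \in L_{2A}$, the operation $q$ is a stage-$2$ operation of a type-A job, and it is the active operation being processed. By the Operations-SRPT rule, every active operation $q'$ at time $t$ satisfies $p_{q'}(t) \ge p_q(t) \ge \unit$. If some job $j'$ had its stage-$1$ operation active, then $p_{j'_1}(t) \le \unit \le p_q(t)$. In case of equality, the tie-breaking rule (stage-$1$ is preferred) would select $j'_1$ over $q$, which is a contradiction. If instead $p_{j'_1}(t) < \unit \le p_q(t)$, SRPT would also prefer $j'_1$. Hence no stage-$1$ operation is active, so every alive job has only its stage-$2$ operation remaining. A type-B job $j'$ with active stage-$2$ operation would have $p_{j'_2}(t) < \unit \le p_q(t)$, again contradicting the choice of $q$. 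So all alive jobs are type-A jobs whose remaining volume equals that of their alive stage-$2$ operation. Consequently, $\vol_t$ equals the total remaining volume of the alive stage-$2$ type-A operations.

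It remains to argue that $L_{2A}(|Q^*_{2A}(t)\cap Q^*_{2A}(\ts)|,t)$ contains all alive stage-$2$ type-A operations with positive remaining volume. This is the main obstacle. Since $q$ is the one with the smallest remaining time among them and $q \in L_{2A}$ (the $x$-largest set), every operation at least as large as $q$ is in the set as well, up to ties. Every other alive stage-$2$ type-A operation has remaining time at least $p_q(t)$, because all of them are active. So the $x$-largest set, containing $q$, must contain all of them, provided ties are resolved consistently; I would fix the ordering in $L_{2A}$ to be consistent with the algorithm's tie-breaking to handle this. This yields $\vol_t(L_{2A}(\cdot,t)) = \vol_t$, completing the proof.
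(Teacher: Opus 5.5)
Your proposal is correct and takes the same approach as the paper. Operations-SRPT, together with the rule that breaks ties in favor of stage-$1$ operations, forces every job alive at time $t$ to be a type-A job whose active stage-$2$ operation has remaining time at least $p_q(t)$; hence $L_{2A}(|Q^*_{2A}(t)\cap Q^*_{2A}(\ts)|,t)$ holds all of the algorithm's remaining volume, and $\vol_t=\vol^*_t$ finishes the argument. Your convention that ties inside $L_{2A}$ are broken consistently with the algorithm's tie-breaking is something the paper's one-line proof relies on without saying so, and it is needed for $\vol_t(L_{2A}(\cdot,t))=\vol_t$ when several stage-$2$ operations share the remaining time $p_q(t)$.
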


\begin{proof}
 By the definition of Operations-SRPT, 
 an operation $q \in L_{2A}(|Q^*_{2A}(t) \cap Q^*_{2A}(\ts)|,t)$ with $p_q(t) \ge \unit$ is only processed if no other operations in addition to $L_{2A}(|Q^*_{2A}(t) \cap Q^*_{2A}(\ts)|,t)$ are alive at time $t$. 
 Note that this uses the tie breaking rule that prefers stage-$1$ operations over stage-$2$ operations. 
 This immediately implies $\vol_t(L_{2A}(|Q^*_{2A}(t) \cap Q^*_{2A}(\ts)|,t)) = \vol_t = \vol_t^* \ge \vol^*_t(Q^*_{2A}(t) \cap Q^*_{2A}(\ts))$.
\end{proof}

To address the additional challenge described above, we will consider a carefully chosen point in time $t_0 \le \ts$ and adjust our proof strategy depending on whether there exists a time $t \in [t_0,\ts]$ at which Operations-SRPT processes an operation $q \in L_{2A}(|Q^*_{2A}(t) \cap Q^*_{2A}(\ts)|,t)$ with $p_q(t) < \unit$. 

\begin{definition}[Critical time $t_0$]
    \label{def:unit_t0}
    Let $t_0$ be the latest point in time $t_0 \le \ts$ such that Operations-SRPT processes an operation $q \in L_{2A}(|Q^*_{2A}(t_0) \cap Q^*_{2A}(\ts)|,t_0)$ with $p_q(t_0) \ge \unit$ during $[t_0,t_0+1]$. If such a point in time does not exist, then define $t_0 := 0$. 
\end{definition}

If $t_0 \not= 0$, then $t_0$ is defined as the latest time $\le \ts$ at which we can apply~\Cref{obs:t_0}. Hence, Inequality~\eqref{eq:volume:inductive} holds for $t = t_0$, either by~\Cref{obs:t_0} or trivially if $t_0 = 0$.  
In case that there is no $t \in [t_0,\ts]$ such that the algorithm processes an operation $q \in L_{2A}(|Q^*_{2A}(t) \cap Q^*_{2A}(\ts)|,t)$ with $p_q(t) < \unit$ during $[t,t+1]$, 
we show that~\eqref{eq:volume:inductive} holds for every $t \in [t_0,\ts]$ by essentially replicating the inductive proof of Schrage's SRPT analysis~\cite{Schrage68}, 
but starting the induction at time $t_0$ instead of time $0$. 
This gives the following lemma, which we prove in~\Cref{app:swt:2}.

\begin{restatable}{lemma}{lemUnitA}
 \label{lem:unit:A:volume}
     If there is no time $t \in [t_0,\ts]$ such that the algorithm processes an operation $q \in L_{2A}(|Q^*_{2A}(t) \cap Q^*_{2A}(\ts)|,t)$ with $p_q(t) < \unit$ during $[t,t+1]$, then 
     $\vol_\ts(L_{2A}(|Q^*_{2A}(\ts)|,\ts)) \ge \vol^*_{\ts}(Q^*_{2A}(\ts))$.
\end{restatable}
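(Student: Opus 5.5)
We prove by induction over integer times $t\in[t_0,\ts]$ that Inequality~\eqref{eq:volume:inductive} holds:
\[
\vol_t(L_{2A}(\delta^*_{2A}(t,\ts),t)) \ge \vol^*_t(Q^*_{2A}(t,\ts)).
\]
At $t=\ts$ we have $Q^*_{2A}(\ts,\ts)=Q^*_{2A}(\ts)$, so this gives the claim. The base case $t=t_0$ holds by~\Cref{obs:t_0} if $t_0\neq 0$, and trivially if $t_0=0$, since both sides are empty/zero before any release. One subtlety: at $t_0$ itself the algorithm processes an operation of $L_{2A}$. The step $t_0\to t_0+1$ therefore uses the strong form $\vol_{t_0}(L_{2A}(\cdot,t_0))=\vol_{t_0}$, which we keep available for that step.

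For the step $t\to t+1$ with $t\in[t_0+1,\ts-1]$ (and also the step from $t_0$), we follow Schrage. Write $x=\delta^*_{2A}(t,\ts)$ and $x'=\delta^*_{2A}(t+1,\ts)$. First account for releases at $t+1$: a newly released type-A job has its stage-2 operation alive with full size $p_{j_2}$ in both schedules. If it lies in $Q^*_{2A}(\ts)$, the right-hand side grows by $p_{j_2}$. On the left, the index grows by one, and adding one more operation to the candidate pool while taking one more largest element increases the top-$x$ sum by at least $p_{j_2}$. (Adding an element $a$ to a multiset raises its top-$(x+1)$ sum by at least $a$.)

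Next handle the algorithm's and OPT's processing during $[t,t+1]$. On the OPT side, the right-hand side drops by one if OPT works on an operation in $Q^*_{2A}(t,\ts)$, and otherwise stays unchanged. Operations leaving $Q^*_{2A}(t)\cap Q^*_{2A}(\ts)$ cannot occur except by completion, and completed operations have zero remaining volume, so $x'\le x$ only through completions, which only helps since they contribute $0$. Also, stage-1 operations in the algorithm becoming stage-2 operations only adds operations to the pool. On the algorithm side, if the processed operation $q$ is not in $L_{2A}(x,t)$, the left-hand side does not decrease, because remaining volumes of the top-$x$ set are unchanged and membership can only improve. If $q\in L_{2A}(x,t)$, the hypothesis of~\Cref{lem:unit:A:volume} combined with the maximality of $t_0$ leaves two cases. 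Either $p_q(t)\ge\unit$ and $t>t_0$, which contradicts the definition of $t_0$. Or $p_q(t)<\unit$, which is excluded by assumption. So for $t>t_0$ the case $q\in L_{2A}(x,t)$ never occurs. At $t=t_0$, the strong form gives that all remaining volume lies in $L_{2A}$. Since the total volume decreases by exactly one in both schedules and OPT's part $\vol^*(Q^*_{2A}(t,\ts))$ decreases by at most one, the inequality persists. In addition, $L_{2A}$ still contains all alive volume except possibly new releases, which were handled above.

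The main obstacle is the bookkeeping of the index change from $x$ to $x'$ together with the top-$x$ selection when operations switch stage or get completed. Operations in OPT may leave $Q^*_{2A}(t)$ by completion, while the algorithm's pool gains stage-2 type-A operations whenever a stage-1 operation completes. I would first verify the monotonicity facts: the top-$x$ sum is nondecreasing in the pool and in $x$, and processing a non-top operation leaves the top-$x$ sum unchanged. Then I would check each transition type in turn.
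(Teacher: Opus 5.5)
Your proposal is correct and follows the paper's route: a Schrage-style induction started at $t_0$ with \Cref{obs:t_0} as base case, the top-$x$ exchange argument for releases, and the fact that no operation of $L_{2A}(\delta^*_{2A}(\ts,t),t)$ is processed for $t>t_0$. You obtain that last fact directly from the maximality of $t_0$ and the lemma's hypothesis rather than through \Cref{lem:unit:no:below:bar}, and your separate treatment of the step out of $t_0>0$ via $\vol_{t_0}(L_{2A}(\cdot,t_0))=\vol_{t_0}=\vol^*_{t_0}$ is more careful than the paper, which invokes \Cref{lem:unit:no:below:bar} at $t-1=t_0$ even though $q_0\in L_{2A}$ is processed there. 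Two bookkeeping corrections: members of $Q^*_{2A}(t)\cap Q^*_{2A}(\ts)$ are alive at $\ts$, so this set only grows on $[t_0,\ts]$ and $x$ never decreases (a decrease would not ``only help'' the left-hand side); and stage-2 operations belong to the algorithm's pool from their release on, so stage-1 completions do not change the pool.
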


It remains to consider the case where there is a time $t \in [t_0,\ts]$ such that the algorithm processes an 
operation $q \in L_{2A}(|Q^*_{2A}(t) \cap Q^*_{2A}(\ts)|,t)$ with $p_q(t) < \unit$ during $[t,t+1]$. 
For this case, we show that $|J^*(\ts)| = |J(\ts)|$. 
To do so, we first observe that~\Cref{obs:t_0} implies $|J(t_0)| = |J^*(t_0)|$ and $J^*(t_0) \subseteq J^*(\ts)$ (cf.~\Cref{lem:unit:t0:n-jobs} in~\Cref{app:swt:1}). That is, at $t_0$ both the algorithm and the optimal solution have the same number of alive jobs, and the optimal solution completes jobs in $J^*(t_0)$ only after time $\ts$. In~\Cref{app:swt:1}, we prove the stronger statement that neither the algorithm nor the optimal solution complete any jobs during $[t_0,\ts]$, which implies the following lemma.

\begin{restatable}{lemma}{lemTOptimal}
\label{lemma:t0:optimal}
    If there is a time $t \in [t_0,\ts]$ such that the algorithm processes an operation $q \in L_{2A}(|Q^*_{2A}(t) \cap Q^*_{2A}(\ts)|,t)$ with $p_q(t) < \unit$ during $[t,t+1]$, then $|J^*(\ts)| = |J(\ts)|$.
\end{restatable}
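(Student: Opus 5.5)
The plan follows the route the paper sketches: first establish the situation at $t_0$, then show that no job completes in either schedule during $[t_0,\ts]$. Since $J(t_0)=J^*(t_0)$ in cardinality and the active sets can then only grow by the same arrivals, this gives $|J(\ts)| = |J(t_0)| + |\{j : t_0 < r_j \le \ts\}| = |J^*(\ts)|$.

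\textbf{Step 1 (situation at $t_0$).} The hypothesis gives a time $t\in[t_0,\ts]$ at which the algorithm processes a small ($p_q(t)<\unit$) operation of $L_{2A}$. Such a $q$ must exist, so $L_{2A}(|Q^*_{2A}(t_0)\cap Q^*_{2A}(\ts)|,t_0)$ is nonempty near $t_0$, and hence $t_0$ is a genuine time from \Cref{def:unit_t0}. If $t_0=0$, trivialities handle it. Otherwise \Cref{obs:t_0} applies: at $t_0$ the algorithm's only alive operations are stage-$2$ type-A operations, at most $|Q^*_{2A}(t_0,\ts)|$ many, and the whole volume satisfies $\vol_{t_0}\ge \vol^*_{t_0}(Q^*_{2A}(t_0,\ts))$. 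Combined with $\vol_{t_0}=\vol^*_{t_0}$, this forces OPT's alive volume at $t_0$ to sit entirely in $Q^*_{2A}(t_0,\ts)$. So every OPT-job alive at $t_0$ survives to $\ts$, and $|J^*(t_0)|=|Q^*_{2A}(t_0,\ts)|\ge |J(t_0)|$. Equality follows because SRPT-type schedules are locally optimal in count under equal volume; alternatively, I would argue equality directly from the fact that the algorithm holds at least that many $L_{2A}$ operations (this is \Cref{lem:unit:t0:n-jobs}). Thus OPT completes nothing in $[t_0,\ts]$ among jobs alive at $t_0$.

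\textbf{Step 2 (no completions in $[t_0,\ts]$).} For OPT, I would show that jobs released after $t_0$ are also not completed by $\ts$. The tool is an exchange/volume argument: if OPT completed some newly released job $i$ before $\ts$, then the algorithm, which processes shortest active operations, would also have worked on new short operations in the corresponding window. I would combine this with the maximality of $t_0$: after $t_0$ the algorithm never again processes a large ($\ge\unit$) operation of $L_{2A}$. Hence the large stage-$2$ type-A operations present at $t_0$ keep volume $\ge\unit$, the algorithm completes none of them, and the only way its count could drop is completing a new job. For the algorithm, I would use \Cref{obs:typeB:interrupt} and \Cref{obs:single:partial} to track which small job is being finished. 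The key claim is that the moment the algorithm processes a small $L_{2A}$ operation $q$ at time $t$, $q$ belongs to a job alive at $t_0$ that OPT still holds at $\ts$. I would then show that finishing it contradicts either the choice of $t_0$ (the next large $L_{2A}$ processing) or the set $L_{2A}$ shrinking in lockstep with $Q^*_{2A}(t,\ts)$.

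\textbf{Main obstacle.} The hard part is Step 2: ruling out completions by the algorithm of type-B or small-remainder jobs released in $(t_0,\ts]$, while OPT, symmetrically, might complete them. I would try to prove the invariant $|J(t)|=|J^*(t)|$ inductively on $[t_0,\ts]$. At each completion event of either schedule, I would compare with volume identities restricted to operations outside $Q^*_{2A}(t_0,\ts)$, using that both schedules receive the same arrivals, so that any completion in one forces a matching completion in the other. Ultimately neither schedule completes anything, because OPT's set $Q^*_{2A}(t_0,\ts)$ is untouchable and the algorithm is working on it.
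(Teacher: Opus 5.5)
\textbf{Gap in Step~2.} Step~2 names tools but never gives a mechanism for ruling out completions, and its one concrete claim is false. The missing idea is that the hypothesis pins down the operation $q_0$ processed at $t_0$. By \Cref{lem:unit:no:below:bar}, a small $L_{2A}$ operation can be processed after $t_0$ only if $t_0\neq 0$ and $p_{q_0}(t_0)=\unit$ exactly. So $q_0$ is one of the ``large'' stage-$2$ type-A operations present at $t_0$. Yet it drops below $\unit$ at $t_0+1$ and is never interrupted afterwards: all other active operations have remaining time at least $\unit$ at $t_0$, and new stage-$1$ operations have length $\unit$. Your assertion that these operations keep volume $\ge\unit$ and are not completed therefore fails precisely for $q_0$, whose job completes at $t_0+\unit$.

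What excludes $t_0+\unit\le\ts$ is SRPT's local optimality, not the maximality of $t_0$ and not an $L_{2A}$-versus-$Q^*_{2A}$ lockstep. Suppose $\ts\ge t_0+\unit$. Then OPT completes nothing in $[t_0,t_0+\unit]$: its jobs alive at $t_0$ survive to $\ts$ by \Cref{lem:unit:t0:n-jobs}, and a job released after $t_0$ has $p_j\ge\unit$ but fewer than $\unit$ time units before $t_0+\unit$. Together with $|J(t_0)|=|J^*(t_0)|$, this gives $|J(t_0+\unit)|<|J^*(t_0+\unit)|$. That is impossible because OPT runs SRPT (\Cref{lem:unit:small-A-execution}). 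Hence $\ts-t_0<\unit$, and both non-completion claims become immediate. The algorithm works on $q_0$ throughout $[t_0,\ts]$ without finishing it, and OPT cannot finish any job released after $t_0$. The inductive ``matching completions'' invariant you propose is neither substantiated nor needed.

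\textbf{Step~1.} Step~1 reaches the correct statement (\Cref{lem:unit:t0:n-jobs}), but your justification does not work as written.
\begin{itemize}
\item The inequality $\vol_{t_0}\ge\vol^*_{t_0}(Q^*_{2A}(t_0,\ts))$ combined with $\vol_{t_0}=\vol^*_{t_0}$ points the same way as the trivial bound $\vol^*_{t_0}\ge\vol^*_{t_0}(Q^*_{2A}(t_0,\ts))$. It forces nothing about where OPT's volume lies.
\item Your ``alternative'', a lower bound on the algorithm's number of $L_{2A}$ operations, goes in the wrong direction.
\item The working argument is the count $|J(t_0)|\le\delta^*_{2A}(\ts,t_0)\le|J^*(t_0)\cap J^*(\ts)|\le|J^*(t_0)|\le|J(t_0)|$. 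The first step uses that all of the algorithm's remaining volume sits in $L_{2A}$ (\Cref{obs:t_0}), and the last uses SRPT's local optimality.
\item The case $t_0=0$ is not settled by trivialities. It is excluded by the same \Cref{lem:unit:no:below:bar} that gives $p_{q_0}(t_0)=\unit$, and that lemma is the entry point of the whole argument.
\end{itemize}
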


The~\Cref{lemma:t0:optimal,lem:unit:A:volume,lem:volume:conversion} imply \Cref{thm:unit:locally:2} and thus \Cref{thm:main-swt}. For the remainder of this section, we prove the~\Cref{lemma:t0:optimal,lem:unit:A:volume,lem:volume:conversion}.

To this end, we first prove the following auxiliary lemma regarding the point in time $t_0$ as defined in~\Cref{def:unit_t0}. As outlined in~\Cref{sec:swt}, we use a different proof strategy depending on whether there is a time $t \in [t_0,\ts]$ at which the algorithm processes an operation $q \in L_{2A}(|Q^*_{2A}(t) \cap Q^*_{2A}(\ts)|,t)$ with $p_q(t) < \unit$. The lemma proves that such a time $t$ can only exist if $t_0 \not= 0$ and $p_{q_0}(t_0) > \unit$ for the operation $q_0$ that is processed at time $t_0$.

\begin{lemma}
    \label{lem:unit:no:below:bar}
    If $t_0 = 0$, or $t_0 > 0$ and $p_{q_0}(t_0) > \unit$ for the operation $q_0$ that is processed by the algorithm at time $t_0$, then each $t > t_0$ satisfies that no operation in $L_{2A}(\delta^*_{2A}(\ts,t),t)$ is processed by the algorithm at $t$.
\end{lemma}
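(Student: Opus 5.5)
The plan is to argue by contradiction. Since $t_0$ is the \emph{latest} time $\le \ts$ at which an operation of $L_{2A}(\delta^*_{2A}(\ts,t),t)$ with remaining time $\ge \unit$ is processed, no $t \in (t_0,\ts]$ has the algorithm processing such an operation with $p_q(t) \ge \unit$. So it suffices to exclude that some $t > t_0$ has the algorithm process an operation $q \in L_{2A}(\delta^*_{2A}(\ts,t),t)$ with $p_q(t) < \unit$. Suppose such a $t$ exists and fix the corresponding $q$.

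First, I trace $q$ back in time. Here $q = j_2$ is the stage-$2$ operation of a type-A job $j$, so $p_{j_2} \ge \unit$. Let $s < t$ be the time at which $q$ is processed with $p_q(s) = \unit$, i.e.\ the step at which its remaining time first drops below $\unit$. By the argument of \Cref{obs:typeB:interrupt} and \Cref{obs:single:partial}, a job with remaining time $<\unit$ is never interrupted, since all newly released jobs start with a stage-$1$ operation of size $\unit$. Hence the algorithm processes exactly $q$ during the whole interval $[s,t]$.

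The key step is to show that $q \in L_{2A}(\delta^*_{2A}(\ts,s),s)$. I compare the stage-$2$ type-A operations alive at $s$ and at $t$. During $[s,t]$ only $q$ is processed, so every other stage-$2$ type-A operation alive at $s$ has the same remaining time at $t$, while $q$ has a larger remaining time at $s$. The only new stage-$2$ type-A operations at $t$ come from jobs released in $(s,t]$. An operation of $Q^*_{2A}(t)\cap Q^*_{2A}(\ts)$ belonging to a job released before $s$ is also alive in the optimal schedule at $s$. Therefore $\delta^*_{2A}(\ts,s) \ge \delta^*_{2A}(\ts,t) - N$, where $N$ counts the jobs released in $(s,t]$ that contribute to $Q^*_{2A}(t) \cap Q^*_{2A}(\ts)$. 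At the same time, the rank of $q$ among the old operations at $s$ is at most its rank at $t$ minus the number of newly released operations ranked above it, which is also at most $N$. Combining these gives that $q$ is among the $\delta^*_{2A}(\ts,s)$ largest at $s$. This counting, including the tie-breaking needed to make the largest sets well defined, is where I expect the main difficulty.

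Given this, $s$ is a time at which an operation of $L_{2A}(\delta^*_{2A}(\ts,s),s)$ with remaining time $\unit \ge \unit$ is processed, and $s < t \le \ts$. By maximality of $t_0$ we get $s \le t_0 < t$, so the algorithm processes $q$ at $t_0$, giving $q_0 = q$ and $p_{q_0}(t_0) \le p_q(s) = \unit$. This contradicts the hypothesis $p_{q_0}(t_0) > \unit$. If instead $t_0 = 0$ arose because no such time exists at all, then $s$ itself contradicts the definition of $t_0$; the degenerate sub-case $s = t_0 = 0$ is handled in the same way. Either way we reach a contradiction, which proves the lemma.
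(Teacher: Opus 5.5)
Your overall plan is sound and essentially matches the paper's. The paper takes the earliest violating $t>t_0$, notes that $p_q(t-1)=\unit$, and shows $q$ was already in $L_{2A}$ one step earlier, contradicting the maximality of $t_0$ or $p_{q_0}(t_0)>\unit$. You instead jump back directly to the time $s$ with $p_q(s)=\unit$, using non-interruption to know that only $q$ runs on $[s,t]$. However, the step you flag as the crux does not close as written.

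Let $N$ be the number of jobs released in $(s,t]$ whose stage-$2$ operation lies in $Q^*_{2A}(\ts)$; in fact $\delta^*_{2A}(\ts,s)=\delta^*_{2A}(\ts,t)-N$ exactly. Let $M$ be the number of newly released stage-$2$ type-A operations ranked above $q$ at time $t$. Your chain gives $\mathrm{rank}_s(q)\le \mathrm{rank}_t(q)-M\le \delta^*_{2A}(\ts,t)-M$. To conclude $\mathrm{rank}_s(q)\le \delta^*_{2A}(\ts,s)=\delta^*_{2A}(\ts,t)-N$ you need $M\ge N$. You instead say this number is \emph{at most} $N$, which is the wrong direction and yields nothing. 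If some new $Q^*_{2A}(\ts)$-operations ranked below $q$, the count $\delta^*_{2A}(\ts,\cdot)$ could grow faster than $q$'s rank. Then $q$ could enter the top set without having been in it at $s$, which is precisely the scenario you must exclude.

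The missing idea is a size comparison. Only $q$ is processed during $[s,t]$, so every stage-$2$ type-A operation of a job released in $(s,t]$ is untouched at time $t$. Its remaining time is therefore $p_{j_2}\ge \unit>p_q(t)$. Hence \emph{all} newly released stage-$2$ type-A operations rank strictly above $q$, so $M$ equals their total number, which is at least $N$. The strict inequality also settles your tie-breaking concern. Ties between $q$ and new operations cannot occur, and among old operations only $q$'s remaining time changes, and only downward. So $q$'s rank among old operations can only worsen from $s$ to $t$ under any fixed tie-breaking rule.

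With this inserted, your argument is complete. The sub-case $s=t_0=0$ is in fact vacuous rather than handled ``the same way'': a stage-$2$ operation cannot be processed at time $0$, because every stage-$1$ operation has length $\unit\ge 1$.
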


\begin{proof}
    For every $t > t_0$, \Cref{def:unit_t0} immediately implies that no operation $q \in L_{2A}(\delta^*_{2A}(\ts,t),t)$ with $p_{q}(t) \ge \unit$ is processed at time $t$. It remains to show that the same holds for operations $q \in L_{2A}(\delta^*_{2A}(\ts,t),t)$ with $p_q(t) < \unit$.

    For the sake of contradiction, assume that there is a time $t > t_0$ such that an operation $q \in L_{2A}(\delta^*_{2A}(\ts,t),t)$ with $p_q(t) < \unit$ is processed at $t$. 
    Assume that $t > t_0$ is the earliest such time.
    By definition of Operations-SRPT and since all the stage-$1$
    operations have a processing time of exactly $\unit$ upon arrival, the
    choice of $t$ implies that $q$ is processed during $[t-1,t]$ and $p_q(t-1) = \unit$.  If $q \in L_{2A}(\delta^*_{2A}(\ts,t-1),t-1)$, then we immediately arrive at a contraction, either to the definition of $t_0$ or to our assumption that $p_{q_0}(t_0)> \unit$. 

    Thus, we must have $q \not\in L_{2A}(\delta^*_{2A}(\ts,t-1),t-1)$ but $q \in L_{2A}(\delta^*_{2A}(\ts,t),t)$. Since $q$ is the only operation that is processed during $[t-1,t]$, the only way that this can happen is if some job $j$ is released at $t-1$. However, this would imply that the stage-$1$ operation $j_1$ is active at $t-1$.
    By the tie breaking rule of Operations-SRPT, this is a contradiction to the second type-A operation $q$ with $p_q(t-1) = 1$ being processed during $[t-1,t]$.
\end{proof}

\subsection{Proof of~\Cref{lemma:t0:optimal}}
\label{app:swt:1}

We continue by proving~\Cref{lemma:t0:optimal}, which we restate here for the sake of convenience.

\lemTOptimal*

In the following, we use $q_0$ to refer to the operation which is processed at time $t_0$. By~\Cref{lem:unit:no:below:bar} and the assumption of~\Cref{lemma:t0:optimal}, we have $p_{q_0}(t_0) = \unit$ and $t_0 \not= 0$. This allows us to use the following  two auxiliary lemmas, which we will use to prove~\Cref{lemma:t0:optimal}.

\begin{lemma}
\label{lem:unit:t0:n-jobs}
 If $t_0 \not= 0$ and $p_{q_0}(t_0) \ge \unit$ for the operation $q_0$ that is processed by the algorithm at time $t_{q_0}$, then $|J(t_0)| = |J^*(t_0)| = |J^*(t_0) \cap J^*(\ts)|$.
\end{lemma}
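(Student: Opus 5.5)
The plan is to squeeze $|J(t_0)|$ between upper and lower bounds that all collapse to the same number. The lower bound will come from choosing the fixed optimal solution to be SRPT on the true job sizes, and the upper bound from the structure that \Cref{obs:t_0} forces on the algorithm's schedule at $t_0$.

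First, the choice of optimal solution. The offline optimum knows all processing times, so we may assume w.l.o.g.\ that it runs SRPT on the full job sizes $p_j$, since SRPT is optimal~\cite{Schrage68}. SRPT is locally $1$-competitive against every feasible schedule of the same instance. The schedule of Operations-SRPT is such a schedule, because operations of a job are processed in order. Hence $|J^*(t)| \le |J(t)|$ for all $t$, and in particular $|J^*(t_0)| \le |J(t_0)|$. (If the fixed optimum must be arbitrary, I would instead argue that replacing it by SRPT is harmless for \Cref{thm:unit:locally:2}, since it only decreases $|J^*(\ts)|$ pointwise in the needed direction. I expect the intended setup is simply that OPT is taken to be SRPT.)

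Second, the structure of the algorithm at $t_0$. Write $L := L_{2A}(\delta^*_{2A}(t_0,\ts),t_0)$. Since $t_0 \neq 0$, the definition of $t_0$ gives that the algorithm processes $q_0 \in L$ with $p_{q_0}(t_0) \ge \unit$. As in the proof of \Cref{obs:t_0}, which uses the tie-breaking rule favouring stage-$1$ operations, this is only possible if every operation alive at $t_0$ in the algorithm's schedule lies in $L$. Thus every job in $J(t_0)$ is a type-A job whose stage-$1$ operation is finished and whose stage-$2$ operation belongs to $L$. Distinct operations of $L$ belong to distinct jobs, so $|J(t_0)| = |L| \le \delta^*_{2A}(t_0,\ts)$, because $L$ consists of at most $\delta^*_{2A}(t_0,\ts)$ operations.

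Third, the chain. Each operation in $Q^*_{2A}(t_0,\ts)$ is a stage-$2$ operation alive in OPT at both $t_0$ and $\ts$, so its job lies in $J^*(t_0)\cap J^*(\ts)$. Distinct operations give distinct jobs, so $\delta^*_{2A}(t_0,\ts) \le |J^*(t_0)\cap J^*(\ts)|$. Combining everything,
\[
|J(t_0)| \le \delta^*_{2A}(t_0,\ts) \le |J^*(t_0)\cap J^*(\ts)| \le |J^*(t_0)| \le |J(t_0)| ,
\]
so all terms are equal, which is exactly the claim.

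The main obstacle is the second step: verifying carefully that processing a large-remaining operation $q_0 \in L$ really forces all alive operations to lie in $L$. This relies on the tie-breaking rule and on all stage-$1$ operations having size exactly $\unit \le p_{q_0}(t_0)$, and it is precisely the content already used in \Cref{obs:t_0}.
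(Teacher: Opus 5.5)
Your proof is correct and is essentially the paper's argument: the paper also combines the local $1$-competitiveness of SRPT (taken as the fixed optimum) with the chain $|J(t_0)| \le \delta^*_{2A}(\ts,t_0) \le |J^*(t_0)\cap J^*(\ts)| \le |J^*(t_0)|$. The only cosmetic difference is how the first inequality is obtained: the paper reads $|J(t_0)| \le \delta^*_{2A}(\ts,t_0)$ off the volume identity $\vol_{t_0} = \vol_{t_0}(L_{2A}(\delta^*_{2A}(\ts,t_0),t_0))$ from \Cref{obs:t_0}, whereas you use directly the structural fact behind that observation, namely that every alive operation at $t_0$ lies in $L_{2A}(\delta^*_{2A}(\ts,t_0),t_0)$.
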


\begin{proof}
By~\Cref{obs:t_0}, we know that 
$\vol^*_{t_0} = \vol_{t_0} = \vol_{t_0}(L_{2A}(\delta^*_{2A}(\ts,t_0),t_0))$,
which implies $|J(t_0)| \le \delta^*_{2A}(\ts,t_0) = |Q^*_{2A}(t_0) \cap Q^*_{2A}(\ts)| \le |J^*(t_0) \cap J^*(\ts)| \le |J^*(t_0)|$. 
Furthermore, we know that the optimal solution executes SRPT, and thus, is locally $1$-competitive. This implies $|J(t_0)| \ge |J^*(t_0)|$ and hence $|J(t_0)| = |J^*(t) \cap J^*(t_0)| = |J^*(t_0)|$.
\end{proof}

\begin{lemma}
    \label{lem:unit:small-A-execution}
    If $t_0 \not= 0$ and $p_{q_0}(t_0) = \unit$ for the operation $q_0$ that is processed by the algorithm at time $t_0$, then $\ts-t_0 < p_{q_0}(t_0)$, i.e., the algorithm cannot finish $q_0$ by $\ts$.
\end{lemma}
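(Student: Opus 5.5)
The plan is to argue by contradiction. Suppose $t_0\neq 0$, the operation $q_0$ processed during $[t_0,t_0+1]$ has $p_{q_0}(t_0)=\unit$, and $\ts - t_0 \ge \unit$. I would show that the optimal solution then completes some job of $J^*(t_0)$ by time $t_0+\unit\le\ts$. This contradicts \Cref{lem:unit:t0:n-jobs}, which gives $|J^*(t_0)| = |J^*(t_0)\cap J^*(\ts)|$, i.e.\ $J^*(t_0)\subseteq J^*(\ts)$.

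\textbf{Step 1: the algorithm's state at $t_0$.} By the definition of $t_0$ and \Cref{obs:t_0}, every operation alive at $t_0$ in the algorithm's schedule lies in $L_{2A}(\delta^*_{2A}(\ts,t_0),t_0)$. So all of the algorithm's jobs at $t_0$ consist only of a stage-$2$ type-A operation, and their total remaining volume is $\vol_{t_0}=\vol^*_{t_0}$. Since Operations-SRPT processes $q_0$, which has remaining time $\unit$, every job in $J(t_0)$ has remaining processing time at least $\unit$, and $q_0$'s job attains this minimum. By \Cref{lem:unit:t0:n-jobs} we also have $|J(t_0)|=|J^*(t_0)|=:n_0$.

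\textbf{Step 2: the optimal solution has a small old job at $t_0$.} The optimal solution runs SRPT, and I would invoke the classical SRPT dominance property (Schrage's argument). At any time, for every $k$, the total remaining volume of the $k$ smallest active jobs under SRPT is at most that of the $k$ smallest active jobs under any other schedule of the same jobs. Both schedules have the same total volume, and Operations-SRPT is a feasible schedule of the jobs, so applying this with $k=1$ shows that $J^*(t_0)$ contains a job with remaining time $r\le \unit$. This is the step I expect to be the main obstacle. I need the dominance statement for the smallest job rather than the usual largest-$k$ form. If the direct citation is unclean, I would derive it from the volume invariant $\vol$ of the $k$ largest jobs: in SRPT the $k$ largest jobs carry at least as much volume as in any schedule, and since total volume and the counts $n_0$ coincide, the complement of the $n_0-1$ largest has volume at most that of the algorithm's smallest job, namely $\unit$.

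\textbf{Step 3: SRPT must finish an old job within $\unit$ time.} During $[t_0,t_0+\unit]$, every newly released job has size at least $\unit$, because its stage-$1$ test alone has length $\unit$. I would prove by induction on $s$ that at time $t_0+s$ the optimal schedule has an active job from $J^*(t_0)$ with remaining time at most $\unit-s$, and that SRPT processes such a job. A job released at $t_0+s$ has size $\ge\unit>\unit-s$ for $s\geq 1$, and ties at $s=0$ are harmless since no job is released strictly inside $(t_0, t_0+s)$ with smaller size. Hence SRPT never switches to a new job, and the minimum remaining time among old jobs drops by one per unit. Therefore some job of $J^*(t_0)$ completes by $t_0+r\le t_0+\unit\le\ts$. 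This contradicts $J^*(t_0)\subseteq J^*(\ts)$, and so $\ts-t_0<\unit=p_{q_0}(t_0)$.
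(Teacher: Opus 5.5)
Your proof is correct, but it reaches the contradiction from the opposite side from the paper.

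\textbf{The paper's route.} The paper looks at the algorithm after $t_0$. After $[t_0,t_0+1]$ the operation $q_0$ has remaining time below $\unit$, and every newly released test has length $\unit$. So Operations-SRPT runs $q_0$ to completion, and its job finishes at $t=t_0+\unit\le\ts$. Meanwhile the optimum finishes nothing in $[t_0,t]$: old jobs survive by $J^*(t_0)\subseteq J^*(\ts)$, and new jobs need at least $\unit$ time. Hence $|J(t)|\le|J^*(t_0)|-1+d<|J^*(t)|$. This contradicts only the counting form of SRPT's local $1$-competitiveness.

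\textbf{Your route.} You never look at the algorithm after $t_0$. From the state at $t_0$ you extract that the optimum has an active job of remaining size at most $\unit$. You then simulate SRPT forward to show it finishes that job by $t_0+\unit$, which directly contradicts $J^*(t_0)\subseteq J^*(\ts)$.

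\textbf{What each route costs.} Your version needs the stronger majorization property of SRPT: for every $k$, its $k$ largest remaining sizes carry at least as much volume as those of any work-conserving schedule. This is classical but needs its own inductive proof or a citation. The paper gets by with the weaker count statement plus a one-line observation about Operations-SRPT.

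\textbf{A correction to Step 2.} Your first formulation is false in general. It says the $k$ smallest \emph{active} jobs under SRPT have volume at most that of the $k$ smallest active jobs of any other schedule, applied with $k=1$. As a counterexample, release jobs of sizes $1$ and $3$ at time $0$. At time $1$, SRPT's smallest active job has size $3$, while a schedule that ran the long job has an active job of size $1$. If you instead pad completed jobs as zeros, $k=1$ is vacuous. Only your fallback is valid, and it genuinely uses both $|J(t_0)|=|J^*(t_0)|$ and $\vol_{t_0}=\vol^*_{t_0}$. With those, the $n_0-1$ largest jobs of SRPT outweigh those of Operations-SRPT, so SRPT's one remaining active job has size at most $\unit$. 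Drop the first formulation and keep only that one.
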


\begin{proof}
For the sake of contraction, assume $\ts-t_0 \ge p_{q_0}(t_0)$ and let $t = t_0 + p_{q_0}(t_0) = t_0 + \unit$. We can observe the following facts:
\begin{enumerate}
\item The algorithm will complete operation $q_0$ at point in time $t$: Since $q_0$ with $p_{q_0}(t_0)= \unit$ is processed during $[t_0,t_0+1]$ by assumption, we have $p_{q_0}(t_0+1) < \unit$. By the algorithm definition and since the stage-$1$ operations of newly arriving jobs have a processing time equal to $\unit$, $p_{q_0}(t_0+1) < \unit$ implies that the operation will be processed until it is completed. Since $q_0$ is a stage-$2$ operation, the corresponding job also completes at time $t$.
\item The optimal solution will not complete any job during $[t_0,t]$: By~\Cref{lem:unit:t0:n-jobs}, we have $J^*(t_0) \subseteq J^*(\ts)$, which implies that all jobs in $J^*(t_0)$ are still alive at time $\ts$ and thus also at time $t$. This only leaves jobs $j$ with release dates $r_j > t_0$ to be potentially completed by the optimal solution during $[t_0,t]$. However, $r_j > t_0$ implies $t - r_j < \unit$. Since all jobs have a processing time of at least $\unit$, the interval $[r_j,t]$ is not long enough to finish any job with $r_j > t_0$.
\end{enumerate}

Let $d$ denote the number of jobs that are released during $(t_0,t]$. The two above facts imply $|J(t)| \le |J(t_0)| - 1 + d = |J^*(t_0)| - 1 + d$ and $|J^*(t)| = |J^*(t_0)| + d$. However, then $|J(t)| < |J^*(t)|$, which is a contradiction to the optimal solution (SRPT) being locally $1$-competitive.
\end{proof}

\begin{proof}[Proof of~\Cref{lemma:t0:optimal}]
By \Cref{lem:unit:t0:n-jobs}, we know $|J(t_0)| = |J^*(t_0)|$. Furthermore, by~\Cref{lem:unit:small-A-execution}, we have $\ts-t_0 < p_{q_0}(t_0) = \unit$, which implies that the algorithm does not complete any operations during $[t_0,\ts]$. Following the arguments in the proof of~\Cref{lem:unit:small-A-execution}, we can argue that the optimal solution also does not complete any jobs during  $[t_0,\ts]$. Thus, $|J(\ts)| = |J^*(\ts)|$.
\end{proof}

\subsection{Proof of~\Cref{lem:unit:A:volume}}
\label{app:swt:2}

In this section, we prove~\Cref{lem:unit:A:volume}, which we restate here for the sake of convenience.

\lemUnitA*

In the following, we assume w.l.o.g.~that $t_0 \not= \ts$. Otherwise,~\Cref{lem:unit:A:volume} follows immediately from~\Cref{obs:t_0}.

Next, we observe that the assumption of the lemma implies that either $t_0 = 0$ or $p_{q_0}(t_0) > \unit$. 

\begin{observation}
    \label{obs:unit:no:below:bar}
    If there is no time $t \in [t_0,\ts]$ such that the algorithm processes an operation $q \in L_{2A}(|Q^*_{2A}(t) \cap Q^*_{2A}(\ts)|,t)$ with $p_q(t) < \unit$ during $[t,t+1]$, then either $t_0 = 0$ or $p_{q_0}(t_0) > \unit$.
\end{observation}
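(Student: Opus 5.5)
We prove the contrapositive: assume $t_0 \neq 0$ and $p_{q_0}(t_0) \le \unit$, where $q_0$ is the operation processed during $[t_0,t_0+1]$. We then exhibit a time $t \in [t_0,\ts]$ at which the algorithm processes an operation of $L_{2A}(|Q^*_{2A}(t) \cap Q^*_{2A}(\ts)|,t)$ with remaining time below $\unit$.

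By \Cref{def:unit_t0}, $q_0 \in L_{2A}(|Q^*_{2A}(t_0) \cap Q^*_{2A}(\ts)|,t_0)$ and $p_{q_0}(t_0) \ge \unit$. Hence $p_{q_0}(t_0) = \unit$, so the easy case $p_{q_0}(t_0) > \unit$ is excluded. By \Cref{lem:unit:small-A-execution}, we get $\ts - t_0 < \unit$, so $t_0 + 1 \le \ts$ whenever $\unit \ge 2$.

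The natural witness is $t = t_0+1$. During $[t_0,t_0+1]$ the algorithm processes $q_0$, so $p_{q_0}(t_0+1) = \unit - 1 < \unit$. As in the proof of \Cref{lem:unit:small-A-execution}, an operation with remaining time below $\unit$ is never interrupted: every newly released stage-$1$ operation has length $\unit$, and every other active operation had remaining time at least $\unit$. So $q_0$ is processed during $[t_0+1,t_0+2]$ as well.

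It remains to check that $q_0 \in L_{2A}(|Q^*_{2A}(t_0+1) \cap Q^*_{2A}(\ts)|,t_0+1)$. Two things could break this: the count $|Q^*_{2A}(t)\cap Q^*_{2A}(\ts)|$ could drop between $t_0$ and $t_0+1$, or $q_0$ could lose its rank among the largest stage-$2$ type-A operations.

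\begin{itemize}
\item \emph{The count cannot drop.} By \Cref{lem:unit:t0:n-jobs}, $J^*(t_0) \subseteq J^*(\ts)$. So every stage-$2$ type-A operation alive in the optimum at $t_0$ is still alive at $\ts$, and therefore also at $t_0+1$.
\item \emph{The rank cannot be lost.} The operations of $L_{2A}$ other than $q_0$ are not processed during $[t_0,t_0+1]$, so their remaining times are unchanged. Newly released jobs only add stage-$2$ operations, and these do not belong to $Q^*_{2A}(t_0)$. If a new alive stage-$2$ type-A operation with a large unknown size pushes $q_0$ out of the top set, we need a separate argument: a release at time $t_0+1$ would start a stage-$1$ operation that, by the tie-breaking rule, competes with $q_0$, as in the proof of \Cref{lem:unit:no:below:bar}.
\end{itemize}

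The main obstacle is exactly this rank bookkeeping. The cleanest route I would try avoids tracking ranks at all: in this case, reuse the conclusion of \Cref{lemma:t0:optimal}. Its proof only needs $p_{q_0}(t_0)=\unit$ and $t_0 \ne 0$, and it shows that neither the algorithm nor the optimum completes a job during $[t_0,\ts]$. That is the scenario the observation is meant to separate, and the intended reading is then that $p_{q_0}(t_0)=\unit$ triggers the hypothesis through the processing of $q_0$ at $t_0+1$.

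The edge cases are $\unit = 1$, and more generally $t_0 + 1 > \ts$. There, $q_0$ itself has remaining time $\unit$ at $t_0$, and the hypothesis is satisfied vacuously; these cases are handled directly by \Cref{obs:t_0}, or by the assumption $t_0 \ne \ts$ made earlier.
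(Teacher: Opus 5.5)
Your approach is the paper's: use $t=t_0+1$ as the witness, since there $q_0$ has remaining time $\unit-1<\unit$ and keeps running. The paper makes exactly this move and treats the contradiction as immediate. You correctly notice that it also needs $t_0+1\le\ts$, $\unit\ge 2$, and $q_0\in L_{2A}(\delta^*_{2A}(\ts,t_0+1),t_0+1)$. However, the last point, which you yourself call the main obstacle, is never established, and neither remedy you offer works.

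\begin{itemize}
\item The tie-breaking argument does not apply. At $t_0+1$, $q_0$ has remaining time $\unit-1$, strictly below the $\unit$ of any newly released stage-$1$ operation, so there is no tie. Moreover, tie-breaking decides what is processed, not the ranking inside $L_{2A}$.
\item A type-A job released at $t_0+1$ has a stage-$2$ operation of size at least $\unit>p_{q_0}(t_0+1)$. So it really would push $q_0$ out of $L_{2A}$ unless $\delta^*_{2A}(\ts,t_0+1)$ grows accordingly.
\item Appealing to \Cref{lemma:t0:optimal} and \emph{the intended reading} is not an argument.
\item The step ``$\ts-t_0<\unit$, so $t_0+1\le\ts$'' is a non sequitur, since that is an upper bound on $\ts$. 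The lower bound comes from the w.l.o.g.\ assumption $t_0\neq\ts$. Only then does \Cref{lem:unit:small-A-execution} yield $\unit\ge 2$, so that $q_0$ is still alive at $t_0+1$.
\end{itemize}

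The missing idea is a second use of \Cref{lem:unit:small-A-execution}.

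First, bound the optimum's count from below. Since $\ts<t_0+\unit$ and every job has processing time at least $\unit$, the optimum cannot complete by $\ts$ any job released at $t_0+1$. Let $D$ be the set of stage-$2$ operations of type-A jobs released at $t_0+1$. Then $D\subseteq Q^*_{2A}(t_0+1)\cap Q^*_{2A}(\ts)$. Also, trivially, $Q^*_{2A}(t_0)\cap Q^*_{2A}(\ts)\subseteq Q^*_{2A}(t_0+1)\cap Q^*_{2A}(\ts)$. These two sets are disjoint, so $\delta^*_{2A}(\ts,t_0+1)\ge \delta^*_{2A}(\ts,t_0)+|D|$.

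Second, bound the algorithm's count from above. \Cref{obs:t_0} gives $\vol_{t_0}=\vol_{t_0}(L_{2A}(\delta^*_{2A}(\ts,t_0),t_0))$. Hence every job alive in the algorithm's schedule at $t_0$ is a type-A job with completed stage-$1$, whose stage-$2$ operation lies in that set of size at most $\delta^*_{2A}(\ts,t_0)$. So at $t_0+1$ the algorithm has at most $\delta^*_{2A}(\ts,t_0)+|D|\le\delta^*_{2A}(\ts,t_0+1)$ alive stage-$2$ type-A operations.

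Therefore $L_{2A}(\delta^*_{2A}(\ts,t_0+1),t_0+1)$ contains all of them, in particular $q_0$. Since $t_0+1\le\ts$ and $q_0$ is processed during $[t_0+1,t_0+2]$, $t=t_0+1$ is a valid witness, and this closes the proof.
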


\begin{proof}
For the sake of contradiction, assume $t_0 \not=0$ and $p_{q_0}(t_0) = \unit$ for the operation $q_0$ that is processed at time $t_0$. Since $t_0 \not= 0$, the operation $q_0$ is processed during $[t_0,t_0+1]$ by~\Cref{def:unit_t0}. This implies $p_{q_0}(t_0+1) < \unit$. However, this implies that $q_0$ is also processed during $[t_0+1,t_0+2]$; a contradiction to the assumption that there is no time $t \in [t_0,\ts]$ such that the algorithm processes an operation $q \in L_{2A}(|Q^*_{2A}(t) \cap Q^*_{2A}(\ts)|,t)$ with $p_q(t) < \unit$ during $[t,t+1]$.
\end{proof}

The~\Cref{obs:unit:no:below:bar} will allow us to use~\Cref{lem:unit:no:below:bar} in the proof of~\Cref{lem:unit:A:volume}.

Finally, we are ready to prove~\Cref{lem:unit:A:volume} by replicating the inductive proof of the classical SRPT analysis by~\cite{Schrage68}. The only difference is that we start our induction at time $t_0$ instead of time $0$.

\begin{proof}[Proof of~\Cref{lem:unit:A:volume}]
    We show that $\vol_t(L_{2A}(\delta^*_{2A}(\ts,t),t)) \ge \vol^*_{t}(Q^*_{2A}(t))$ holds for every $t_0 \le t \le \ts$, which then implies the lemma.

    \textbf{Base case:} Assume $t = t_0$. If $t_0 = 0$, then the statement holds trivially. Otherwise, i.e., if $t_0 > 0$, then we have $\vol_t^* = \vol_t = \vol_t(L_{2A}(\delta^*_{2A}(\ts,t),t))$ by~\Cref{obs:t_0} and can conclude with $\vol_t(L_{2A}(\delta^*_{2A}(\ts,t),t)) = \vol_t^* \ge \vol^*_{t}(Q^*_{2A}(t))$.

    \textbf{Induction step:} Consider a $t > t_0$. By induction hypothesis, we have $\vol_{t-1}(L_{2A}(\delta^*_{2A}(\ts,t-1),t-1)) \geq \vol^*_{t-1}(Q^*_{2A}(\ts,t-1))$.

    Furthermore, we know by~\Cref{lem:unit:no:below:bar} that the operations in $L_{2A}(\delta^*_{2A}(\ts,t-1),t-1)$ are not processed during $[t-1,t]$. Hence, the volume of these operations does not change between time $t-1$ and time $t$, and we get
    $$
    \vol^*_{t-1}(Q^*_{2A}(\ts,t-1)) \le \vol_{t-1}(L_{2A}(\delta^*_{2A}(\ts,t-1),t-1)) = \vol_t(L_{2A}(\delta^*_{2A}(\ts,t-1),t-1)).
    $$
    If no type-A jobs are released at time $t$, then we have $L_{2A}(\delta^*_{2A}(\ts,t-1),t-1) = L_{2A}(\delta^*_{2A}(\ts,t),t)$ and $Q^*_{2A}(\ts,t-1) = Q^*_{2A}(\ts,t)$. Since the remaining volume of $Q^*_{2A}(\ts,t-1)$ can never increase from time $t-1$ to time $t$, we can conclude with 
    \begin{align*}
    \vol_{t}(L_{2A}(\delta^*_{2A}(\ts,t),t))
    = & \ \vol_{t-1}(L_{2A}(\delta^*_{2A}(\ts,t-1),t-1))\\ 
    \ge & \ \vol^*_{t-1}(Q^*_{2A}(\ts,t-1))\\
    \ge & \ \vol^*_{t}(Q^*_{2A}(\ts,t)).
    \end{align*}
    It remains to consider the case where type-A jobs are released at time $t$. Let $D$ denote the set of all stage-$2$ type-A operations that belong to jobs released at time $t$, and partition $D$ into $D_1 := D\cap Q^*_{2A}(\ts)$ and $D_2 := D \setminus D_1$. By definition, and again using that the remaining volume of $Q^*_{2A}(\ts,t-1)$ can never increase from time $t-1$ to time $t$, we get
    \begin{equation}
        \label{eq:ind:1}
        \vol^*_{t}(Q^*_{2A}(\tau,t)) \le \vol^*_{t-1}(Q^*_{2A}(\tau,t-1)) + \sum_{q \in D_1} p_q
    \end{equation}
    and $|Q^*_{2A}(\tau,t)| = |Q^*_{2A}(\tau,t-1)| + |D_1|$. The latter immediately implies $|L_{2A}(\delta^*_{2A}(\ts,t),t)| =  |L_{2A}(\delta^*_{2A}(\ts,t-1),t-1)| + |D_1|$. Since (i) we already argued that the remaining volume of $L_{2A}(\delta^*_{2A}(\ts,t-1),t-1)$ does not decrease from $t-1$ to $t$ and (ii)  $L_{2A}(\delta^*_{2A}(\ts,t),t)$ is defined to contain the $|Q^*_{2A}(\tau,t)|$-\emph{largest} alive stage-$2$ type-A operations, the increased cardinality of $L_{2A}(\delta^*_{2A}(\ts,t),t)$ compared to $L_{2A}(\delta^*_{2A}(\ts,t-1),t-1)$ implies the following volume increase: 
    \begin{equation}
        \label{eq:ind:2}
        \vol_{t}(L_{2A}(\delta^*_{2A}(\ts,t),t)) \ge \vol_{t-1}(L_{2A}(\delta^*_{2A}(\ts,t-1),t-1)) + \sum_{q \in D_1} p_q.
    \end{equation}
    Combining~\eqref{eq:ind:1} and~\eqref{eq:ind:2} and plugging in the induction hypothesis yields the required 
    $$\vol_t(L_{2A}(\delta^*_{2A}(\ts,t),t)) \ge \vol^*_{t}(Q^*_{2A}(t)).$$
    
    \end{proof}

\section{Missing Proofs of~\Cref{sec:general}}

\subsection{Properties of the Algorithm}
\label{app:properties}

We first state some basic properties of the algorithm. We remark that all these properties are analogous to properties of the algorithm in~\cite{GuptaKPW26} and can be shown by following the corresponding proofs given in~\cite{GuptaKPW26} but arguing about chunks instead of jobs, e.g., replace the class of a job with the class of the currently active chunk of the job. For the sake of completeness, we still give the adjusted proofs in~\Cref{app:properties}. The first two properties follow directly from the definition of the algorithm.

\begin{restatable}{fact}{factMonotoneStack}\label{fact:monotone-stack}
	At any time, 
	let $c_1,\ldots,c_{z}$ be the chunks in the stack $\apart$ at time $t$ indexed by their position in the stack, i.e., $\topjob = c_1$.
	Then, $\curclass_{c_1}(t) < \ldots < \curclass_{c_{z}}(t)$.
\end{restatable}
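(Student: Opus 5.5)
We prove \Cref{fact:monotone-stack} by induction over the steps the algorithm performs. The statement is a pure invariant of the data structure, so we only need to look at the three operations that change the stack $\apart$ or the current classes of the jobs on it. These are pushes in Step~1, pops or removals in Step~3, and class updates in Step~3. Initially the stack is empty and the claim holds vacuously.

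\emph{Pushes (Step~1).} A job is pushed onto $\apart$ only if $\frontjob$ has a strictly smaller current class than the current $\topjob$ (we read the condition as vacuously satisfied when the stack is empty). By the induction hypothesis the old stack $c_1,\ldots,c_z$ satisfies $\curclass_{c_1}<\ldots<\curclass_{c_z}$. After the push, the new top $c_0=\frontjob$ satisfies $\curclass_{c_0}<\curclass_{c_1}$, so the strict chain extends. Repeated pushes within the same while loop preserve the invariant by iterating this argument.

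\emph{Removals and class updates (Step~3).} Only the top job $j=\topjob$ is processed, and only $j$ can be removed, either upon completion or upon moving to a higher class. Removing the top element of a strictly increasing sequence leaves a strictly increasing sequence. The current class $\curclass_j$ is changed only at the moment $j$ is taken off $\apart$ and reinserted into $\afull$, so no job that stays on the stack ever has its class changed. Jobs in $\afull$ are not on the stack, so arrivals and updates of jobs in $\afull$ are irrelevant.

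\emph{Between steps.} The current classes of jobs on the stack are constant while they stay on the stack. The remark in the algorithm description, together with Observation on chunks (property~2), shows that $\curclass_j$ is constant on $[r_{c_i},r_{c_{i+1}})$, during which $j$ lies in $\apart$ or $\afull$. Hence the indexing of chunks by stack position is consistent over time, and the invariant holds at every time $t$.

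The only mildly delicate point is ensuring that no stacked job's class can change while it sits below the top; this follows from the update rule in Step~3, which applies solely to the processed job $\topjob$ and simultaneously removes it. Everything else is routine bookkeeping.
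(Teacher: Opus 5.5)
Your proof is correct and follows the same approach as the paper. The paper's one-line argument is exactly your push case: a job enters $\apart$ only if its current class is strictly smaller than that of $\topjob$. Your further checks are the routine details the paper leaves implicit: popping the top preserves the chain, and $\curclass_j$ changes only for the processed top job, at the moment it leaves the stack. The ``between steps'' appeal to the chunk observation is not needed.
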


\begin{proof}
	This is because a chunk $c$ can only move to $\apart$ at time $t$ if its current class $\curclass_{c}(t)$ is strictly smaller than the current class of $\topjob$.
\end{proof}

\begin{restatable}{fact}{factMoveSmallest}\label{fact:move-smallest}
	If a chunk moves from $\afull$ to the stack $\apart$ at time $t$, it has the smallest current class of all active chunks at time $t$.
\end{restatable}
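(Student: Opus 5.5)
The plan is to check directly that the invariant holds at each step of the algorithm, since a chunk can enter $\apart$ only in Step~1. Fix a time $t$ at which some chunk $c$ (the active chunk of $\frontjob$) is moved from $\afull$ to $\apart$ in Step~1. We must show that $\curclass_c(t) \le \curclass_{c'}(t)$ for every active chunk $c' \in J(t)$. Every such $c'$ is either in $\afull(t)$ or in $\apart(t)$, so we treat the two cases separately.

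\emph{Chunks in $\afull$.} This case follows from the definition of $\frontjob$. The queue $\afull$ is sorted by current class, and $\frontjob$ is the job of smallest current class in $\afull$. So $\curclass_c(t) \le \curclass_{c'}(t)$ for all $c' \in \afull(t)$. This still holds inside the while loop of Step~1, because removing elements from $\afull$ only leaves chunks whose class is at least that of the removed front.

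\emph{Chunks in $\apart$.} The move condition of Step~1 requires $\curclass_c(t) < \curclass_{\topjob}(t)$. By \Cref{fact:monotone-stack}, the classes in the stack strictly increase from the top downwards. Hence $\curclass_c(t)$ is strictly smaller than the class of every chunk currently in $\apart$. This uses \Cref{fact:monotone-stack} in its within-step form: each push in the loop preserves the monotone order, because the pushed chunk is strictly smaller than the old top. If several chunks are pushed during the same iteration of Step~1, the same argument applies to each push in turn, with the earlier pushed chunks now forming part of $\apart$. If the stack is empty, the condition is read as vacuously true (the class of $\topjob$ is taken to be $+\infty$), and only the first case is needed.

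The only point that needs care is the consistency of the state. A job's current class changes only in Step~3, and Step~3 simultaneously reinserts the job into $\afull$. Therefore, at the moment of a Step~1 move, every active chunk is in exactly one of the two structures, with its current class, as recorded by the chunk observation in \Cref{sec:chunks}. Given this, the two cases above cover all of $J(t)$, and the fact follows.
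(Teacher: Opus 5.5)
Your proof is correct and follows essentially the same route as the paper. The moved chunk is minimal in $\afull$ because $\afull$ is sorted by current class, and it is minimal in $\apart$ by the move condition together with \Cref{fact:monotone-stack}. (Your care about several pushes in one step is harmless but vacuous: after the first push, the new front has class at least that of the new top, so the loop stops.)
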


\begin{proof}
	By \Cref{fact:monotone-stack}, the chunk has the smallest class of all chunks in $\apart$ after being moved to $\apart$. Furthermore, since $\afull$ is sorted by current class, the moved chunk also has the smallest class of all chunks in $\afull$ before being moved. 
\end{proof}

The next lemma formulates properties of the chunk $c$ that is processed at a time $t$. In particular, such a chunk has the smallest class among all active chunks with possibly a single exception.

\lemStrictClasses*

\begin{proof}
	Note that the second statement of the lemma directly follows from the first statement.
	We now prove the first statement.
	By \Cref{fact:monotone-stack}, every chunk in $J(t)$
	of class $\leq k(t)$ must be full.
	Thus, let $c_1 \in \afull(t)$ be a chunk of class $k_1 < k(t)$.
	Since $c$ is processed during $[t,t+1]$, the condition of the algorithm for moving chunks to $\apart(t)$ must have been wrong, as otherwise another chunk would have been moved to $\apart(t)$ before processing $c$.
	Since $c_1$ is in $\afull(t)$ and has class $< k(t)$, the only way for the condition to be wrong is that $|\afull(t)| < |J(t)|/4$. 
	
	Since $|\afull(t)| \geq 1$, this implies that $|\apart(t)| \geq 2$. 
	Thus, let $c'$ be the chunk directly under $c$ in $\apart(t)$ and let $t'$ be the time when $c'$ was moved to the stack. Note that $t' \leq t$.
	Let $C'$ be the chunks in $J(t)$ that became active after $t'$.
	Note that if a chunk in $C' \setminus \{c\}$ is in $\apart(t)$, then it must be below $c'$ in $\apart(t)$, which is impossible as $c'$ is already in the stack when such a chunk becomes active by definition of $t'$.
	Thus, $C' \setminus \{c\} \subseteq \afull(t)$.
	Since $c'$ was moved to the stack at time~$t'$, the condition of the algorithm was satisfied before the move. In particular,
	\[
	    |\afull(t')| \geq \frac14 \cdot |J(t)'| \ .
	\]
	We now compute how this condition changes until time $t$.
	There are $|C'| - 1$ chunks of $\afull(t)$ becoming active 
	during $[t',t]$ (the $-1$ is because $c \notin \afull(t)$).
	Moreover, $c'$ is moved from $\afull(t')$ to $\apart(t')$ at time $t'$. Thus, the LHS increases by $|C'| - 2$.
	On the other hand, the RHS increases by $\frac14 \cdot |C'|$ due to chunks in $J(t)$ becoming active during $[t',t]$.
	
	For the sake of contradiction, assume that there exists another chunk $c_2 \neq c_1$ of class $k_2 \leq k(t)$ in $\afull(t)$.
    We have $k_1, k_2 \le k(t) < k_{c'}$. Since $c'$ is in $\apart(t)$, all three chunks $c$, $c_1$, and $c_2$ must have become active \emph{after} $t'$, and thus, be part of $C'$. Thus, $|C'| \geq 3$ and
	\[
	    |\afull(t)| = |\afull(t')|  + |C'| - 2
					\geq \frac14 \cdot |J(t)'| + \frac14 \cdot |C'| = \frac14 \cdot |J(t)| \ .
	\]
	But this contradicts our earlier observation that $|\afull(t)| < |J(t)|/4$.
\end{proof}

\begin{restatable}{lemma}{lemBalance}\label{lem:alg:property}
	At any time $t$, it holds that $|\afull(t)| \geq \frac{|J(t)|}{4} - 1$. 
\end{restatable}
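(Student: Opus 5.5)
The plan is to prove the stronger invariant $|\apart(t)| \le |J(t)|/2+1$ (or directly $|\afull(t)| \ge |J(t)|/4 - 1$) by a potential-style induction over time, in the spirit of the corresponding lemma in~\cite{GuptaKPW26}, with chunks in place of jobs. Since $J(t) = \afull(t) \cup \apart(t)$ as a disjoint union of active chunks, it suffices to bound the stack size. I would track the quantity $\Phi(t) := |\afull(t)| - \frac14 |J(t)|$ and show that it never drops below $-1$.

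First I would list the events that change $\Phi$. (a) A new chunk becomes active and is inserted into $\afull$: this is either a job release or Step~3 re-inserting a job whose next chunk started; in the latter case the job leaves $\apart$ and enters $\afull$, so $|J|$ is unchanged and $|\afull|$ grows. In both cases $\Phi$ increases. (b) A job completes in Step~3 while being on top of $\apart$: $|\afull|$ is unchanged and $|J|$ decreases, so $\Phi$ increases. (c) Step~1 moves $\frontjob$ from $\afull$ to $\apart$: $\Phi$ drops by exactly $1$, but this only happens while $|\afull| \ge |J|/4$, i.e., while $\Phi \ge 0$. Thus, right before a move we have $\Phi \ge 0$, so right after it $\Phi \ge -1$. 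The only decreasing events are moves in Step~1, and each is gated by the condition, so by induction over the (discrete) sequence of events starting from $\Phi = 0$ at time $0$, we get $\Phi(t) \ge -1$ at all times. This yields $|\afull(t)| \ge |J(t)|/4 - 1$.

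The main point of care is to verify that no other event decreases $|\afull|$ without decreasing $|J|$ correspondingly. Chunks leave $\afull$ only through Step~1, and completions happen only for the processed job $\topjob \in \apart$, which matches the chunk view established in the observation on chunks (insertions into $\afull$ happen exactly at times $r_{c_i}$). Another point of care is the case where several moves occur in one Step~1 loop. Each iteration re-checks the condition with the current values, so the gating argument applies per move. I expect this bookkeeping, namely ensuring that the Step~3 re-insertion is counted as a pure increase and that the "for all $t$" statement holds both between and within steps, to be the only mildly delicate part.
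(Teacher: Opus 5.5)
Your argument is correct and matches the paper's proof: it is the same event-by-event induction, in which arrivals, Step~3 re-insertions and completions can only increase $|\afull|-|J|/4$, and each Step~1 move is gated by $|\afull|\ge |J|/4$, so a move lowers the quantity to no less than $-1$. One slip, which your argument never uses: the ``stronger invariant'' $|\apart(t)|\le |J(t)|/2+1$ in your opening sentence is false. For example, four long jobs released at times $0,1,2,3$ with strictly decreasing classes all end up in $\apart$. The lemma is in fact equivalent to $|\apart(t)|\le \frac34|J(t)|+1$.
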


\begin{proof}
	The proof is by induction on $t$. It is trivially true at $t=0$.
	Suppose it holds for time $t$. We show it also holds for time $t+1$ by considering all possible events that can happen between time $t$ and $t+1$:
	\begin{itemize}
		\item
		If a job arrives at time $t$, then $|\afull(t)|$ increases by $1$, but $|J(t)| / 4$ increases by $1/4$. Thus, the inequality still holds.
		\item
		If we move a chunk from $\afull(t)$ to $\apart(t)$ at time $t$, then it must be that $|\afull(t)| \geq \frac{|J(t)|}{4}$
		before the move. Thus, the inequality still holds after the move, as $|J(t)|$ does not change.
		\item 
		If a job finishes due to being processed during $[t,t+1]$, then $|J(t+1)| = |J(t)| - 1$, but $|\afull(t)|$ does not change. Thus, the inequality still holds after the job finishes.
		\item If a chunk of a job finishes during $[t,t+1]$, but the job does not finish, then we move the job (more precisely its next chunk) to $\afull(t+1)$. Since $|J(t)|$ does not change,
		the inequality still holds after the move.
\end{itemize}	
This covers all cases, and thus, the proof.
\end{proof}

\subsection{Missing Proofs of~\Cref{sec:chunks}}
\label{app:chunks}

\obsAlgChunks*

\begin{proof}
    Fix a job $j$. We first show that $j$ is inserted into $\afull$ at time $t$ if and only if $t \in \{r_{c_1},\ldots,r_{c_\ell}\}$. 
    Clearly, the first time $j$ is added to $\afull$ by the algorithm is its release date $r_j = r_{c_1}$. 
    By Step~3.~of the algorithm, $j$ is re-added to $\afull$ at time $t$ if and only if an operation $j_i$ of $j$ completes at time $t$ and $k_{j_{i+1}} > \max_{i' \le i} k_{j_{i'}}$ holds for the operation $j_{i+1}$ that becomes active at $t$.  By~\Cref{def:chunk}, the latter holds if and only if $j_{i+1}$ is the first operation of a chunk $c$ of $j$, which is the case if and only if $t = r_c$.

    Next, we prove the additional properties of the observation:
    \begin{itemize}        
        \item By Step~1.~of the algorithm, job $j$ can only be added to $\apart$ if it is part of $\afull$. Similarly, with the exception of $j$'s release date, $j$ can only be added to $\afull$ if it is in $\apart$. Since $j$ can also only complete if it is in $\apart$, the first property follows.
        
        \item Upon arrival of $j$, we have $\curclass_j = k_{j_1} = k_{c_1}$ by definition of the algorithm. Afterwards, the class of $j$ is updated to the class of the currently active operation of $j$ if and only if $j$ is re-inserted into $\afull$. As we already showed, the re-insertion of $j$ coincides with the time some chunk $c$ of $j$ becomes active. Hence, the class of $j$ will be updated to the class of $c$'s first operation, which is equivalent to the class of $c$.
    \end{itemize}
\end{proof}

\subsection{Missing Proofs of~\Cref{sec:local}}
\label{app:chunk-lp}

\factLPRelax*

\begin{proof}
    Consider an optimal solution for the given instance of the \ofts. To prove the statement, we construct a feasible solution of (\hyperlink{localchunklp}{$\chunklp(\ts)$}) with objective value $|J^*_c(\tau)|$.
    To this end, assume that each $c \in J_c^*(\tau)$ satisfies $p^*_c(\tau) = p_c$, i.e., the chunks in $J_c^*(\tau)$ have not yet been processed at all by time $\tau$. This is without loss of generality as it does not change the number of alive chunks at time $\ts$. We construct a solution of (\hyperlink{localchunklp}{$\chunklp(\ts)$}) by, for a chunk $c \in C$ with $\tau \ge r_{j(c)}$, defining
    $$
        x_c = \begin{cases}
            1 & \text{ if } c \in J_c^*\\
            0 & \text{ otherwise.}
        \end{cases}
    $$
    Clearly, the constructed solution has objective value $|J^*_c(\tau)|$. Hence, it only remains to argue that the solution is feasible. For the sake of contradiction, assume that the solution is infeasible and let $S$ be a set with a violated constraint. That is,
    $$
    \sum_{c \in S} \min(p_c, e(S)) \cdot x_c < e(S).
    $$
    Note that this immediately implies $\min\{p_c,e(S)\} = p_c$ for all $c \in J^*_c(\tau)$. Hence, plugging in the definition of excess,
    \begin{align*}
        \sum_{c \in S} \min(p_c, e(S)) \cdot x_c =  \sum_{c \in S} p_c \cdot x_c < e(S) = p(S) - (\tau-\ell_S).
    \end{align*}
    Rearranging and using the definition of the $x_c$'s yields
    \begin{equation}
        \label{eq:lp:relaxation}
        \tau-\ell_S < p(S) - \sum_{c \in S \cap J^*(\tau)} p_c.
    \end{equation}
    Now, note that the right-hand side is exactly the volume of chunks in $S\setminus J^*_c(\tau)$ that the optimal solution for the given instance of the \ofts~finishes by time $\tau$.  By definition of $\ell_S = \min_{c \in S} r_{j(c)}$, the chunks in $S\setminus J^*_c(\tau)$ cannot be processed before time $\ell_S$. Hence, the optimal schedule needs to process the complete volume of those chunks during $[\ell_S,\tau]$. However, this is a contradiction to~\eqref{eq:lp:relaxation}.
\end{proof}

\subsection{Justification of~\Cref{asm:reduced-property}: Reduced Instances}
\label{sec:reduced-instance}

Fix a time $\ts$ for which we want to show $|J(\ts)| \le \mathcal{O}(\ops_1 \cdot \ops_2) \cdot |J^*(\ts)|$ as in~\Cref{sec:local}. Our proofs in~\Cref{sec:excess-lemmas} rely on the~\Cref{asm:reduced-property}, which we restate here for convenience.

\asmReduced*

In this section, we argue that the assumption is w.l.o.g.~within the context of proving local competitiveness at time $\ts$ by exploiting the concept of \emph{reduced instances} as introduced in~\cite{GuptaKPW26}.
Assume we want to prove that $|J(\ts)| \le \mathcal{O}(\ops_1 \cdot \ops_2) \cdot |J^*(\ts)|$ for an instance $\mathcal{I}$ that does not satisfy~\Cref{asm:reduced-property}. We show that there exists a reduced instance $\reduced(\ts)$ with the same job and chunk sets as instance $\mathcal{I}$ that satisfies the following three properties:
\begin{enumerate}
    \item Instance $\reduced(\ts)$ satisfies~\Cref{asm:reduced-property}.
    \item $|J^*(\ts)| \le |J^*_{\mathrm{red}}(\ts)|$, where $J^*_{\mathrm{red}}(\ts)$ is the set of alive jobs at time $\ts$ in the optimal solution for instance $\reduced(\ts)$.
    \item $J(\ts) = \bA(\ts)$, where $\bA(\ts)$ is the set of active chunks at time $\ts$ in the algorithm's schedule for instance $\reduced(\ts)$.
\end{enumerate}

The existence of $\reduced(\ts)$ then implies that~\Cref{asm:reduced-property} is without loss of generality:
Since $\reduced(\ts)$ satisfies~\Cref{asm:reduced-property}, our proofs in~\Cref{sec:general} imply $|\bA(\ts)| \le \mathcal{O}(\ops_1 \cdot \ops_2) \cdot |J^*_{\mathrm{red}}(\ts)|$. Using the second and third property above, this implies 
$$
|J(\ts)| = |\bA(\ts)| \le \mathcal{O}(\ops_1 \cdot \ops_2) \cdot |J^*_{\mathrm{red}}(\ts)| \le  \mathcal{O}(\ops_1 \cdot \ops_2) \cdot |J^*(\ts)| \ .
$$
Hence, proving $|J(\ts)| \le \mathcal{O}(\ops_1 \cdot \ops_2) \cdot |J^*(\ts)|$ for instances that satisfy~\Cref{asm:reduced-property} implies the same for all instances. 

We continue by defining $\reduced(\ts)$ and proving that is satisfies the three properties above. The definition and proofs are chunk-based variants of the job-based equivalents given in~\cite{GuptaKPW26}.

The reduced instance $\reduced(\ts)$ uses the same jobs, operations and release dates as the original instance $\mathcal{I}$, and only adjusts the processing times of the operations. The following definition makes this more precise.

\begin{definition}[Reduced Instance $\reduced(\ts)$]
We define a reduced instance $\reduced(\ts)$ as follows.
For each operation $q$ that is part of a chunk $c \in A(\ts)$, we set 
\[
    p'_{q} := \min\{p_{q}, \max_{t \in [r_{c},\ts]} 2^{k(t) + 1} - \varepsilon \} \ ,
\]

where $r_c$ is the time when the chunk $c$ becomes active, which depends on the schedule constructed by the algorithm for the original instance~$\mathcal{I}$.
\end{definition}

\begin{claim}
 The reduced instance $\reduced(\ts)$ satisfies~\Cref{asm:reduced-property}.
\end{claim}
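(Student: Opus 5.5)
We need to show: in the algorithm's schedule on $\reduced(\ts)$, for every class $k$, every chunk $c$ active at $\ts$ that becomes active after $t_{\geq k}$ (computed in the reduced schedule) has class $<k$. My plan is to first prove that the algorithm's schedule on $\reduced(\ts)$ coincides with the original schedule on $[0,\ts]$ (this is essentially the third listed property), and then read off the assumption directly from the definition of $p'_q$.

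\textbf{Step 1: the schedules agree.} I will show by induction over integer times $t \le \ts$ that the states ($\afull$, $\apart$, $\curclass$, processed job) agree in both runs.

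For operations not in chunks of $A(\ts)$, nothing changes. For an operation $q$ of a chunk $c \in A(\ts)$ the argument needs more care. Note that $A(\ts)$ are the chunks active at $\ts$, so only $q$ with $r_c \le \ts$ matter, and the chunk $c$ is not completed before $\ts$ in the original run.

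First consider the first operation $q_1$ of $c$. Its class $k_c$ determines $\curclass_j$ when $c$ becomes active at $r_c$. Job $j$ is then in $\afull$, and it is eventually pushed to $\apart$ and processed at some time $t \in [r_c,\ts]$, at which point $k(t)=k_c$. Hence $\max_{t\in[r_c,\ts]}2^{k(t)+1}-\varepsilon \ge 2^{k_c+1}-\varepsilon$. After integer rounding (scale appropriately, or note $p_q<2^{k_c+1}$), this gives $p'_{q}=p_q$ for operations of class $\le k_c$ that have been revealed. The exception is when $c$ is never processed before $\ts$; in that case its first operation may shrink.

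So what I actually need is weaker than full equality. The algorithm only observes operations that become active before $\ts$, and it observes their classes and whether they complete. Reducing an operation that is never completed by $\ts$ does not alter any decision, provided the class of active operations is preserved whenever the class is used. The class of $c$'s first operation can drop only if $c$ was never processed, i.e.\ $\max k(t)$ over $[r_c,\ts]$ is less than $k_c$.

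\textbf{Main obstacle.} This is exactly where I expect trouble: the changed class would change the position of $j$ in $\afull$, and hence possibly the moves in Step~1 of the algorithm. I would handle it by arguing that such a $j$, having class larger than every class processed during $[r_c,\ts]$, is never the $\frontjob$ that gets moved in either run. The reduced class $\lfloor\log_2(2^{k^\ast+1}-\varepsilon)\rfloor = k^\ast$, where $k^\ast=\max k(t)$, is still at least every processed class. Ties with $\topjob$'s class then block a move, since the move condition requires a strictly smaller class. Tie-breaking within $\afull$ by smallest active operation may need an extra check, and I would choose $\varepsilon$ small so that relative comparisons are preserved.

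\textbf{Step 2: derive the assumption.} Fix $k$ and a chunk $c\in J(\ts)$ with $r_c>t_{\ge k}$. By Step 1 the quantities $t_{\ge k}$ and $k(t)$ are the same in both runs. During $[r_c,\ts]$ all processed classes satisfy $k(t)<k$, by the definition of $t_{\geq k}$. Therefore every operation $q$ of $c$ has $p'_q \le 2^{k}-\varepsilon<2^k$, so the reduced class of $c$ is $<k$. This is exactly \Cref{asm:reduced-property}.
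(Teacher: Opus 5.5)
Your Step~2 is exactly the paper's proof of this claim; the paper reads $r_c$, $k(t)$ and $t_{\ge k}$ directly off the original run. Your Step~1 is an attempt at what the paper proves separately in \Cref{lem:reduced:equiv}, namely that the algorithm behaves identically on $\reduced(\ts)$. You are right that this identification is needed. However, Step~1 has a genuine hole, and as far as I can see the paper's lemma has the same hole.

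You only track how a lowered class affects the order in $\afull$ and the push condition. But the class of a newly active operation is also used in Step~3 of the algorithm, in the test $k_q \ge \curclass_j+1$ that decides whether $q$ opens a new chunk. Take a never-processed chunk $c\in J(\ts)$ of job $j$. Its reduced first operation has class $k^\ast=\max_{t\in[r_c,\ts]}k(t)$. Nothing forces $k^\ast$ to exceed the class $k_{c'}$ of the preceding chunk $c'$ of $j$: $c'$ was processed at time $r_c-1$, outside the window, and afterwards only chunks of class at most $k_{c'}$ may be processed. If $k^\ast\le k_{c'}$, then in $\reduced(\ts)$ the operation is absorbed into $c'$. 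Job $j$ is then not reinserted into $\afull$; it stays on top of the stack and keeps being processed, so the two runs diverge at $r_c$. Moreover, the later chunks of $j$ are not in $J(\ts)$, so they are not reduced.

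Here is a concrete instance, with $\ts=50$:
\begin{itemize}
\item Job $h$ is released at time $0$ with a single operation of size $32$.
\item Job $j$ is released at time $1$ with operations of sizes $2,8,32$, i.e., chunks of classes $1,3,5$.
\item Job $i$ is released at time $1$ with $100$ operations of size $3$, i.e., one chunk of class $1$.
\end{itemize}

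In the original run, $h$ runs at time $0$ and $j_1$ during $[1,3]$. Then $j$ is reinserted with class $3$, and $i$ runs until $\ts$. So $k(t)=1$ on $[3,\ts]$, and the only reduced operation is $j_2$, now of size $4-\varepsilon$ and class $1$.

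In the reduced run, the test at time $3$ reads $1\ge 2$ and fails. So $j$ stays on top (the front $i$ has the same class, so it is not pushed) and $j_2$ is processed. Then the unreduced $j_3$, of class $5$, becomes active at about time $7$, $j$ is reinserted with class $5$, and $i$ runs until $\ts$. In this run $t_{\ge 5}=0$, yet the active chunk $\{j_3\}$ of class $5$ became active after it. So the assumption fails for $\reduced(\ts)$, in exactly the sense you (correctly) say is needed.

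This cannot be patched inside Step~1. In the bad case, the assumption for $k=k^\ast+1$ forces the reduced class of $c$ to be at most $k^\ast\le k_{c'}$. By \Cref{def:chunk}, that merges $c$ into $c'$, so the runs must diverge.

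If the claim is read with $t_{\ge k}$, $k(t)$ and $J(\ts)$ taken from the original run, your Step~2 alone is a complete proof. That weaker statement, however, is not what the reduction argument needs. A correct proof requires either changing the reduction, or weakening the assumption and adapting the excess lemmas.
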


\begin{proof}
Fix an arbitrary $k$ and consider a chunk $c$ that becomes active at time $r_c > t_{\ge k}$. By definition of $\reduced(\ts)$, all operations $q \in c$ have a reduced processing time $p_q' \le 2^{k'+1} - \varepsilon < 2^{k'+1}$ for $k' = \max_{t \in r_c,\ts} k(t)$. Since $p_q' < 2^{k'+1}$ for all $q \in c$, all such operations are of a class $\le k'$. By~\Cref{def:chunk}, this implies that chunk $c$ is of class $\le k'$ as well. Finally, the fact that $r_c > t_{\ge k}$ and the definition of $t_{\ge k}$ (cf.~\Cref{def:geqk}) imply that $k' < k$. Hence, $c$ is of class $< k$.
\end{proof}

Next, we observe that $\reduced(\ts)$ satisfies the second property, i.e., $|J^*(\ts)| \le |J^*_{\mathrm{red}}(\ts)|$. This is easy to see as reducing processing times of operations can only help an optimal solution to reduce its number of alive jobs at time $\ts$.

Finally, it remains to argue that $\reduced(\ts)$ satisfies the third property, i.e.,  $J(\ts) = \bA(\ts)$. To this end, we will show that the algorithm behaves the same on the reduced instance until time $\ts$ as on the original instance.

First, note that chunks $c \in \apart(\ts) = J(\ts) \setminus \afull(\ts)$, i.e., partial chunks at time $\ts$, will never be affected by the reduction of processing times: if a chunk $c$ is partial at time $\ts$, then it is processed for at least one time unit during $[r_c,\ts]$. Hence, $\max_{t \in [r_c,\ts]} k(t) \ge k_c$. Assuming sufficiently small $\varepsilon$, this implies $p_q = p'_q$ for all operations $q$ that belong to chunk $c$. Hence, we also get $k_c = k_c'$ for all chunks $c$ that are partial at $\ts$ in the original instance, where $k_c'$ denotes the class of $c$ in the reduced instance.

Next, note that if the processing time of operation $q$ in a chunk $c$ is reduced, then the processing time 
of the first operation $q_1$ in the chunk $c$ also has to be reduced as $q_1$ has maximum class in $c$. After the reduction, $q_1$ still has maximum class in $c$.
By reducing the processing time of the first operation 
$q_1$ in a chunk $c$, we also affect the class of $c$
which is determined by the first operation in $c$. In particular, the class of $c \in J(\ts)$ in the reduced instance denoted by $k'_c$ is at most $\max_{t \in [r_c,\ts]} k(t)$. 

With the next lemma, we formally prove that the algorithm behaves the same on the reduced instance until time $\ts$ as on the original instance.

\begin{lemma}
    \label{lem:reduced:equiv}
    Let $\bapart(t)$, $\bA(t)$, $\bAfull(t)$ denote the sets $\apart(t)$, $J(t)$, and $\afull(t)$ of chunks in the reduced instance. Furthermore, let $p'_c(t)$ denote the remaining processing time of chunk $c$ at time $t$ in the algorithm's schedule in the reduced instance.
    At any time $t \le \ts$, we have $\apart(t) = \bapart(t)$, $J(t) = \bA(t)$ and $\afull(t) = \bAfull(t)$, as well as $p_c(t) = p'_c(t)$ for all $c \in \apart(t) = \bapart(t)$. In particular, this implies $|J(\ts)| = |\bA(\ts)|$.
\end{lemma}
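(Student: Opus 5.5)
We argue by induction over the integer times $t = 0,1,\ldots,\ts$ that the algorithm's state on the reduced instance coincides with its state on $\mathcal{I}$. The state consists of the stack $\apart$ (with its order), the queue $\afull$ (with the current classes $\curclass_j$ as sort keys), the set of active chunks, and the remaining processing times of partial chunks. At $t=0$ both runs are empty, and release dates are identical in the two instances, so the same jobs arrive at the same times. Throughout, we use that the reduced processing time $p'_q$ is only modified for operations of chunks $c \in J(\ts)$, and only by capping at $\max_{t' \in [r_c,\ts]} 2^{k(t')+1}-\varepsilon$.

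\textbf{Step~1.} Suppose the states agree at time $t<\ts$; we verify that the three steps of the algorithm act identically.

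For Step~1 of the algorithm, the loop condition depends only on $|\afull|$, $|J(t)|$, and the classes of $\frontjob$ and $\topjob$. Since $\topjob$ is a partial chunk (or becomes one), its class is unchanged by the observation preceding the lemma: partial chunks at $\ts$ satisfy $\max_{t}k(t)\ge k_c$, hence $p'_q=p_q$.

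For $\frontjob$ the class may decrease, so we must show that the decision is unaffected. Let $c$ be a full chunk in $\afull(t)$ with reduced class $k'_c<k_c$. Then $k'_c \ge \min\{k_c,\max_{t'\in[r_c,\ts]}k(t')\}$, since the cap $2^{k+1}-\varepsilon$ with $k=\max k(t')$ still has class $k$ for small $\varepsilon$. So $k'_c = \max_{t'\ge r_c} k(t') \ge k(t)$ whenever $r_c\le t$. Note that $k(t)$ is exactly the class of the chunk on top of the stack after Step~1 at time $t$. Hence a chunk whose class dropped is still not strictly smaller than $\topjob$'s class at any moment where it was not in the original run. We also need that it does not change its relative order against the chunk that was actually moved. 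For the original run, that chunk had class $\le k(t)\le k'_c$; ties are the delicate point (see below).

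Conversely, a chunk that was moved in the original run is partial at $\ts$, or was completed before $\ts$ (and so lies outside $J(\ts)$ and is unmodified), so its class is unchanged. Thus the same chunks are moved to $\apart$, and the same $\topjob$ is processed in Step~2.

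For Step~3, the processed chunk has unchanged operation lengths, so the completion events and the classes of newly revealed operations coincide. There is one exception: a newly revealed first operation of a new chunk $c\in J(\ts)$ may have a reduced class. Such a chunk is reinserted into $\afull$ in both runs, possibly with a different key, which is harmless by the argument above. Remaining times of partial chunks evolve identically.

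\textbf{Step~2 (main obstacle).} The hardest part is the $\frontjob$ comparison when ties and the tie-breaking rule (by smallest active operation) interact with reduced classes. A full chunk $c$ with $k'_c = k(t)$ could newly tie with the chunk that was moved.

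I would handle this as follows. If $c$ ties with the top's class, it is not \emph{strictly} smaller, so it cannot be pushed onto a stack whose top has class $k(t)$. Pushes only happen when $\frontjob$ is strictly below $\topjob$, and after the push, the class of the new top is at most $k(t)$. I would therefore order the argument around the final top: every chunk pushed at time $t$ in the original run has class $\le k(t)$, strictly below the previous top. For any modified chunk $c$ we have $k'_c\ge k(t)$, so $c$ is never strictly below any intermediate top. A modified chunk can then only precede the pushed chunk in the queue order if it has equal class $k(t)$ and wins the tie. If it did, it would be pushed instead only if strictly smaller than the current top, whose class is greater than $k(t)$. To exclude this case, I would pick $\varepsilon$ so that ties at the cap are broken against modified chunks, or note that the arbitrary tie-breaking may be fixed consistently with the original run.

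Finally, $|J(\ts)|=|\bA(\ts)|$ follows from applying the invariant at $t=\ts$.
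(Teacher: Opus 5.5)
Your overall route is the paper's: induction over $t$, the fact that every chunk pushed or processed before $\ts$ keeps its original operation lengths, and the bound that a full chunk whose class was lowered has reduced class $\max_{t'\in[r_c,\ts]}k(t')\ge k(t)$, which handles the case where the original run pushes nothing. Two steps, however, are not established.

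First, the tie case you single out is left open, and neither remedy works as stated. The first-level tie-break (smallest active operation) is part of the algorithm. It is not an ``arbitrary'' choice you may align with the original run, and ``pick $\varepsilon$'' still needs an actual comparison. The missing argument, which is how the paper closes this case, runs as follows. At most one chunk is pushed per step: after the queue minimum is pushed, the new front has class at least that of the new top. Hence if the original run pushes $c$ at time $t$, then $k(t)=k_c$. Now take a chunk $c'\in\afull(t)$ whose first operation $q'$ was capped. It has $p'_{q'}=2^{K'+1}-\varepsilon$ with $K'=\max_{t'\in[r_{c'},\ts]}k(t')\ge k(t)=k_c$, so it ties with $c$ in class only if $p'_{q'}=2^{k_c+1}-\varepsilon$. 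The first operation $q$ of $c$ is unmodified and has integer length $p_q<2^{k_c+1}$. So $p_q\le 2^{k_c+1}-1<p'_{q'}$ for every $\varepsilon<1$, and the algorithm's own rule strictly prefers $c$. Chunks whose first operation is not capped have identical keys in both runs. Only for those do you need to fix the residual arbitrary tie-breaking identically.

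Second, your Step~3 assertion that a chunk $c\in J(\ts)$ with a capped first operation ``is reinserted into $\afull$ in both runs'' is unjustified, and it can fail. Reinsertion requires the reduced class $\min\{k_c,\max_{t'\in[r_c,\ts]}k(t')\}$ to be at least $\curclass_j+1=k_{c_{\mathrm{prev}}}+1$, where $c_{\mathrm{prev}}$ is the preceding chunk of $j$. The maximum omits the step $[r_c-1,r_c]$ in which $c_{\mathrm{prev}}$ ran, so it can be at most $k_{c_{\mathrm{prev}}}$.

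Here is an example. Release $j$ with operation lengths $2,8$ at time $0$, release a job with a single operation of length $1$ at time $2$, and take $\ts=2$. The original run reinserts $j$ at time $2$ with class $3$, then pushes and processes the new job, so $k(2)=0$. This gives $p'_{j_2}=2-\varepsilon$, of class $0<2$. The reduced run therefore keeps $j$ in $\bapart$, so $\apart(2)\neq\bapart(2)$; in the reduced instance the two chunks of $j$ merge.

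The paper's proof does not handle this either. It compares only the Step~1 decisions and tacitly assumes the reduced instance has the same chunks. So this is a real hole in the lemma as stated, not a detail you can simply assert. Closing it requires changing the reduction or weakening the invariant, and then rechecking the reduced-instance assumption.
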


\begin{proof}
    We show the statement via induction over $0 \le t \le \ts$. For $t = 0$, the statement holds trivially. 

    Assume the statement holds for time $t-1$. We show that it then also holds for time $t$ by distinguishing between the following two cases:
    \begin{enumerate}
        \item The algorithm does not move a chunk to $\apart(t-1)$ at time $t-1$
        in the original instance.
        \item The algorithm moves a chunk to $\apart(t-1)$ at time $t-1$ in the original instance. 
    \end{enumerate}

    \textbf{Case 1:} 
    We first argue that in this case the algorithm also does not move a chunk to $\bapart(t-1)$ at time $t-1$ in the reduced instance.

    The fact that the algorithm does not move a chunk in the original instance implies $|\afull(t-1)|<\frac{|J(t-1)|}{4}$ or that $\min_{c \in \afull(t-1)} k_c \ge \min_{c \in \apart(t-1)} k_c$. In the former case, the induction hypothesis implies $|\bAfull(t-1)|<\frac{|\bA(t-1)|}{4}$ which means that the algorithm also does not move a chunk in the reduced instance.  

    In the latter case, we can first observe that $\min_{c \in \apart(t-1)} k_c = \min_{c \in \bapart(t-1)} k'_c$ since, as mentioned before, the processing times of operations in partial chunks are never reduced. 
    Then, using notation  $k'_{\min} = \min_{c \in \bapart(t-1)} k'_c$, we observe that for any $c\in A(t-1)$, $\max_{t' \in [r_c,\ts]} 2^{k(t')+1}-\varepsilon \ge \max_{t' \in [t-1,\ts]} 2^{k(t')+1}-\varepsilon \ge 2^{k'_{\min}+1} -\varepsilon$, where the last inequality follows $\min_{c \in \apart(t-1)} k_c = \min_{c \in \bapart(t-1)} k'_c$  and from the fact that the algorithm processes some chunk in $\apart(t-1)$ during $[t-1,t]$ in the original instance. In particular this implies $k_c' \ge k'_{\min}$ for all $c \in \afull(t-1)$.
    Since we have $\afull(t-1) = \bAfull(t-1)$ by induction, this also implies $k'_c \ge k'_{\min}$ for all $c \in \bAfull(t-1)$. Hence, the algorithm will not move a chunk to $\bapart(t-1)$ at time $t-1$ in the reduced instance.

    Using that the algorithm does not move a chunk at time $t-1$ in either instance and that the classes of partial chunks are never reduced, we can conclude that the algorithm works on the same chunk during $[t-1,t]$. Hence, $\apart(t) = \bapart(t)$ follows from $\apart(t-1) = \bapart(t-1)$. Using that $p_c(t-1) = p'_c(t-1)$ for all $c \in \apart(t-1) = \bapart(t-1)$ holds by induction hypothesis, we also get  $p_c(t) = p'_c(t)$ for all $c \in \apart(t) = \bapart(t)$. Finally, $\bA(t-1) = J(t-1)$ together with $\apart(t) = \bapart(t)$ also implies $J(t) = \bA(t)$ and $\afull(t) = \bAfull(t)$.

    \textbf{Case 2:} Assume that the algorithm moves some chunk $c$ from to $\apart(t-1)$ at time $t-1$ in the original instance. We argue that the same chunk is moved to $\bapart(t-1)$ at $t-1$ in the reduced instance. Since $c$ is moved in the original instance, we have $|\afull(t-1)|\ge\frac{|J(t-1)|}{4}$ which implies $|\bAfull(t-1)|\ge\frac{|\bA(t-1)|}{4}$ by the induction hypothesis. 

    Next, we can observe that $c$ being moved in the original instance implies $k_c < \min_{c' \in \apart(t-1)} k_{c'}$. Since chunks that are partial at some point during $[0,\ts]$ are never affected by the processing time reduction, this implies  $k'_c < \min_{c' \in \apart(t-1)} k'_{c'} = \min_{c' \in \bapart(t-1)} k'_{c'}$, where the equality uses that $\apart(t-1) = \bapart(t-1)$ by induction hypothesis. Note that this already proves that the algorithm moves \emph{some} chunk to $\bapart(t-1)$ in the reduced instance, but it remains to show that this chunk is $c$. 
    Consider some chunk $c' \in \bAfull(t-1)\setminus\{c\}$. If $p_{c'} = p'_{c'}$, then the chunk $c'$ was not affected by the processing time reduction. Since the algorithm, in both instances, uses the same order over $\afull(t-1)=\bAfull(t-1)$ to decide which chunk to move, it has to select $c$ over $c'$ in both instances\footnote{This assumes that the algorithm uses the same tie-breaking rule for chunks that have the same class and the same first operation size on both instances.}. 
    
    If $c'$ was affected by the processing time reduction, then the fact that the algorithm processes $c$ during $[t-1,t]$ in the original instance implies that 
    \begin{enumerate}
        \item the class $k'_{c'}$ of $c'$ in the reduced instance is at least $k_c = k'_c$ and
        \item By the reduction rule $p_{q'} \ge 2^{k'_c+1}-\varepsilon > p_q$
        where $q'$ and $q$ are the first operations of the chunks $c'$ and $c$, respectively. Note that the latter inequality holds for a sufficiently small $\varepsilon$.
    \end{enumerate}
    These two facts imply that the algorithm will also move $c$ over $c'$ in the reduced instance: The first fact implies that $c'$ cannot have a smaller class than $c$ in the reduced instance, and the second fact implies that $c'$ loses the tiebreaker against $c$ in case both chunks have the same class in the reduced instance.
    
    Using that the algorithm moves the same chunk at time $t-1$ in both instances, we can conclude that the algorithm works on the same chunk during $[t-1,t]$ in both instances. Hence, $\apart(t) = \bapart(t)$ follows from $\apart(t-1) = \bapart(t-1)$. Using that $p_c(t-1) = p'_c(t-1)$ for all $c \in \apart(t-1) = \bapart(t-1)$ holds by induction hypothesis, we also get  $p_c(t) = p'_c(t)$ for all $c \in \apart(t) = \bapart(t)$. Finally, $\bA(t-1) = J(t-1)$ together with $\apart(t) = \bapart(t)$ also implies  $J(t) = \bA(t)$ and $\afull(t) = \bAfull(t)$.
\end{proof}

\end{document}